\definecolor{custom-pink}{rgb}{.96,0.67,0.67}
\definecolor{custom-orange}{rgb}{.95,0.79,0.30}
\definecolor{custom-blue}{rgb}{.59,.75,.96}
\begin{document}

\title{{\huge{Low rank MSO}}\thanks{%
The research of Mi.\ Pilipczuk, W. Przybyszewski, and G. Stamoulis has been supported by the project BOBR that has received funding from the European Research Council (ERC) under the European Union's Horizon 2020 research and innovation programme, grant agreement No. 948057. In particular, a majority of work on this manuscript was done while G. Stamoulis was affiliated with University of Warsaw.}}

\author{
Miko{\l}aj Boja\'nczyk\\\small{University of Warsaw}\\\small{\href{bojan@mimuw.edu.pl}{bojan@mimuw.edu.pl}}
\and
Micha{\l} Pilipczuk\\\small{University of Warsaw}\\\small{\href{michal.pilipczuk@mimuw.edu.pl}{michal.pilipczuk@mimuw.edu.pl}}
\and
Wojciech Przybyszewski\\\small{University of Warsaw}\\\small{\href{przybyszewski@mimuw.edu.pl}{przybyszewski@mimuw.edu.pl}}
\and
Marek Soko\l{}owski\\\small{Max Planck Institute for Informatics, Saarbr\"ucken}\\\small{\href{msokolow@mpi-inf.mpg.de}{msokolow@mpi-inf.mpg.de}}
\and
Giannos Stamoulis\\\small{IRIF, Université Paris Cité, CNRS, Paris}\\\small{\href{giannos.stamoulis@irif.fr}{giannos.stamoulis@irif.fr}}
}

\date{}

\maketitle 
\begin{abstract}
    We introduce a new logic for describing properties of graphs, which we call \emph{low rank \mso}. This is the fragment of monadic second-order logic in which set quantification is restricted to vertex sets of bounded cutrank. We prove the following statements about the expressive power of low rank \mso.
    \begin{itemize}[nosep]
     \item Over any class of graphs that is weakly sparse, low rank \mso has the same expressive power as separator logic. This equivalence does not hold over all graphs.
     \item Over any class of graphs that has bounded VC dimension, low rank \mso has the same expressive power as flip-connectivity logic. This equivalence does not hold over all graphs.
     \item Over all graphs, low rank \mso has the same expressive power as flip-reachability logic.
    \end{itemize}
    Here, separator logic is an extension of first-order logic by basic predicates for checking connectivity, which was proposed by Boja\'nczyk~[ArXiv 2107.13953] and by Schirrmacher, Siebertz, and Vigny [ACM ToCL 2023]. Flip-connectivity logic and flip-reachability logic are analogues of separator logic suited for non-sparse graphs, which we propose in this work. In particular, the last statement above implies that every property of undirected graphs expressible in low rank \mso can be decided in polynomial~time.
\end{abstract}

\thispagestyle{empty}

\begin{textblock}{20}(-2.1,3.8)
 \includegraphics[width=60px]{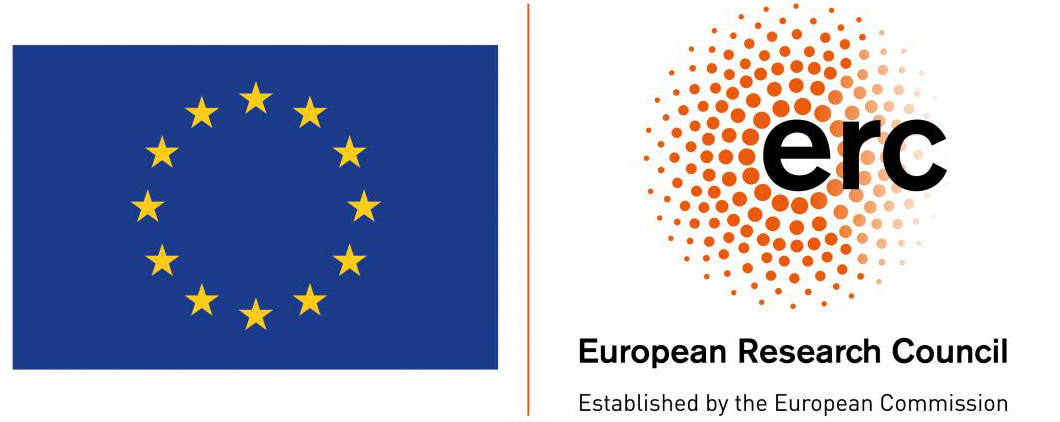}
\end{textblock}
\newpage

\pagenumbering{arabic}

\clearpage
\setcounter{page}{1}

\section{Introduction}\label{sec:intro}

Much of the contemporary work on logic on graphs revolves around the first-order logic \fo and variants of the monadic second-order logic \mso. The general understanding is that \mso behaves well, in terms of computational aspects, on graphs that have a tree-like structure~\cite{courcelleMonadicSecondorderLogic1990,courcelle2000linear}.
The tameness of \fo is the subject of an ongoing line of work and is expected to be connected with the non-existence of complicated local structures~\cite{DreierEMMPT24,DreierMST23,DreierMT24,flippergame,Torunczyk23}. A general overview of this area can be found in the recent survey of Pilipczuk~\cite{Pilipczuk25}.

Naturally, there are interesting concepts of logic on graphs besides \fo, \mso, and their variants. One such logic was proposed independently by Boja\'nczyk~\cite{separator-logic2021} (under the name {\em{separator logic}}) and by Schirrmacher, Siebertz, and Vigny~\cite{schirrmacher2023first} (under the name \foconn). The idea is to extend \fo by a simple mechanism for expressing connectivity: we allow the usage of predicates $\conn_k(s,t,a_1,\ldots,a_k)$, for every $k\in \N$, that verify the existence of a path connecting vertices $s$ and $t$ that avoids vertices $a_1,\ldots,a_k$. Thus, the expressive power of separator logic lies strictly between \fo and \mso: the logic is not bound only to local properties, like \fo, but full set quantification is not available. Boja\'nczyk~\cite{separator-logic2021} argued that on classes of graphs of bounded pathwidth, separator logic characterizes a suitable analogue of star-free languages.
Pilipczuk, Schirrmacher, Siebertz, Toru\'nczyk, and Vigny~\cite{separatorModelChecking} showed that computational tameness of separator logic (precisely, fixed-parameter tractability of the model-checking problem) coincides on subgraph-closed classes with a natural dividing line in structural graph theory: exclusion of a fixed topological~minor.

The drawback of separator logic is that the connectivity predicates are designed with separators consisting of a bounded number of vertices in mind. Such separators are a fundamental concept in the structural theory of sparse graphs, but become ill-suited for the treatment of dense graphs. In contrast, both \fo and \mso can be naturally deployed on graphs regardless of their sparsity, and much of the recent developments around these logics concern understanding their expressive power and computational aspects on classes of well-structured dense graphs. What would be then a dense analogue of separator logic?

\vspace{-0.1cm}
\paragraph*{Low rank \mso.} We introduce a new logic on graphs called {\em{low rank \mso}} with expressive power lying strictly between \fo and \mso. We argue that low rank \mso is a natural dense analogue of separator logic. The main idea is the following: as separator logic allows quantification over small separators as understood in the theory of sparse graphs, its dense analogue should allow quantification over small separators as understood in the theory of dense graphs. The latter concept is delivered by the notion of~{\em{cutrank}}.

Let $G$ be a graph with vertex set $V$ and $X$ be a subset of $V$; denote $\wh X\coloneqq V\setminus X$ for brevity. The idea is to measure the complexity of the cut $(X,\wh X)$ by examining the adjacency matrix $\Adj_G[X,\wh X]$: the $\{0,1\}$-matrix with rows indexed by $X$ and columns indexed by $\wh X$, where the entry at the intersection of the row of $u\in X$ and the column of $v\in \wh X$ is $1$ if $u$ and $v$ are adjacent, and $0$ otherwise. The {\em{cutrank}} of~$X$, denoted $\rk(X)$, is the rank of $\Adj_G[X,\wh X]$ over the binary field $\F_2$. Cutrank was introduced by Oum and Seymour in their work on {\em{rankwidth}}~\cite{oum2005rank,OumS06}. This graph parameter, together with the functionally equivalent {\em{cliquewidth}}~\cite{courcelle2000linear} and {\em{NLC-width}}~\cite{Wanke94}, is understood as a suitable notion of tree-likeness for dense graphs. In particular, in graph theory there is an ongoing work on constructing a theory of {\em{vertex-minors}}, which is expected to be a dense counterpart of the theory of graph minors. In this theory, rankwidth is the analogue of treewidth, and cuts of low cutrank play the same role as vertex separators of bounded size. See the survey of Kim and Oum~\cite{KimO24} and the PhD thesis of McCarty~\cite{RoseThesis} for an introduction to this area.

With the motivation behind cutrank understood, the definition of low rank \mso becomes natural: it is the fragment of \mso on (undirected) graphs, where every use of the set quantifier must be accompanied by an explicit bound on the cutrank of the quantified set. The syntax uses the following~constructors:
\begin{align*}
\myunderbrace{
    \forall x \quad \exists x
}{quantification \\
over vertices}
\qquad\qquad
\myunderbrace{
    \forall X \colon r \quad \exists X \colon r
}{quantification over subsets of \\
vertices  with cutrank at most $r$}
\qquad\qquad
\myunderbrace{
    \land \quad \lor \quad \neg
}{Boolean \\ combinations}
\qquad\qquad
\myunderbrace{
    E(x,y)
}{edge relation}
\qquad\qquad
\myunderbrace{
    x \in X
}{set membership}
\end{align*}
The semantics are defined in the expected way, as a fragment of \mso with quantification over sets of vertices. (This variant of \mso is often called $\textsc{mso}_1$ in the literature, as opposed to $\textsc{mso}_2$, where quantification over edge sets is also allowed. We do not consider $\textsc{mso}_2$ here.)

The goal of this work is to robustly introduce low rank \mso, study its expressive power, and pose multiple questions about its various aspects. While we are mostly concerned about the setting of undirected graphs\footnote{Concretely, in all our results we consider the setting of vertex-colored undirected graphs, where the vertex set may be additionally equipped with a number of unary predicates.}, low rank \mso can be naturally defined on any structures where a meaningful notion of rank can be considered. We expand on this in \cref{sec:conclusions}.

\paragraph*{Our results.} It is not hard to see that predicates $\conn_k$ can be expressed in low rank \mso, hence low rank \mso is at least as expressive as separator logic (see \cref{lem:easy-comparison}). We first prove that on classes of sparse graphs, the expressive power of the two logics actually coincide. Here, we say that a class of graphs $\Cc$ is {\em{weakly sparse}} if there is $t\in \N$ such that all graphs from $\Cc$ do not contain the biclique $K_{t,t}$ as a subgraph.

\begin{theorem}
    \label{thm:main-separator}
    Let $\Cc$ be a weakly sparse class of graphs. Then for every formula of low rank \mso there exists a formula of separator logic such that the two formulas are equivalent on all graphs from $\Cc$.
\end{theorem}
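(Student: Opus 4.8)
The plan is to proceed by induction on the structure of the low rank \mso{} formula, translating it to an equivalent separator logic formula over $\Cc$, where free set variables are carried along as fresh unary predicates (so we work with vertex‑coloured graphs; this does not affect weak sparsity). All constructors other than the set quantifier already lie in separator logic, so the only case to treat is $\exists X\colon r\ \psi(X,\bar Y,\bar x)$, where inductively $\psi$ --- with $X$ read as a colour $U_X$ --- is equivalent over $\Cc$ to a separator logic formula $\psi'$ (the universal set quantifier is dual). For the induction hypothesis to apply to $\psi$ one uses that adding a colour to graphs of $\Cc$ keeps the class weakly sparse.

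The key structural fact is: \emph{if $G$ has no $K_{t,t}$ subgraph and $\rk(X)\le r$, then there is $N\subseteq V(G)$ with $|N|\le 2^{r+1}t$ such that $X\setminus N$ is a union of connected components of $G-N$.} To see this, partition $X$ and $V(G)\setminus X$ into the at most $2^r$ classes each of vertices with equal row, respectively column, in $\Adj_G[X,V(G)\setminus X]$ (a rank‑$\le r$ $\F_2$‑matrix has $\le 2^r$ distinct rows and columns). The adjacency between any two classes is constant, so a pair of fully adjacent classes induces a complete bipartite subgraph, whence by weak sparsity one of them has fewer than $t$ vertices; letting $N$ be the union of all classes of size $<t$ on both sides, there is no edge between $X\setminus N$ and $V(G)\setminus(X\cup N)$, which is the claim. (Conversely, any such $X$ satisfies $\rk(X)\le|N|$, so the description is tight; this direction is not needed below.)

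Next, for fixed $r$ the property ``$\rk(X)\le r$'' is first‑order definable using the colour $X$: over $\F_2$, the rank is the largest size of a nonsingular square submatrix, so $\rk(X)\ge r+1$ iff there are distinct $x_1,\dots,x_{r+1}\in X$ and distinct $y_1,\dots,y_{r+1}\notin X$ making the fixed‑size $\{0,1\}$‑matrix $(E(x_i,y_j))_{i,j}$ nonsingular --- a fixed Boolean condition on a fixed number of bits. Setting $\psi''\coloneqq(\text{``}\rk(U_X)\le r\text{''})\wedge\psi'$, and using the structural fact to guess $N$ (of bounded size $2^{r+1}t$) by first‑order quantification, $\exists X\colon r\ \psi$ becomes equivalent over $\Cc$ to: \emph{there are vertices $\bar a$, a subset $N'\subseteq\{\bar a\}$, and a choice $\chi(C)\in\{0,1\}$ for each connected component $C$ of $G-\{\bar a\}$, such that $X\coloneqq N'\cup\bigcup\{C:\chi(C)=1\}$ satisfies $\psi''$}; here the conjunct ``$\rk(U_X)\le r$'' inside $\psi''$ is exactly what ensures that the reconstructed $X$ is a legitimate witness, reconciling the bounds $2^{r+1}t$ and $r$. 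Quantifying over $\bar a$ and $N'$ is first‑order, so only the existential choice of the component assignment $\chi$ remains.

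For this, invoke a Feferman--Vaught–style composition theorem for separator logic over sums along a bounded shared part: $G$ is determined by its induced subgraph on $\{\bar a\}$ and, for each component $C$ of $G-\{\bar a\}$, the subgraph $G[C]$ enriched by one unary predicate per $a\in\{\bar a\}$ recording adjacency to $a$; and the separator‑logic type of $G$ with the (definable) colours $U_X,\bar U_Y,\bar x$, up to the quantifier rank of $\psi''$, is computable from $\{\bar a\}$ and the multiset of separator‑logic types of these pieces, only threshold‑truncated counts mattering. ($\conn_k$ composes because connectivity in $G-A$ is recoverable from the connectivity relations each piece induces on $\{\bar a\}$, and separator logic relativizes to $G[C]$ because, for $s,t,\bar b\in C$, connectivity in $G[C]-\bar b$ equals $\conn_{k+2^{r+1}t}(s,t,\bar a,\bar b)$ in $G$.) Each component $C$ contributes one of two enriched‑piece types, according to $\chi(C)=0$ (i.e.\ $U_X\cap C=\emptyset$) or $\chi(C)=1$ (i.e.\ $U_X\cap C=C$), so ``$\exists\chi$ making $\psi''$ true'' becomes a finite disjunction, over those threshold‑count vectors that yield an accepting type, of statements ``at least $j$ components of $G-\{\bar a\}$ avoiding all $x_\ell$ have their enriched piece realizing a given pair $(\sigma^0,\sigma^1)$ of types''; and each such statement is in separator logic, since ``the enriched piece of $v$ realizes $\sigma^0$'' is the relativization to $G[C_v]$ of the separator‑logic type‑defining formula of $\sigma^0$, and ``at least $j$ such $v$ lie in pairwise distinct components of $G-\{\bar a\}$'' is expressible with $\conn$ predicates and element quantifiers. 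This closes the induction. I expect the main obstacle to be the structural fact together with making the reduction exact --- in particular the role of the auxiliary predicate ``$\rk(U_X)\le r$'', without which the separator size $2^{r+1}t$ and the cutrank bound $r$ would not match up --- and verifying that the composition and relativization machinery really goes through for separator logic: routine in spirit, but with the $\conn_k$ predicates the one delicate point.
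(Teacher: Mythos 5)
Your proposal is correct in outline and takes a genuinely different route from the paper's. Both proofs start from the same structural fact (your $N$ is the paper's separator $L\cap R$ from Lemma~\ref{lem:characterization-separator}, obtained by the identical frequent-row/column argument), and both use the observation that ``$\rk(X)\le r$'' is \fo-definable (Lemma~\ref{lem:inter-def}); from there the mechanisms diverge. The paper wraps everything in a \emph{low rank definability property} (Definition~\ref{def:low-rank-definability} and Theorem~\ref{thm:low-rank-quantifier-elimination}): rather than directly computing whether some choice of components yields a satisfying $X$, it proves that \emph{some} satisfying $A$ of rank~$0$ (after passing to the $S$-operation) can be replaced by a satisfying $A'$ that is \emph{definable} from a bounded tuple of parameters, via an idempotence/Ehrenfeucht--Fra\"iss\'e argument (Claim~\ref{cl:EF}): if $A$ contains more than $p$ components of a given separator-logic type and also omits more than $p$, one such component can be removed without affecting $\varphi$, so a canonical $A'$ can be pinned down by picking at most $p$ representative components per type. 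You instead directly translate $\exists X\colon r\ \psi$ by enumerating threshold-truncated multisets of piece-types via a Feferman--Vaught composition theorem for separator logic over sums along a bounded shared part. That composition is the one genuinely delicate step you do not establish: it is plausible (one reduces $\conn_k$ in $G$ to the transitive closure of the boundedly many connectivity patterns on $\bar a$ induced by the pieces, which is threshold-1 information, and then applies ordinary \fo composition with threshold counting), but it needs the extra bookkeeping of how the removed vertices $\bar b$ and free vertex variables distribute across pieces, and it is strictly stronger than what the paper actually uses. The paper's framing buys modularity (the same Theorem~\ref{thm:low-rank-quantifier-elimination} serves all three target logics) and dodges a full composition theorem; your approach is more self-contained in not introducing the $S$-operation machinery but requires verifying the composition theorem for $\conn_k$, which you correctly flag as the main obstacle. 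One small point worth being explicit about in your write-up: the auxiliary conjunct ``$\rk(U_X)\le r$'' must be inserted \emph{before} applying the induction hypothesis (the paper does this in the proof of Theorem~\ref{thm:low-rank-quantifier-elimination} itself), so that it is already part of the formula whose type you decompose; your phrasing ``Setting $\psi''\coloneqq\ldots$'' is consistent with this, but the order matters.
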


\cref{thm:main-separator} corroborates the claim that low rank \mso is a dense analogue of separator logic. There is, however, an important difference. Low rank \mso is defined as a fragment of \mso, and not as an extension of \fo, like the separator logic. This has a particular impact on the computational aspects. For instance, observe that for every fixed sentence $\varphi$ of separator logic, whether $\varphi$ holds in a given a graph $G$ can be decided in polynomial time. Indeed, it suffices to apply a brute-force recursive algorithm that, for every next vertex variable quantified, examines all possible evaluations of this variable in the graph; and in the leaves of the recursion, adjacency and connectivity predicates can be checked in polynomial time. This approach does not work for low rank \mso, for the number of sets of bounded rank in a graph can be~exponential.

To bridge this gap, it would be useful to find a logic closer in spirit to \fo that would be equivalent to low rank~\mso. For this, we use the concept of {\em{definable flips}}, which has played an important role in the recent advances in understanding \fo on dense graphs, see~\cite{incremental-lemma,DreierMST23,DreierMT24,flippergame,Torunczyk23}. Formal definitions are given in \cref{sec:logics}, but in a nutshell, a {\em{definable flip}} of a graph $G$ is any graph that can be obtained from $G$ as~follows:
\begin{itemize}[nosep]
 \item Select a tuple $\tup a$ of vertices of $G$. These are the {\em{parameters}} of the flip.
 \item Classify the vertices of $G$ according to their adjacency to the vertices of $\tup a$. The vertices of $\tup a$ belong to their own singleton classes.
 \item For every pair of classes $K$ and $L$, either leave the adjacency relation between $K$ and $L$ intact, or {\em{flip}} it: exchange all edges with non-edges, and vice versa. This can be applied also for $K=L$, which amounts to complementing the adjacency relation within $K$.
\end{itemize}
Note that a single flip applied in the last point can remove a large biclique, so a definable flip of a dense graph may be much sparser. In the aforementioned works~\cite{incremental-lemma,DreierMST23,DreierMT24,flippergame,Torunczyk23}, definable flips have been identified as a useful notion of separations in dense graphs that can be described succinctly, by specifying a constant number of parameters and a constant-size piece of information on which pairs of classes should be~flipped.

This leads to the following candidate for an extension of \fo that could be equivalent to low rank~\mso. We define {\em{flip-connectivity logic}} as the extension of \fo on undirected graphs obtained by additionally allowing the usage of predicates $\flipconn_{k, A}(s, t, a_1, \ldots, a_k)$. Such a predicate verifies whether $s$ and $t$ are in the same connected component of the flip of $G$ defined by the tuple $\tup a=(a_1,\ldots,a_k)$ and the relation $A$ specifying pairs of classes to be flipped. Again, it is not hard to prove that these predicates can be expressed in low rank \mso, implying that low rank \mso is at least as expressive as flip-connectivity logic (see \cref{lem:easy-comparison}). Somewhat surprisingly, we show that the expressive power of low rank \mso is in fact strictly larger on all graphs, but the two logics are equivalent on every class of graphs with bounded Vapnik--Chervonenkis (VC) dimension (see \cref{ssec:vcdim} for the definition).

\begin{restatable}{theorem}{mainFconnNegative}\label{thm:main-fconn-negative}
    There is a sentence of low rank \mso that cannot be expressed in flip-connectivity logic.
\end{restatable}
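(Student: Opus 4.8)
The plan is to pin down a single sentence $\varphi$ of low rank \mso together with two families of graphs $(G_n)_{n\in\N}$ and $(H_n)_{n\in\N}$ such that $G_n\models\varphi$ and $H_n\not\models\varphi$ for all $n$, and then to show that no sentence of flip-connectivity logic can tell $G_n$ apart from $H_n$ once $n$ is large enough. Since this paper establishes that low rank \mso and flip-connectivity logic coincide on every class of bounded VC dimension, the families $(G_n)$ and $(H_n)$ are forced to have unbounded VC dimension, so I would build them from the canonical witness of that phenomenon: a set $U_n$ of ``universe'' vertices together with, for every subset $T\subseteq U_n$, a small gadget $B_T$ (say, a short path) all of whose vertices have neighbourhood exactly $T$ inside $U_n$, plus a bounded number of extra distinguished vertices used to encode one global bit that will make $\varphi$ true in $G_n$ and false in $H_n$. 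The sentence $\varphi$ will have the shape ``there is a set $X$ with $\rk(X)\le r$ such that $\vartheta(X)$'' for a fixed constant $r$ and a first-order formula $\vartheta$, and the design effort goes into choosing $\vartheta$ and the encoding so that this existential genuinely transcends flip-connectivity logic. The mechanism is that a cut of cutrank at most $r$ only forces at most $2^r$ distinct neighbourhood types, subject to a linear dependency among their characteristic vectors; it does \emph{not} force these types to be definable by adjacency to a bounded tuple of vertices, and the gadgets $B_T$ let the quantified set $X$ be unboundedly large while keeping its cutrank bounded (for instance by taking $X$ to be the union of the $B_T$ over a low-dimensional affine family of subsets). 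This is precisely the slack that collapses under bounded VC dimension: flip-connectivity logic can only ``point at'' bounded tuples of vertices, whereas low rank \mso can quantify over a cut whose simplicity is linear-algebraic rather than few-vertex-definable, and cannot be reconstructed by first-order quantification.

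The core of the proof is the inexpressibility of $\varphi$ in flip-connectivity logic, which I would establish by an Ehrenfeucht--Fra\"iss\'e argument tailored to that logic. Fix a hypothetical equivalent sentence $\psi$ of quantifier rank $q$. All that $\psi$ can observe about a graph is the first-order type of the at most $q$ pebbled vertices, together with, for each designation of $k\le q$ of them as flip parameters and each of the finitely many admissible flip relations $A$ on the at most $2^k$ adjacency classes these parameters induce, the partition of the pebbles into connected components of the corresponding definable flip. I would show that, for $n$ large, $G_n$ and $H_n$ agree on all of this through $q$ rounds. The first-order part follows from the symmetry of the construction: its automorphism group acts richly on $U_n$ and on the gadget families. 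The flip-connectivity part rests on a structural lemma stating that every definable flip of $G_n$ (and of $H_n$) with at most $q$ parameters has a connected-component structure that is (i) identical in the two families and (ii) essentially uniform on the vertices outside the parameters and their neighbourhoods. Morally, the $\le 2^q$ adjacency classes induced by at most $q$ parameters are far too coarse to isolate the roughly $\log_2 n$ bits needed to single out the gadget(s) carrying the encoded bit, so no flip between these classes can split off, or merge in a family-dependent way, the part of the graph that distinguishes $G_n$ from $H_n$; one wants the ``core'' membership structure on $U_n$ to remain, after any bounded-parameter flip, connected (or with a fixed family-independent component structure), with the distinguishing bit attached invisibly.

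The main obstacle is exactly this structural lemma. A single flip is a genuinely powerful operation on a dense graph --- it can collapse an arbitrarily large biclique into an almost edgeless bipartite graph, and conversely it can fuse two far-apart pieces into one component --- so, unlike the deletion of a bounded separator in separator logic, controlling the connectivity of \emph{all} bounded-parameter flips of a high-VC-dimension construction is delicate, and the construction must be engineered so that its core is ``flip-robust'' while the encoded bit is attached in a flip-invisible manner. Propagating this through the bookkeeping of the game --- parameters chosen adaptively, interleaved with ordinary vertex moves, and a flip relation $A$ that may change from round to round --- is where the technical weight lies. Should the direct analysis become unwieldy, an alternative route is to first prove the equivalence of low rank \mso with flip-reachability logic (one of the announced results) and then separate flip-reachability from flip-connectivity: the reachability variant carries transitive-closure-like strength, so a directed or iterated refinement of the above construction, in which the encoded information is stored in a reachability pattern rather than a connectivity pattern, should be distinguishable by flip-reachability but provably not by flip-connectivity, again by an Ehrenfeucht--Fra\"iss\'e argument, where now one only has to handle the strictly simpler undirected-connectivity predicates.
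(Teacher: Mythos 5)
Your high-level plan matches the paper's: find a sentence of low rank \mso of the form ``$\exists X \colon r\;\vartheta(X)$'' and two families of graphs distinguished by it, then use an Ehrenfeucht--Fra\"iss\'e argument for flip-connectivity logic whose crux is a structural lemma saying that every bounded-parameter flip of the construction has a family-independent, essentially uniform component structure. You also correctly observe that the families must have unbounded VC dimension and that the game reduction needs flip-connectivity predicates to collapse to first-order once the structural lemma is in hand. Where your proposal stops short of a proof is exactly where you flag the ``main obstacle'': the construction is not pinned down, and the structural lemma --- the part that actually carries the theorem --- is left as an aspiration rather than an argument.

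Two concrete points of divergence from the paper are worth naming, because they are precisely what makes the paper's structural lemma go through. First, in the paper the two families $G_n$ and $G_n'$ have \emph{identical} underlying uncolored graphs; they differ only in the placement of four unary predicates $A,B,C,D$ on the top layers of a four-branch construction. This is a decisive simplification you do not have: since every definable flip is determined by parameters, their atomic types, and the flip relation, and those atomic types depend only on the uncolored graph plus predicate membership of parameters, the same flip-analysis applies verbatim to both $G_n$ and $G_n'$, and the remaining distinguishing task becomes a plain first-order EF game on two colorings of one graph. Your plan instead encodes the ``bit'' with extra distinguished vertices, which changes the graph itself; you would then have to argue that those extra vertices are flip-invisible, which is an additional and nontrivial burden. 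Second, the paper does not use a single ``universe plus one gadget layer per subset''; it uses a \emph{tower} $F_n$ of $n+1$ expansion layers (each layer is the family of $5n$-subsets of the previous one), with four copies glued through two merged layer-$0$ sets joined by a biclique. The tower is doing real work: (i) in the structural lemma, each block $V_i^L\setminus S$ is shown connected via shared witnesses in the next block $V_{i+1}^L\setminus S$, so deleting or flipping around a bounded $S$ cannot shatter the bulk; (ii) the chain of blocks $V_n^C,\ldots,V_1^C,U_1,V_1^A,\ldots,V_n^A$ is what makes the rank-$1$ separation of $A$ from $C$ provably impossible in $G_n'$; (iii) the height $n$ pushes the FO-indistinguishability threshold to $\Omega(\log n)$ quantifier rank. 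A one-layer ``all subsets of $U_n$'' construction gives unbounded VC dimension but does not obviously supply any of these three features, and your proposal does not say how to recover them.

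Finally, your fallback route --- first prove equivalence with flip-reachability, then separate flip-reachability from flip-connectivity --- is not the path the paper takes, and it is unlikely to be easier: separating those two logics is morally the same task (an EF argument over a dense construction where one must control all bounded-parameter flips), and you would still owe a structural lemma of the same kind. So the gap is genuine but localized: you have identified the right shape of proof, but the construction and the flip-robustness lemma are the theorem, and they are not yet there.
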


\vspace{-0.3cm}

\begin{restatable}{theorem}{mainFconnPositive}\label{thm:main-fconn-positive}
    Let $\Cc$ be a graph class of bounded VC dimension. Then for every formula of low rank \mso there exists a formula of flip-connectivity logic such that the two formulas are equivalent on all graphs from~$\Cc$.
\end{restatable}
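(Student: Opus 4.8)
The plan is to translate every low rank \mso formula into flip-connectivity logic by induction on the structure of the formula. The only nontrivial case is the low-rank set quantifier $\exists X\colon r$; all other constructors (vertex quantifiers, Boolean combinations, $E(x,y)$, $x\in X$) are shared between the two logics, so the induction goes through verbatim there. For the set quantifier, the essential idea is that, over a class of bounded VC dimension, every vertex set $X$ of cutrank at most $r$ is \emph{definable} from a bounded number of vertex parameters by a bounded-size Boolean combination of adjacency predicates — precisely the kind of description that a flip can encode. Concretely, I would invoke (or reprove) the following structural statement: there is a function $f$ such that if $G$ has VC dimension at most $d$ and $X\subseteq V(G)$ has $\rk(X)\le r$, then there is a tuple $\tup a$ of at most $f(d,r)$ vertices and a formula $\psi(x,\tup a)$ from a fixed finite set (depending only on $d,r$) such that $X=\{v : G\models\psi(v,\tup a)\}$. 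The cutrank bound gives at most $2^r$ atom types on each side of the cut; on the $\wh X$ side these types are ``traces'' cut out by at most $r$ basis vertices of $X$, and symmetrically. Bounded VC dimension is exactly what is needed to turn the low-rank-parametrized family of candidate sets into a family of bounded VC dimension, so that one can pick out $X$ itself by finitely many further parameters via the standard ``sample compression'' / $\varepsilon$-net argument; this is the step where the hypothesis on $\Cc$ is used, and where it must be used, as \cref{thm:main-fconn-negative} shows.

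Granting such a parametrized definability scheme, the simulation of $\exists X\colon r\,.\,\varphi(X,\bar y)$ proceeds as follows. In flip-connectivity logic, existentially quantify the parameter tuple $\tup a$ (of bounded length) and disjoin over the finitely many admissible formulas $\psi$; this names a candidate set $X_{\tup a,\psi}$. One then needs to (i) verify that $X_{\tup a,\psi}$ genuinely has cutrank at most $r$ — this is a first-order-checkable condition once $X$ is defined by a fixed formula, since ``rank $\le r$'' amounts to the non-existence of a $(r+1)\times(r+1)$ submatrix of $\Adj_G[X,\wh X]$ of full rank, which is a finite disjunction of first-order sentences over the defining parameters — and (ii) recursively translate $\varphi$ with the subformula $x\in X$ replaced everywhere by $\psi(x,\tup a)$. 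The connectivity predicates of the inner formula, after this substitution, become connectivity-in-a-flip predicates: a path in $G$ that avoids $X$, or lives inside $X$, is exactly a path in the flip of $G$ that realizes the complementation pattern encoded by $\psi$ — here the key point is that a set definable by a Boolean combination of adjacencies to $\tup a$ is a union of flip-classes of the flip with parameters $\tup a$, so ``restricted to $X$'' and ``restricted to $\wh X$'' are both expressible as connectivity in a single flip (one may need to enlarge $\tup a$ by the relevant $a_i$'s of the inner $\conn$/$\flipconn$ predicate, which is harmless as the arities stay bounded). Assembling these, $\exists X\colon r\,.\,\varphi$ becomes a bounded disjunction, over $\tup a$ and $\psi$, of $\exists \tup a$ followed by a flip-connectivity formula — which is itself a flip-connectivity formula.

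I expect the main obstacle to be Step one of the above outline: proving that low-rank sets over bounded-VC-dimension classes admit uniformly bounded parameter-definable descriptions, \emph{with the parameters being vertices of $G$ itself}. The cutrank bound alone yields a description of $X$ in terms of its intersection pattern with $2^r$ ``type classes'' on the opposite side, but a priori picking a representative of each class costs unboundedly many choices unless one controls the set system $\{N(v)\cap X : v\in\wh X\}$; bounded VC dimension bounds its shatter function, and then a Haussler-type packing / sample-compression argument produces the bounded parameter tuple. Getting the bookkeeping right — in particular ensuring that the same finite set of formulas $\psi$ works for \emph{all} low-rank sets simultaneously, and that the recursion does not blow up the parameter count across nested set quantifiers — is the technical heart of the argument. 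A secondary, more routine obstacle is checking that the rank-$\le r$ side condition and the rewriting of inner connectivity predicates into flip-connectivity predicates can both be done within flip-connectivity logic; this amounts to the observations, essentially already present in the proof of \cref{lem:easy-comparison}, that bounded rank is first-order expressible relative to a definable set and that definable sets are unions of flip-classes.
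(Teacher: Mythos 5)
There is a genuine gap, and it sits exactly where you predicted the "main obstacle" would be, but the obstacle is not merely hard — the structural claim you propose there is false.

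You claim that if $G$ has bounded VC dimension and $X\subseteq V(G)$ has $\rk(X)\le r$, then $X$ itself is definable from a bounded tuple $\tup a$ of vertices by a formula $\psi(x,\tup a)$ taken from a fixed finite family. Take $G$ to be a perfect matching on $2n$ vertices. Its neighborhood set system consists of singletons, so the VC dimension is $1$. A set $X$ has cutrank $0$ if and only if it is a union of matching edges, so there are $2^n$ rank-$0$ sets, while a fixed finite family of formulas $\psi$ over $k$ parameters can define only $O(n^k)$ sets. Hence rank-$r$ sets over a bounded-VC class are \emph{not} parameter-definable in the sense you want, regardless of whether $\psi$ is allowed to use flip-connectivity predicates: with one flip you can see that $X$ must be a union of components, but which union is an arbitrary subset of an unbounded collection, and no Haussler-style packing or sample-compression argument can recover that. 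The compression machinery tells you the trace of $X$ on a bounded sample; in the matching example that trace carries essentially no information (every row on the other side is the all-zero row), yet $X$ ranges over $2^n$ sets.

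The paper circumvents this by never attempting to define the given low-rank set $A$. Instead it proves the \emph{low rank definability property} (Definition \ref{def:low-rank-definability}): given that \emph{some} $A$ of rank $\le r$ satisfies $\varphi$, there exists a possibly different $A'$ that still satisfies $\varphi$ and \emph{is} definable with bounded parameters. The VC hypothesis is used to show (via the duality theorem and \cite[Theorem~3.5]{incremental-lemma}, cf.\ Lemma \ref{lem:low-rank-sets-in-flipconn}) that $A$ is a union of connected components of a suitable bounded-parameter flip $G'$; then an idempotence argument via Ehrenfeucht--Fra\"iss\'e games on flip-connectivity types (Lemma \ref{lem:rank-0-definable}, patterned on Lemma \ref{lem:rank0-separator}) shows that one may discard or add whole "type-blocks" of components until the surviving set $A'$ is described by marking a bounded number of representative components, without changing whether $\varphi$ holds. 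This type-collapsing step is the missing idea in your plan, and it is where the inner formula $\varphi$ enters; your proposal tries to define $X$ independently of $\varphi$, which is impossible. Your secondary observations — that the rank bound is checkable in \fo given a defining formula, and that definable sets are unions of flip-classes so inner connectivity predicates turn into flip-connectivity predicates — are sound and do match the role of Lemmas \ref{lem:inter-def}, \ref{lem:operation-forawrd} and \ref{lem:operation-backward} in the paper, but they cannot compensate for the false definability claim.
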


Finally, we show that flip-connectivity logic can be amended to a stronger logic so that it becomes equivalent to low rank \mso on all graphs. Concretely, in {\em{flip-reachability logic}}, we consider a more general notion of directed flips that work naturally in the setting of directed graphs. The definition is essentially the same as in the undirected setting, except that flipping the arc relation between classes $K$ and $L$ exchanges arcs with non-arcs in the set $K\times L$. Note that thus, directed flips work on {\em{ordered pairs}} of classes: exchanging arcs with non-arcs in $K\times L$ and in $L\times K$ are two different operations, and they can be applied or not independently. Flip-reachability logic is defined by extending \fo by predicates $\flipreach_{k, A}(s, t, a_1, \ldots, a_k)$ that test whether $t$ is {\em{reachable}} from $s$ in the directed flip defined by $\tup a=(a_1,\ldots,a_k)$ and $A$.

While flip-reachability logic naturally works in the domain of directed graphs, we can deploy it also on undirected graphs by replacing every edge with a pair of arcs directed oppositely. Note, however, that even if we start with an undirected graph $G$, a definable directed flip of $G$ is not necessarily undirected. Hence, access to the reachability relation in directed flips of $G$ may provide a larger expressive power than access to the connectivity relation in undirected flips of $G$. We prove that this is indeed the case, and the expressive power of flip-reachability logic meets that of low rank \mso.

\begin{theorem}\label{thm:main-freach}
For every formula of low rank \mso there exists a formula of flip-reachability logic such that the two formulas are equivalent on all undirected graphs.
\end{theorem}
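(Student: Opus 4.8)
The inclusion of flip-reachability logic in low rank \mso is \cref{lem:easy-comparison}, so what must be shown is the converse: every low rank \mso formula is equivalent over all undirected graphs to a flip-reachability formula. I would prove this by structural induction on the number of set quantifiers, working throughout with vertex-colored graphs so that a free monadic variable can be absorbed as an extra colour. All Boolean connectives and first-order quantifiers are inherited directly from \fo, which already embeds into flip-reachability logic, so everything reduces to a single step: given a flip-reachability formula $\tau$ over graphs carrying an extra colour $X$, produce a flip-reachability formula over the colours of $G$ alone equivalent to $\exists X\colon r\;\tau$, that is, to ``there is a vertex set $X$ with $\rk_G(X)\le r$ such that $\tau$ holds in $(G,X)$''.

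The engine of this step is a statement that a cut of bounded cutrank can be \emph{flipped away}. Recall the elementary description of such cuts: if $\rk_G(X)\le r$, then $X$ splits into at most $2^r$ classes according to the neighbourhood of their vertices in $\wh X$, symmetrically $\wh X$ splits into at most $2^r$ classes, adjacency between an $X$-class and a $\wh X$-class is homogeneous, and the resulting $\le 2^r\times 2^r$ pattern matrix $H$ has $\F_2$-rank at most $r$. I would fix parameters $\tup a$ consisting of one representative per class and use the directed flip $F(\tup a,H)$ that, classifying vertices by their adjacency to $\tup a$, toggles the arc from a class that looks (via its adjacency to $\tup a$) like it lies in the $X$-side class $i$ to a class that looks like it lies in the $\wh X$-side class $j$, precisely when $H_{ij}=1$. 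A short computation shows two things: first, for every $X$ with $\rk_G(X)\le r$ the corresponding choice of $\tup a$ and $H$ makes $X$ \emph{reachability-closed} in $G\mathbin{\triangle}F(\tup a,H)$ — no arc leaves $X$ — since on arcs out of $X$ the toggling rule reads the signatures correctly and cancels precisely the cut; and second, every reachability-closed set in any $G\mathbin{\triangle}F(\tup a,H)$ with $\rk_{\F_2}(H)\le r$ has cutrank at most $r$ in $G$, because its cut matrix is a pullback of $H$. The place where directed flips are indispensable is exactly the first claim: a flip acts uniformly and cannot tell which side of the cut a vertex lies on, so a \emph{symmetric} toggling rule is forced to misfire on some pairs and may recreate cut edges, whereas the one-directional rule above only ever inspects the relevant signature of the source and of the target. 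This asymmetry is, morally, the phenomenon behind \cref{thm:main-fconn-negative}.

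After this reduction, $\exists X\colon r\;\tau$ becomes: there exist parameters $\tup a$ and a pattern $H$ with $\rk_{\F_2}(H)\le r$ such that $G':=G\mathbin{\triangle}F(\tup a,H)$ has a reachability-closed set $X$ with $\tau$ true in $(G,X)$; and since $G$ is obtained from $G'$ by the fixed flip $F(\tup a,H)$, the flip-reachability queries of $\tau$ about $(G,X)$ rewrite — by composing flips — into flip-reachability queries about $(G',X)$ with $\tup a$ adjoined. It remains to eliminate the quantifier over reachability-closed sets of $G'$. Here I would argue that the value of a flip-reachability sentence on $(G',X)$, for $X$ reachability-closed, is determined by a bounded ``profile'': for each of the finitely many relevant strongly-connected-component types, how many components of that type lie inside $X$, lie inside $\wh X$, or straddle the (bounded) interface, each count truncated at a threshold. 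Then $\exists X\,(\text{reachability-closed})\;\tau$ is a finite disjunction over achievable profiles, each achievability condition being itself expressible in flip-reachability logic over $G'$ (hence over $G$) by quantifying the boundedly many representatives and querying flip-reachability about their reach-components. Assembling, $\exists X\colon r\;\tau$ becomes a finite disjunction, over $\tup a$, $H$, and profiles, of flip-reachability sentences, closing the induction. I expect the flip lemma of the previous paragraph to be the conceptual crux — and the one genuinely using directedness — while the profile / Feferman--Vaught composition of this last paragraph, though morally routine, will be the longest and most delicate part: a flip of $(G',X)$ may refine classes by membership in $X$ and so toggle arcs across the cut, so one must first normalise $\tau$ so that all its queries respect, or can be rerouted to respect, the decomposition.
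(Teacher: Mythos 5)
Your first half is sound and matches the paper's \Cref{lem:low-rank-to-suffixes}: the directed flip parameterized by one representative per row-class and per column-class of the cut does indeed turn a low-rank set $X$ into a reachability-closed (suffix) set, and the reverse implication also holds (modulo details you elide: one must force $\tup a^+\subseteq X$ and $\tup a^-\cap X=\emptyset$ for every non-trivial suffix, and handle vertices with no twin among the representatives; the paper's conditions (i)--(iii) in the definition of $H_{\tup a}$ do exactly this). Your observation about where directedness is genuinely needed is also correct and matches the paper's motivation.

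The gap is in the second half, and it is a real one. You claim that the value of $\tau$ on $(G',X)$, for $X$ a suffix, ``is determined by a bounded profile'' of truncated counts of SCC types. This is false as stated. A flip-reachability formula over $(G',X)$ sees the full structure, and two suffixes with identical SCC-type counts can be distinguished: the poset of strongly connected components is not a disjoint sum of independently behaving pieces, so the Feferman--Vaught style reduction to a count vector does not apply directly. The paper's route around this is the Low Rank Structure Theorem (\Cref{thm:low-rank-structure} via \Cref{lem:suffixes-to-spans}): it first shows that every suffix lies in the span of a \emph{seed} $(X_+,X_-,\Xc)$ that is definable from a tuple of bounded length $\tup a\tup b$ and is moreover \emph{$\tup a$-uniform}, meaning that the parts of $\Xc$ with the same atomic type over $\tup a$ are genuine twins with respect to everything outside. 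Only inside the span of such a fixed seed does your ``truncate counts at $q$'' idempotence argument (the paper's \Cref{lem:low-rank-definability-freach}) go through, because $\tup a$-uniformity is exactly what makes parts of the same type interchangeable in the Ehrenfeucht--Fra\"iss\'e game. Separately, the claim that the seed itself can be defined from boundedly many representatives rests on a nontrivial combinatorial lemma (\Cref{cl:small-upper-covering} and its mirror): a minimal ``cover'' of $X_+$ by antichain elements has size at most $|\atp^{k+1}|^2$, proved by a matching-style argument on atomic types. Your proposal asserts the bounded representability as if it were routine, but it is precisely the combinatorial heart of the proof and is where most of the work lies; without it the ``finite disjunction over achievable profiles'' has no bounded-length witness to quantify over.
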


Similarly to flip-connectivity logic, also the flip-reachability predicates are expressible in low rank \mso (see \cref{lem:easy-comparison}), hence the expressive power of the two logics is indeed equal. We also remark that in \cref{thm:main-freach} it is important that we consider low rank \mso and flip-reachability logic in the domain of undirected graphs. We currently do not know whether also on all directed graphs, the expressive power of flip-reachability logic coincides with that of (naturally defined) low rank \mso.

Similarly to separator logic, both flip-connectivity logic and flip-reachability logic are defined as extensions of \fo by predicates whose satisfaction on given arguments can be checked in polynomial time. Hence, for every fixed sentence $\varphi$ of flip-reachability logic, it can be decided in polynomial time whether a given graph $G$ satisfies $\varphi$. By combining this observation with \cref{thm:main-freach} we conclude the following.

\begin{corollary}\label{cor:xp}
 Every graph property definable in low rank \mso can be decided in polynomial time.
\end{corollary}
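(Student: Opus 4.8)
The plan is to obtain \cref{cor:xp} as an immediate consequence of \cref{thm:main-freach} together with the observation, already sketched in the introduction for separator logic, that any fixed sentence of an extension of \fo by polynomial-time-checkable predicates can be evaluated by brute force in polynomial time. Concretely, fix a sentence $\varphi$ of low rank \mso. First I would apply \cref{thm:main-freach} to obtain a sentence $\psi$ of flip-reachability logic equivalent to $\varphi$ on all undirected graphs; as this is a purely logical translation, $\psi$ is a fixed object depending only on $\varphi$, using some fixed number $q$ of vertex variables and some fixed finite collection of predicates $\flipreach_{k,A}$. It then suffices to decide $\psi$ on a given graph $G$ in polynomial time.

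The algorithm for $\psi$ is the standard recursion over the first-order quantifier prefix: for each vertex variable, iterate over all $n$ evaluations in $G$; once all variables are assigned, evaluate the quantifier-free kernel of $\psi$, which is a Boolean combination of equalities, edge atoms, and flip-reachability atoms on concrete vertices. Equalities and edge atoms are trivial to check, so the only point needing justification is that $\flipreach_{k,A}(s,t,a_1,\ldots,a_k)$ can be evaluated in polynomial time on concrete arguments. For this I would describe the obvious procedure: double every edge of $G$ into a pair of oppositely directed arcs, classify each vertex according to its adjacency pattern to the constantly many parameters $a_1,\ldots,a_k$, build the directed flip $H$ by complementing the arc relation inside $K\times L$ for every ordered pair of classes $(K,L)$ selected by $A$, and finally run a single graph search from $s$ to test reachability of $t$. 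All of this takes time polynomial in $n$ (indeed $O(n^2)$, since $H$ has $O(n^2)$ arcs and $O(1)$ classes), so the whole recursion runs in time $n^{O(q)}$, which is polynomial for every fixed $\varphi$. This proves \cref{cor:xp}.

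I do not expect any genuine obstacle here: the statement is a routine corollary of \cref{thm:main-freach}, and essentially the only work is to make explicit the straightforward polynomial-time algorithm underlying a single flip-reachability check. The one subtlety worth stating clearly is that the degree of the polynomial depends on $\varphi$ (through the number of variables and the predicates appearing in $\psi$) but not on the input graph, and that constructing a directed flip and testing reachability in it is genuinely a polynomial-time operation.
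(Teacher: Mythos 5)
Your proposal matches the paper's argument exactly: translate via \cref{thm:main-freach} into flip-reachability logic, then brute-force over vertex-variable assignments while evaluating each $\flipreach_{k,A}$ atom in polynomial time by constructing the directed flip and running a graph search. The paper only sketches this; you have merely made the routine brute-force step and the polynomial-time evaluation of the predicates explicit, which is fine.
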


In the terminology of parameterized complexity, this proves that the model-checking problem for low rank \mso is {\em{slice-wise polynomial}}, that is, belongs\footnote{Formally, \cref{cor:xp} places the problem only in non-uniform $\mathsf{XP}$, because for every sentence $\varphi$ of low rank \mso we obtain a different algorithm for deciding the satisfaction of $\varphi$. However, our proof of \cref{thm:main-freach} can be turned into an effective algorithm for translating $\varphi$ into an equivalent sentence $\varphi'$ of flip-reachability logic, which can be subsequently decided by brute force. This yields a uniform $\mathsf{XP}$ algorithm for model-checking low rank \mso. We omit the details.} to the complexity class $\mathsf{XP}$.

\paragraph*{Organization.} In \cref{sec:logics} we formally introduce all the considered logics and establish some basic properties. In \cref{sec:framework} we propose a simple framework for proving equivalence with low rank \mso that is reused for all the three positive results: \cref{thm:main-separator,thm:main-fconn-positive,thm:main-freach}. Main results comparing low rank \mso with separator logic (\cref{thm:main-separator}), flip-connectivity logic (\cref{thm:main-fconn-negative,thm:main-fconn-positive}), and flip-reachability logic (\cref{thm:main-freach}) are proved in \cref{sec:separator,sec:fconn,,sec:freach}, respectively. In \cref{sec:conclusions} we discuss several open questions and possible directions for future work.

\section{Definitions and basic properties}\label{sec:logics}

For $k\in \N$, we denote $[k]\coloneqq \{1,\ldots,k\}$.
All graphs are finite, simple (i.e. without loops or parallel edges), and undirected, unless explicitly stated. They are also possibly {\em{vertex-colored}}: the vertex set is equipped with a number of distinguished subsets -- colors.
We treat such graphs as relational structures consisting of the vertex set, the binary symmetric edge relation (denoted $E$), and a number of unary predicates signifying colors. Note that a vertex may belong to several different colors simultaneously, or to no color at~all.

For a set $A$ of vertices of $G$, we denote by $G[A]$ the subgraph of $G$ induced by $A$, i.e. the graph with vertex set $A$, edge set consisting of all edges of $G$ with both endpoints in $A$, and colors inherited naturally.
For a vertex $v$ of $G$ and a set of vertices $A \subseteq V(G)$, we denote by $E(v, S)$ the set of neighbors of $v$ in $S$, i.e. $E(v, S) = \setof{u \in S}{uv \in E(G)}$. The {\em{neighborhood}} of a vertex is $N(v)\coloneqq E(v,V(G))$.

We assume reader's familiarity with the first-order logic \fo and monadic second-order logic \mso. For convenience, in all formulas of all the considered logics, including \fo and its extensions, we allow free set variables. If $X$ is such a set variable and $x$ is a vertex variable, then we allow membership tests of the form~$x\in X$.
By $\tup x$ we denote a tuple of vertex variables and by $\wtup X$ we denote a tuple of set variables.
For a graph $G$ and a tuple of vertices $\tup x$, we denote by $G^{\tup x}$ the set of evaluations of $\tup x$ in $G$, i.e. functions from the variables in $\tup x$ to the vertices of $G$. Similarly, for a tuple of set variables $\wtup X$, we denote by $G^{\wtup X}$ the set of evaluations of $\wtup X$ in $G$.
For brevity we might identify tuples of vertices with respective evaluations.
We say that a logic $\Ll$ is an {\em{extension}} of \fo if $\Ll$ contains \fo as its fragment.

For a formula $\varphi(x, \wtup Y, \tup z)$ of a logic $\Ll$, a graph $G$, and evaluations $\wtup B$ of $\wtup Y$ and $\tup c$ of $\tup z$, we say that $\varphi(x, \wtup B, \tup c)$ {\em{defines}} a set $A \subseteq V(G)$ in $G$ if
\[
    A = \setof{v \in V(G)}{G \models \varphi(v, \wtup B, \tup c)}.
\]
We also denote by $\varphi(G, \wtup B, \tup c)$ the set defined by $\varphi(x, \wtup B, \tup c)$ in $G$.

\paragraph*{Low rank \mso.} We have already defined low rank \mso in \cref{sec:intro}. Let us make here a few simple remarks about the choices made in the definition.

For readers not familiar with the notions of rankwidth and of cutrank, measuring the complexity of a binary matrix by its rank over $\F_2$ may not be the most intuitive choice. Let us explain that the  selection of rank over $\F_2$ is immaterial, as any similar choice would lead to a logic with the same expressive~power.

For a $\{0,1\}$-matrix $M$ and a field $\F$, by $\rkk_\F(M)$ we denote the rank of $M$ over~$\F$. Further, let the {\em{diversity}} of $M$, denoted $\dv(M)$, be the number of different rows of $M$ plus the number of different columns of~$M$. We have the following simple algebraic~fact.

\begin{lemma}\label{lem:equiv-measures}
 Let $M$ be a matrix with entries in $\{0,1\}$ and $\F$ be a finite field. Then
 \[\rkk_\F(M)\leq \rkk_\Q(M)\leq \dv(M)/2\leq |\F|^{\rkk_\F(M)}.\]
\end{lemma}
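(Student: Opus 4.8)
The plan is to establish the three inequalities in the chain separately, each by an elementary argument. For the first inequality $\rkk_\F(M) \le \rkk_\Q(M)$, I would argue that any linear dependence among the rows (or columns) of $M$ that holds over $\Q$ can be rescaled to hold over $\Z$ (clearing denominators) and then reduced modulo the characteristic $p$ of $\F$; the subtlety is that this reduction could in principle collapse a genuine dependence, but it can only \emph{create} new dependencies, never destroy the $\Q$-rank's worth of independent rows viewed the other way. More cleanly: pick a maximal set of columns of $M$ that are $\F$-linearly independent; I claim the same columns are $\Q$-linearly independent. Indeed, a $\Q$-dependence among them scales to a primitive integer dependence $\sum c_i v_i = 0$ with $\gcd(c_i) = 1$, and reducing mod $p$ gives a nontrivial $\F$-dependence (nontrivial precisely because not all $c_i$ are divisible by $p$), contradiction. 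Hence $\rkk_\F(M) \le \rkk_\Q(M)$.

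For the middle inequality $\rkk_\Q(M) \le \dv(M)/2$, the key observation is that the rank over $\Q$ is at most the number of \emph{distinct} rows of $M$ (repeated rows contribute nothing to the row space), and symmetrically at most the number of distinct columns. Writing $r$ for the number of distinct rows and $c$ for the number of distinct columns, we get $\rkk_\Q(M) \le \min(r,c) \le (r+c)/2 = \dv(M)/2$.

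For the last inequality $\dv(M)/2 \le |\F|^{\rkk_\F(M)}$, let $d = \rkk_\F(M)$. The column space of $M$ over $\F$ is a vector space of dimension at most $d$, hence has at most $|\F|^d$ elements; since every column of $M$ is an element of this space, there are at most $|\F|^d$ distinct columns, and $\{0,1\}$-columns in particular form a subset of these. The same bound $|\F|^d$ applies to the number of distinct rows by considering the row space. Therefore $\dv(M)/2 = (\#\text{distinct rows} + \#\text{distinct columns})/2 \le (|\F|^d + |\F|^d)/2 = |\F|^d$.

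The only step that requires genuine care is the first one: one must be careful that "reduce a rational linear dependence modulo $p$" is a valid move, which is why the primitivity normalization ($\gcd$ of coefficients equal to $1$) is essential — without it the reduced dependence could be trivial. Everything else is bookkeeping about row/column spaces of finite-dimensional vector spaces over a finite field, and the reduction of the statement to "rank $\le$ number of distinct rows/columns" is immediate. I expect the write-up to be short, with the primitive-vector argument being the one sentence that actually carries content.
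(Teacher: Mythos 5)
Your proof is correct and follows essentially the same route as the paper's: contrapositive of "$\Q$-dependence implies $\F$-dependence" for the first inequality, rank bounded by the number of distinct rows/columns for the second, and counting vectors in a $d$-dimensional $\F$-space for the third. The paper states the first step as a one-line fact without spelling out the clear-denominators-and-take-primitive argument, which you rightly identify as the only step with real content.
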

\begin{proof}
 For the first inequality, every set of columns of $M$ that is dependent over $\Q$ is also dependent over~$\F$. For the second inequality, the rank of a matrix over any field is always bounded by the number different rows, as well as by the number of different columns. For the last inequality, if the rank of $M$ over $\F$ is $k$, then the columns of $M$ are contained in the span of a base of size $k$. This span has at most $|\F|^k$ different vectors, hence $M$ has at most $|\F|^k$ different columns. A symmetric argument shows that $M$ has at most $|\F|^k$ different rows as well.
\end{proof}

\cref{lem:equiv-measures} implies that for a $\{0,1\}$-matrix $M$, whether we measure the diversity of $M$, or its rank over~$\F_2$, or its rank over any other finite field $\F$, or its rank over $\Q$, all these measurements yield values that are bounded by functions of each other. Hence, if $M$ has one of those measures bounded, then all the other measures are bounded as well. Let us also observe that testing the value of any of the considered measures can be defined in \fo.

\begin{lemma}\label{lem:inter-def}
 For every $k\in \N$ and a finite field $\F$, there is an \fo formula $\varphi_{k,\F}(X)$ that for a graph $G$ and $X\subseteq V(G)$, tests whether $\rk_\F(\Adj_G[X,\wh X])\leq k$, where $\wh X=V(G)\setminus X$. Similarly, there are formulas $\varphi_{k,\Q}(X)$ and $\varphi_{k,\dv}(X)$ that test whether $\rk_\Q(\Adj_G[X,\wh X])\leq k$ and $\dv(\Adj_G[X,\wh X])\leq k$,~respectively.
\end{lemma}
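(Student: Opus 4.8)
The plan is to produce all three formulas by an elementary, purely syntactic unfolding, using the classical determinantal characterization of rank: for a matrix over a field $\F$, one has $\rkk_\F(M)\le k$ if and only if every $(k+1)\times(k+1)$ submatrix of $M$ (obtained by choosing $k+1$ row indices and $k+1$ column indices, with repetitions allowed) has determinant $0$ in $\F$. The point is that for the matrix $\Adj_G[X,\wh X]$, whose entries lie in $\{0,1\}$, a $(k+1)\times(k+1)$ submatrix is nothing more than a $(k+1)^2$-tuple of bits, and ``this bit pattern has determinant $0$ over $\F$'' is a fixed Boolean condition: there are only finitely many $\{0,1\}$-matrices of size $(k+1)\times(k+1)$, and one precomputes the finite sublist of those that are singular over $\F$.

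Concretely, for the field case I would define $\varphi_{k,\F}(X)$ to say: for all vertices $u_1,\dots,u_{k+1}$ and $v_1,\dots,v_{k+1}$ of $G$, if $u_i\in X$ for every $i$ and $v_j\notin X$ for every $j$, then the tuple $\bigl(E(u_i,v_j)\bigr)_{i,j}$ is \emph{not} the bit pattern of a matrix that is nonsingular over $\F$. This last clause unfolds into a fixed \fo formula, for instance $\bigvee_{M}\bigwedge_{i,j}\beta^M_{ij}$, where $M$ ranges over the $\{0,1\}$-matrices of size $(k+1)\times(k+1)$ that are singular over $\F$, and $\beta^M_{ij}$ is $E(u_i,v_j)$ if $M_{ij}=1$ and $\lnot E(u_i,v_j)$ if $M_{ij}=0$. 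The formula $\varphi_{k,\Q}(X)$ is obtained in exactly the same way, replacing the precomputed list of matrices singular over $\F$ by the list of $\{0,1\}$-matrices with determinant $0$ over $\Q$ (equivalently, over $\Z$). The degenerate situations where $X$ or $\wh X$ has fewer than $k+1$ elements, including $X=\emptyset$, need no special treatment: then either the universal statement over $(k+1)$-tuples is vacuous, or every admissible tuple has a repeated index and hence yields a singular submatrix, which is consistent with the fact that the rank is already at most $k$.

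For diversity I would instead count equivalence classes. Two vertices $u,u'$ of $X$ are row-equivalent with respect to $X$ if they have the same neighbourhood inside $\wh X$; this is the \fo condition $u\in X\wedge u'\in X\wedge \forall v\,(v\notin X\rightarrow (E(u,v)\leftrightarrow E(u',v)))$. The number of distinct rows of $\Adj_G[X,\wh X]$ is the number of classes of this relation, and ``at most $a$ distinct rows'' is \fo-definable for each fixed $a$ by stating that no $a+1$ vertices of $X$ are pairwise non-row-equivalent; symmetrically one gets a formula for ``at most $b$ distinct columns.'' Since $\dv(\Adj_G[X,\wh X])$ is the sum of the number of distinct rows and the number of distinct columns, the condition $\dv(\Adj_G[X,\wh X])\le k$ is equivalent to the finite disjunction over $a\in\{0,\dots,k\}$ of the conjunction of ``at most $a$ distinct rows'' and ``at most $k-a$ distinct columns,'' giving the desired \fo formula $\varphi_{k,\dv}(X)$.

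I do not expect a real obstacle. The only points that need a little care are the exact form of the determinantal rank criterion (in particular, checking that allowing repeated row and column indices in the universal quantification does no harm, since a repeated index always produces a singular submatrix), and the bookkeeping for the small cases $|X|<k+1$ or $\wh X=\emptyset$; the membership tests $u\in X$ and $v\notin X$ are available since $X$ is a free set variable. Everything else is a finite, mechanical expansion into \fo, and by \cref{lem:equiv-measures} the three families of formulas are, of course, mutually definable up to a change of the bound $k$.
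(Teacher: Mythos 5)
Your proof is correct, but it follows a genuinely different route from the paper's for the two field cases. The paper's proof handles all three measures ($\rkk_\F$, $\rkk_\Q$, $\dv$) uniformly: it \emph{existentially} quantifies a maximal pairwise-non-twin core $C\subseteq X$, $D\subseteq\wh X$, using \cref{lem:equiv-measures} to bound $|C|+|D|$ (by $2|\F|^k$, $2\cdot 2^k$, or $k$ respectively), and then decides the rank of the small $|C|\times|D|$ submatrix by a finite disjunction over its possible adjacency patterns; correctness of this reduction relies on the observation that collapsing twin rows and twin columns does not change the rank. Your proof instead invokes the determinantal characterization of rank directly: a \emph{universal} quantification over $(k+1)$-tuples from $X$ and $\wh X$, requiring each resulting $\{0,1\}$-submatrix to lie in a precomputed finite list of matrices singular over the given field. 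This is arguably more elementary and self-contained, as it does not depend on \cref{lem:equiv-measures} (and in particular does not need the case distinction between finite fields and $\Q$ that the paper makes when choosing the size bound). Your handling of repeated indices and the degenerate small cases is correct. For the diversity case, your argument (bounding the number of distinct rows by excluding $a+1$ pairwise non-row-equivalent vertices, and likewise for columns, then taking a disjunction over splits $a + (k-a) = k$) is essentially the contrapositive of the paper's (existentially quantify a maximal twin-free core of size at most $k$), so that part of the two proofs is essentially the same idea phrased dually.
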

\begin{proof}
 We first construct the formula $\varphi_\F(X)$. Call two vertices $u,u'\in X$ {\em{twins}} if $u$ and $u'$ have the same neighborhood in $\wh X$; equivalently, $u$ and $u'$ define equal rows in $M\coloneqq \Adj_G[X,\wh X]$. Similarly, $v,v'\in \wh X$ are twins if they have the same neighborhood in $X$, or equivalently they define equal columns of $M$. Let $C$ be an inclusionwise maximal subset of $X$ consisting of pairwise non-twins, and similarly let $D\subseteq \wh X$ be an inclusionwise maximal subset of $\wh X$ consisting of pairwise non-twins. Note that $|C|+|D|=\dv(M)\leq 2 \cdot |\F|^{\rkk_\F(M)}$ by \cref{lem:equiv-measures}. Therefore, formula $\varphi_\F(X)$ can be constructed by (i) existentially quantifying $C$ and $D$ as sets of total size at most $|\F|^k$, (ii) checking that $C$ and $D$ are inclusionwise maximal subsets of $X$ and of $\wh X$, respectively, consisting of pairwise non-twins, and (iii) verifying that the rank over $\F$ of the minor of $M$ induced by the rows of $C$ and the columns of $D$ is at most $k$, by making a disjunction over all possible adjacency relations between the vertices of $C$ and of $D$. Formula $\varphi_{k,\Q}(X)$ can be defined in the same way (here we have $|C|+|D|\leq 2\cdot 2^k$ by \cref{lem:equiv-measures} for $\F=\F_2$), while in formula $\varphi_{k,\dv}(X)$ we only need to make sure that $|C|+|D|\leq k$.
\end{proof}

From \cref{lem:equiv-measures,lem:inter-def} we conclude that in the definition of low rank \mso, regardless whether in set quantification we require providing an explicit bound on the rank of the bipartite adjacency matrix over~$\F_2$, or over any other fixed finite field $\F$, or over $\Q$, or even on the diversity of the adjacency matrix, all these logics will have the same expressive power. This is because if we have two measures $\mu_1,\mu_2$ among the above, to quantify over $X$ with $\mu_2(\Adj_G[X,\wh X])\leq k$, it suffices to quantify over $X$ with $\mu_1(\Adj_G[X,\wh X])\leq f(k)$, where $f\colon \N\to \N$ is such that $\mu_1(M)\leq f(\mu_2(M))$ for every $\{0,1\}$-matrix $M$, and verify that indeed $\mu_2(\Adj_G[X,\wh X])\leq k$ using a formula provided by \cref{lem:inter-def}. Therefore, following the literature on rankwidth we make the arbitrary choice of defining low rank \mso using the cutrank function that relies on ranks over $\F_2$. In the remainder of this paper, we denote $\rkk\coloneqq \rkk_{\F_2}$ for brevity. Also, the cutrank of a set will be called just {\em{rank}}.

\paragraph*{Separator logic, flip-connectivity logic, and flip-reachability logic.}
Separator logic has also been introduced in \cref{sec:intro}. Recall that it is defined as the extension of \fo on graphs by predicates $\conn_k$ for $k\in \N$, each of arity $k+2$, with the following semantics: if $G$ is a graph and $s,t,a_1,\ldots,a_k$ are vertices of $G$, then $\conn_k(s,t,a_1,\ldots,a_k)$ holds in $G$ if and only if there is a path with endpoints $s$ and $t$ that does not pass through any of the vertices $a_1,\ldots,a_k$.

We now define flip-connectivity logic and flip-reachability logic. Flip-reachability logic works naturally on directed graphs ({\em{digraphs}}) and flip-connectivity logic will be a special case of the definition, hence we need to introduce some terminology on digraphs.

A~directed graph (digraph) is a~pair $G=(V, E)$ consisting of a~set $V=V(G)$ of vertices and a~set $E=E(G)$ of \emph{arcs}.
An~arc from $u$ to $v$ is denoted $\vec{uv}$ and has tail $u$ and head $v$.
We specify that digraphs do not contain self-loops, so $u \neq v$ for each arc $\vec{uv}$, and neither do they contain multiple copies of an~arc.
However, we allow parallel arcs connecting two vertices in the opposite directions.

The \emph{atomic type} of a~$k$-tuple $\tup{v} = (v_1, \ldots, v_k)$ of vertices of an~undirected graph $G$, denoted $\atp(\tup{v}) = \atp(v_1, \ldots, v_k)$, is the set of all atomic formulas of the form $x_i = x_j$, $E(x_i, x_j)$, and $U(x_i)$ for some $i,j\in \{1,\ldots,k\}$ and a unary predicate $U$ in the language of $G$, satisfied by $\tup{v}$ in $G$.
Naturally, two $p$-tuples $\tup{u}$, $\tup{v}$ of vertices satisfy the same quantifier-free formulas of first-order logic with no parameters if and only if $\atp(\tup{u}) = \atp(\tup{v})$.
We define \emph{edge type} of a~$k$-tuple $\tup v$ similarly to the atomic type, but we do not consider the unary predicates.

Let $\atp^k$ denote the set of all possible atomic types of $k$-tuples of vertices of an~undirected graph. Note that $\atp^k$ is finite and of size bounded by a function of $k$ and the number of unary predicates interpreted in $G$. Atomic types could be also naturally defined for directed graphs, but we will use them only in the undirected context.

Next, we define flips of (undirected) graphs with respect to a~tuple of parameters.
Let $k \in \N$, $A \subseteq \atp^{k+1} \times \atp^{k+1}$, $G$ be an~undirected graph and $\tup{a} = (a_1, \ldots, a_k)$ be a~$k$-tuple of vertices of $G$ -- the \emph{parameters} of the flip.
Then the \emph{$A$-flip of $G$ with parameters $\tup{a}$}, denoted $G \oplus_{\tup{a}} A$, is the directed graph with the same vertex set as~$G$, where for distinct $u, v \in V(G)$, we have
\[
    \vec{uv} \in E(G \oplus_{\tup{a}} A)\qquad\textrm{if and only if}\qquad[uv \in E(G)]\ \textrm{xor}\ [(\atp(u, \tup{a}), \atp(v, \tup{a})) \in A].
\]
Note that if the relation $A$ is symmetric, then the resulting graph is always undirected, in the sense that every arc $\vec{uv}$ is accompanied with the opposite arc $\vec{vu}$.
In this case, we will say that the $A$-flip is \emph{symmetric} and consider its result to be an undirected graph.

This definition now allows us to formally introduce the flip-reachability logic and its weaker flip-connectivity counterpart.

\begin{definition}[Flip-reachability logic]
    For $k \in \N$ and $A \subseteq \atp^{k+1} \times \atp^{k+1}$, we introduce the flip-reachability predicate
    \[
        \flipreach_{k, A}(s, t, a_1, \ldots, a_k)
    \]
    as the relation on vertices in a~graph that holds precisely when there exists a~directed path from $s$ to $t$ in $G \oplus_{\tup{a}} A$.
    Flip-reachability logic is first-order logic over graphs in which the universe is formed by the vertices of a~graph, and the available relations are the binary edge relation, the unary predicates in the language, as well as the flip-reachability predicates.
\end{definition}

\begin{definition}[Flip-connectivity logic]
    For $k \in \N$ and symmetric $A \subseteq \atp^{k+1} \times \atp^{k+1}$, we introduce the flip-connectivity predicate
    \[
        \flipconn_{k, A}(s, t, a_1, \ldots, a_k)
    \]
    as the relation on vertices in a~graph that holds precisely when $s$ and $t$ are in the same connected component of $G \oplus_{\tup{a}} A$.
    Flip-connectivity logic is first-order logic over graphs in which the universe is formed by the vertices of a~graph, and the available relations are the binary edge relation, the unary predicates in the language, as well as the flip-connectivity predicates.
\end{definition}

\paragraph*{Easy comparisons.} Let us first establish the straightforward relations between the considered logics in terms of expressive power. We first note the following.

\begin{lemma}\label{lem:freach-in-lrmso}
    Every flip-reachability predicate can be expressed in low rank \mso.
\end{lemma}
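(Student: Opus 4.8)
The plan is to show that, given parameters $\tup a = (a_1,\ldots,a_k)$ and a relation $A \subseteq \atp^{k+1}\times\atp^{k+1}$, the predicate $\flipreach_{k,A}(s,t,a_1,\ldots,a_k)$ can be rewritten as a low rank \mso\ formula whose only set quantifier ranges over sets of vertices of rank bounded by a function of $k$ (and the number of unary predicates). The natural candidate, exactly as in the classical reduction of connectivity to \mso, is to quantify over the set $R$ of vertices reachable from $s$ in $G\oplus_{\tup a} A$: the predicate holds if and only if there is a set $R$ that (i) contains $s$, (ii) does not contain $t$ would be wrong — rather, (ii) is ``closed'' under taking out-neighbours in the flip, i.e. whenever $u\in R$ and $\vec{uv}\in E(G\oplus_{\tup a}A)$ then $v\in R$, and such that $t\in R$. (One can equivalently ask for the smallest such $R$, but closure plus $s\in R$ plus $t\in R$ suffices for a ``yes'' instance, and for a ``no'' instance no such $R$ can put $t$ in the $s$-reachable-closed set while being itself closed and minimal; the cleanest formulation is: $s$ can reach $t$ iff \emph{every} $R$ containing $s$ and closed under out-neighbours contains $t$, which is a $\forall$ over such $R$.) The arc relation of $G\oplus_{\tup a}A$ is first-order definable from $E$ and from the (finitely many) formulas defining the atomic-type classes with respect to $\tup a$, using the defining equivalence $\vec{uv}\in E(G\oplus_{\tup a}A) \iff [uv\in E]\ \textrm{xor}\ [(\atp(u,\tup a),\atp(v,\tup a))\in A]$; so the closure condition is expressible by a first-order formula with free set variable $R$ and free vertex variables $s,\tup a$.

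The only thing that remains is the crucial point: I must verify that \emph{it suffices} to quantify over sets $R$ of bounded rank, so that the quantifier $\forall R$ can legitimately be replaced by $\forall R\colon r$ for an appropriate $r = r(k,\text{\#colors})$. The key observation is that the \emph{canonical} witness set — the set of vertices reachable from $s$ in $G\oplus_{\tup a}A$ — has bounded rank in $G$. Indeed, whether a vertex $v$ lies in this reachability closure is determined entirely by: the partition of $V(G)$ into the at most $|\atp^{k+1}|$ atomic-type classes with respect to $(v_0,\tup a)$ for the relevant vertices, and the bipartite adjacency pattern \emph{between these classes}, which after the flip becomes either complete or empty on each ordered pair. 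More precisely, fix the set $R$ to be the reachability closure. For two vertices $u,u'$ on the same side of the cut $(R,\widehat R)$ that moreover have the same atomic type over $\tup a$, their rows of $\Adj_G[R,\widehat R]$ restricted to any single atomic-type class on the other side are either both all-zero or both all-one (because edges between the two classes, after flipping, are uniform, and ``after flipping'' differs from ``before flipping'' by a constant depending only on the pair of types). Hence the number of distinct rows of $\Adj_G[R,\widehat R]$ is at most $|\atp^{k+1}|\cdot 2^{|\atp^{k+1}|}$, i.e. bounded by a function of $k$ and the number of unary predicates; by \cref{lem:equiv-measures}, the rank $\rkk(R)$ is then bounded by the same kind of function. (A symmetric argument bounds the number of distinct columns, but the row bound alone already bounds $\dv(\Adj_G[R,\widehat R])$ and hence the rank.)

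Putting this together: let $r$ be the resulting bound on $\rkk(R)$. I claim that $\flipreach_{k,A}(s,t,\tup a)$ is equivalent to the low rank \mso\ formula asserting that every $R$ with $\rkk(R)\le r$ that contains $s$ and is closed under taking out-neighbours in $G\oplus_{\tup a}A$ also contains $t$. For the forward direction, if $s$ cannot reach $t$, then the reachability closure itself is a set $R$ of rank $\le r$ that is closed, contains $s$, and omits $t$, falsifying the formula; contrapositively, if the formula holds then $s$ reaches $t$. For the backward direction, if $s$ reaches $t$ then \emph{any} closed set containing $s$ contains the entire reachability closure and in particular $t$, so the formula holds regardless of the rank bound. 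The atomic-type classes of a vertex with respect to $\tup a$ are first-order definable, the flipped arc relation is first-order definable from these and from $E$ as above, and the closure condition ``$\forall u\,\forall v\,((u\in R \wedge \vec{uv}\in E(G\oplus_{\tup a}A)) \to v\in R)$'' is therefore a first-order formula; so the whole statement is a formula of low rank \mso. I expect the main technical obstacle to be the bounded-rank computation for the canonical witness — specifically being careful that ``same atomic type over $\tup a$'' is the right equivalence to collapse rows, and that the xor with the type-pair-dependent constant genuinely preserves row-uniformity within a class; everything else is routine bookkeeping of first-order definability of the flip.
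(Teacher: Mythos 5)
Your proposal is correct and takes essentially the same route as the paper: both express the predicate via quantification over low-rank sets closed under out-neighbours in $G\oplus_{\tup a}A$, and both bound the rank of the canonical witness (the reachability closure of $s$) by observing that, since no arcs cross the cut in the flip, the adjacency in $G$ between $u\in R$ and $v\notin R$ is pinned by the pair $(\atp(u,\tup a),\atp(v,\tup a))$, forcing at most $|\atp^{k+1}|$ distinct rows of $\Adj_G[R,\widehat R]$ (your $|\atp^{k+1}|\cdot 2^{|\atp^{k+1}|}$ estimate is needlessly loose but still a function of $k$). One small slip: the row bound does not by itself bound $\dv$, which also counts distinct columns, but it does directly bound the rank, which is all that is needed.
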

\begin{proof}
    Fix $k\in \N$, $A \subseteq \atp^{k+1} \times \atp^{k+1}$, a graph $G$, and vertices $s,t,a_1,\ldots,a_k$; denote $\tup a=(a_1,\ldots,a_k)$.
    Observe that the predicate $\flipreach_{k, A}(s, t, a_1, \ldots, a_k)$ is false if and only if there is a set $X$ of vertices of $G$ such that $s\in X$, $t\notin X$, and there are no arcs from $X$ to $\wh X$ in $G \oplus_{\tup{a}} A$, where we denote $\wh X=V(G)\setminus X$.
    Note that these conditions, for a given set $X$, can be encoded by a first-order formula taking $X,s,t,a_1,\ldots,a_k$ as free variables.
    So it remains to show that quantification over such sets $X$ can be done using a low-rank quantifier, that is, that $X$ has rank bounded by a function of~$k$.

    For this, we argue that the set $X$ of vertices of $G$ that are reachable from $s$ in $G \oplus_{\tup{a}} A$ has rank at most $|\atp^{k+1}|$.
    To see this, consider two vertices $u,u'\in X$ such that $\atp(u, \tup{a}) = \atp(u', \tup{a})$ (in $G$), as well as a vertex $v\in \wh X$. Note that $v$ is adjacent in $G$ to either both $u$ and $u'$ or to none of them, for otherwise either $\vec{uv}$ or $\vec{u'v}$ would appear as an arc in $G \oplus_{\tup{a}} A$, which would contradict the assumption that $v$ is not reachable from $s$.
    Hence, whether there is an edge in $G$ between a vertex $u$ in $X$ and a vertex $v$ in $\wh X$ depends only on the atomic type $\atp(u,\tup{a})$. It follows that the adjacency matrix $\Adj_G[X, \wh X]$ has at most $|\atp^{k+1}|$ distinct rows, hence its rank is at most $|\atp^{k+1}|$.
\end{proof}

We conclude the following.

\begin{proposition}\label{lem:easy-comparison}
 The following holds:
 \begin{itemize}[nosep]
  \item For every formula of separator logic there is an equivalent formula of flip-connectivity logic.
  \item For every formula of flip-connectivity logic there is an equivalent formula of flip-reachability logic.
  \item For every formula of flip-reachability logic there is an equivalent formula of low rank \mso.
 \end{itemize}
\end{proposition}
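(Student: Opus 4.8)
The plan is to prove all three inclusions by a single routine structural induction on the formula. Each of the three logics is an extension of \fo and is closed under Boolean connectives and first-order quantification, so these constructs are carried over verbatim; the only content lies in replacing, inside a formula of the weaker logic, every occurrence of a ``new'' predicate by an equivalent formula of the stronger logic. The third inclusion is then immediate: \cref{lem:freach-in-lrmso} already supplies, for every $k\in\N$ and every $A\subseteq\atp^{k+1}\times\atp^{k+1}$, a low rank \mso formula equivalent to $\flipreach_{k,A}(s,t,a_1,\ldots,a_k)$, and low rank \mso contains \fo as a fragment, so substituting these formulas for all flip-reachability atoms turns a flip-reachability formula into an equivalent low rank \mso formula.

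For the second inclusion, I would observe that $\flipconn_{k,A}$ is simply the restriction of $\flipreach_{k,A}$ to symmetric $A$. Indeed, if $A$ is symmetric then, as already noted after the definition of the flip, every arc $\vec{uv}$ of $G\oplus_{\tup a}A$ is accompanied by the opposite arc $\vec{vu}$; hence in $G\oplus_{\tup a}A$ the relation ``$t$ is reachable from $s$'' coincides with the relation ``$s$ and $t$ lie in the same connected component''. Therefore, for symmetric $A$, the predicates $\flipreach_{k,A}(s,t,a_1,\ldots,a_k)$ and $\flipconn_{k,A}(s,t,a_1,\ldots,a_k)$ define the same relation, and the inclusion follows by replacing each flip-connectivity atom by the syntactically identical flip-reachability atom.

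The only real content is the first inclusion, expressing $\conn_k$ in flip-connectivity logic; this is the step I expect to require the most care, though it remains routine. The plan is to exhibit, for each $k$, a single \emph{symmetric} relation $A_k\subseteq\atp^{k+1}\times\atp^{k+1}$ such that for every graph $G$ and all parameters $\tup a=(a_1,\ldots,a_k)$, the flip $G\oplus_{\tup a}A_k$ is obtained from $G$ by turning each $a_i$ into an isolated vertex and leaving all other adjacencies unchanged. The key observation enabling this is that the atomic type $\atp(v,\tup a)$ of a vertex $v$ already records, for every $i\in[k]$, whether $v=a_i$ and whether $v$ is adjacent to $a_i$; consequently ``being an edge incident to a parameter'' is a property of the pair of atomic types of its endpoints alone, and so it can be flipped. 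Concretely, writing $x_1$ for the first coordinate of a $(k+1)$-type and $x_2,\ldots,x_{k+1}$ for the remaining ones, I would put a pair $(\tau,\sigma)$ into $A_k$ exactly when for some $i\in[k]$ the formula ``$x_1=x_{i+1}$'' belongs to $\tau$ and ``$E(x_1,x_{i+1})$'' belongs to $\sigma$, or the same holds with $\tau$ and $\sigma$ exchanged. This $A_k$ is symmetric by construction, and a short case analysis -- splitting on how many of the two (necessarily distinct) endpoints $u,v$ of a potential arc lie in $\{a_1,\ldots,a_k\}$, and using that $\atp(v,\tup a)$ determines whether $uv\in E(G)$ whenever $u$ is a parameter -- shows that $\vec{uv}\in E(G\oplus_{\tup a}A_k)$ if and only if $uv\in E(G)$ and neither $u$ nor $v$ is a parameter; that is, $G\oplus_{\tup a}A_k$ is exactly $G$ with each $a_i$ isolated. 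It follows that for $s,t\notin\{a_1,\ldots,a_k\}$, the vertices $s$ and $t$ lie in the same connected component of $G\oplus_{\tup a}A_k$ precisely when $G$ has an $s$--$t$ path avoiding $a_1,\ldots,a_k$, i.e. precisely when $\conn_k(s,t,a_1,\ldots,a_k)$ holds, while the boundary cases in which $s$ or $t$ coincides with some $a_i$ are decided directly by \fo conditions on the variables $s,t,a_1,\ldots,a_k$. Hence $\conn_k(s,t,a_1,\ldots,a_k)$ is equivalent to
\[
    \Bigl(\,\bigwedge_{i\in[k]} (s\neq a_i \wedge t\neq a_i)\,\Bigr)\ \wedge\ \flipconn_{k,A_k}(s,t,a_1,\ldots,a_k),
\]
up to an \fo-definable adjustment for the boundary cases, and the first inclusion follows by substituting this formula for every $\conn_k$ atom. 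I expect the bookkeeping in the case analysis certifying the behaviour of $G\oplus_{\tup a}A_k$ to be the fiddliest part, but it is entirely mechanical.
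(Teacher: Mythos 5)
Your proof is correct and follows essentially the same route as the paper's: expressing $\conn_k$ via an explicit symmetric flip relation that isolates the parameters, noting that flip-connectivity predicates are exactly the symmetric special case of flip-reachability predicates, and appealing to \cref{lem:freach-in-lrmso} for the last inclusion. You spell out the relation $A_k$ in more detail than the paper (which merely asserts its existence) and you correctly flag the boundary case where $s$ or $t$ coincides with a parameter, which the paper's one-line remark silently elides.
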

\begin{proof}
 For the first point, it suffices to observe that there is a~symmetric $A\subseteq \atp^{k+1} \times \atp^{k+1}$ such that for every graph $G$ and $\tup a\in V(G)^k$, the flip $G\oplus_{\tup a} A$ is equal to $G$ with all edges incident to the vertices of $\tup a$ removed; then the predicate $\conn_k(s,t,a_1,\ldots,a_k)$ is equivalent to $\flipconn_{k,A}(s,t,a_1,\ldots,a_k)$. The second point is obvious: flip-connectivity predicates are special cases of flip-reachability predicates obtained by restricting attention to symmetric relations $A$. The third point follows immediately from \cref{lem:freach-in-lrmso}.
\end{proof}

Finally, we note that flip-connectivity logic has a strictly larger expressive power than separator logic. A distinguishing property is {\em{co-connectivity}}: connectivity of the complement of the graph.

\begin{proposition}
 There is a sentence of flip-connectivity logic that verifies whether a graph is co-connected. However, there is no such sentence in separator logic.
\end{proposition}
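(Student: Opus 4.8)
The positive half is a one-liner. I would take $k=0$ and let $A=\atp^1\times\atp^1$, which is symmetric, so that $\flipconn_{0,A}$ is defined. For any graph $G$ and the empty parameter tuple, $G\oplus_{\tup a}A$ (with $\tup a$ empty) is precisely the complement $\overline G$: every pair of $1$-types lies in $A$, so the xor flips every adjacency, and since $A$ is symmetric the result is undirected. Hence the sentence $\forall s\,\forall t\;\flipconn_{0,A}(s,t)$ holds in $G$ if and only if $\overline G$ is connected, i.e.\ if and only if $G$ is co-connected. For the negative half, the plan is to show that \emph{every} fixed separator-logic sentence $\varphi$ is fooled by the pair $G_n=\overline{C_n}$ and $H_n=\overline{C_n\sqcup C_n}=\overline{C_n}\ast\overline{C_n}$ for $n$ large: here $G_n$ is co-connected (its complement $C_n$ is connected), while $H_n$ is not (it is a join of two nonempty graphs, equivalently $\overline{H_n}=C_n\sqcup C_n$ is disconnected). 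The first step is the routine reformulation: if $\varphi$ has quantifier rank at most $q$ and mentions $\conn_k$ only for $k\le q$, then, reading $\varphi$ as an \fo sentence over the expansion $G^+$ of $G$ by the relations $R_k:=\setof{(s,t,a_1,\dots,a_k)}{\conn_k(s,t,a_1,\dots,a_k)\text{ holds in }G}$ for $k\le q$, we have $G\models\varphi\iff G^+\models\varphi$, still with quantifier rank $q$.

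The crux is that on $G_n$ and $H_n$ the relations $R_k$ carry no information beyond equality. I would check that for $n$ large enough both $G_n$ and $H_n$ are $(q{+}1)$-vertex-connected: in $G_n$ every vertex is non-adjacent to only two others, so deleting at most $q$ vertices leaves minimum degree exceeding half the order, hence a connected graph (Dirac-type); in $H_n=\overline{C_n}\ast\overline{C_n}$, deleting at most $q$ vertices leaves at least $n-q$ vertices on each side with all cross-edges present, again connected. Consequently, in both graphs and for every $k\le q$, $\conn_k(s,t,a_1,\dots,a_k)$ holds exactly when $s,t\notin\{a_1,\dots,a_k\}$, so each $R_k$ equals one fixed relation $Q_k$ that is definable using equality alone. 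Substituting the definition of $Q_k$ for each atom $R_k(\dots)$ turns $\varphi$ into a plain \fo sentence $\psi$ of quantifier rank $q$ with $G_n\models\varphi\iff G_n\models\psi$ and $H_n\models\varphi\iff H_n\models\psi$.

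To finish I would invoke the standard fact (Hanf's theorem for graphs of maximum degree $2$, or a direct Ehrenfeucht--Fra\"iss\'e argument) that $C_n$ and $C_n\sqcup C_n$ agree on all \fo sentences of quantifier rank at most $q$ once $n$ exceeds a threshold depending only on $q$: both are $2$-regular, every small ball is a path, and the unique local type occurs more than threshold-many times in each. Since complementation is a quantifier-free interpretation that keeps the vertex set fixed, $\overline{C_n}$ and $\overline{C_n\sqcup C_n}$ — that is, $G_n$ and $H_n$ — also agree on all \fo sentences of quantifier rank at most $q$; in particular $G_n\models\psi\iff H_n\models\psi$. Chaining the equivalences yields $G_n\models\varphi\iff H_n\models\varphi$, contradicting that $\varphi$ defines co-connectivity. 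As $\varphi$ was arbitrary, no separator-logic sentence defines co-connectivity.

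The main obstacle is engineering the pair $(G_n,H_n)$ to meet two competing demands at once: high enough connectivity that the $\conn_k$ predicates degenerate to equality tests (so that separator logic behaves like \fo on these instances), and at the same time \fo-indistinguishability of the two graphs together with $H_n$ genuinely violating co-connectivity. Complements of cycles hit exactly this sweet spot — dense and highly connected, yet locally trivial and stable under complementation, and the complement of the union of two cycles is a join. The remaining care is bookkeeping between the quantifier rank of $\varphi$ and the largest index of a $\conn_k$ predicate occurring in it, for which a single bound $q$ suffices.
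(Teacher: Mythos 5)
Your proposal is correct and follows essentially the same approach as the paper: the same sentence $\forall s\,\forall t\;\flipconn_{0,A}(s,t)$ with $A=\atp^1\times\atp^1$ for the positive half, and the same family $G_n=\overline{C_n}$, $H_n=\overline{C_n\sqcup C_n}$ for the negative half, with high vertex-connectivity trivializing all $\conn_k$ predicates to equality tests and a Hanf/Ehrenfeucht--Fra\"iss\'e argument finishing it off. You spell out the connectivity bound and the transfer through the quantifier-free complementation interpretation a bit more explicitly than the paper does, but the underlying argument is identical.
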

\begin{proof}
 For the sentence of flip-connectivity logic verifying co-connectivity, we can take \[\forall s\,\forall t\, \flipconn_{0,A}(s,t),\qquad\textrm{where }A=\atp^1\times \atp^1.\]
 We are left with proving that the property cannot be expressed in separator logic.

 For the sake of contradiction, suppose there is a sentence $\varphi$ of separator logic that holds exactly in graphs whose complements are connected. Since $\varphi$ is finite, there is some number $k\in \N$ such that all connectivity predicates present in $\varphi$ are of arity at most $k+2$. Now, for every integer $n>k+2$, consider the following two graphs: $G_n$ is the complement of a cycle on $n$ vertices, and $H_n$ is the complement of the disjoint union of two cycles on $n$ vertices. Note that $G_n$ is co-connected while $H_n$ is not, hence $\varphi$ holds in $G_n$ and does not hold in $H_n$. Further, it can be easily verified that both $G_n$ and $H_n$ are $k$-connected, that is, no pair of vertices can be disconnected by a $k$-tuple of other vertices. Hence, every predicate $\conn_\ell(s,t,a_1,\ldots,a_\ell)$, for $\ell\leq k$, is equivalent on $\{G_n,H_n\colon n>k+2\}$ to the first-order formula
 \[\bigwedge_{i\in [\ell]} (s\neq a_i \wedge t\neq a_i).\]
 By replacing all the connectivity predicates with the formulas above, we turn $\varphi$ into a first-order sentence $\varphi'$ such that for all $n>k+2$, $\varphi'$ holds in $G_n$ and does not hold in $H_n$. However, a standard argument based on Ehrenfeucht--Fra\"isse games for first-order logic shows that such a sentence $\varphi'$ distinguishing $G_n$ from $H_n$ does not exist.
\end{proof}

\section{A general framework for proving equivalence with low rank \mso}\label{sec:framework}

In the upcoming sections we will prove that low rank \mso has the same expressive power as separator logic on weakly sparse graph classes (\cref{thm:main-separator}), as flip-connectivity logic on graph classes with bounded VC-dimension (\cref{thm:main-fconn-positive}), and as flip-reachability logic on all graphs (\cref{thm:main-freach}). All these proofs follow a similar scheme: we prove that quantification over sets of low rank can be emulated in the considered logic. The goal of this section is to provide a framework that captures common features of all three proofs, reducing the task to verifying a concrete property of the considered logic on the considered graph classes.

This key property is captured in the following definition.

\begin{definition}
    \label{def:low-rank-definability}
    We say that a logic $\Ll$ has \emph{low rank definability property} on a class of (colored) graphs $\Cc$ if for every $r \in \N$ and formula $\varphi(X, \wtup Y, \tup z)$ of $\Ll$ with free set variables $X, \wtup Y$ and free vertex variables $\tup z$, there is a formula $\psi(x, \tup t, \wtup Y, \tup z)$ of $\Ll$ with the following property: For every $G \in \Cc$, evaluation $\wtup B$ of $\wtup Y$ and evaluation $\tup c$ of $\tup z$, if there is a set $A \subseteq V(G)$ with $\rk(A) \le r$ such that $G \models \varphi(A, \wtup B, \tup c)$, then there is a set $A' \subseteq V(G)$ and a tuple $\tup d \in V(G)^{\tup t}$ such that
    \[G \models \varphi(A', \wtup B, \tup c)\qquad \textrm{and}\qquad A' = \psi(G, \tup d, \wtup B, \tup c).\]
\end{definition}

\begin{theorem}
    \label{thm:low-rank-quantifier-elimination}
    Suppose a logic $\Ll$ is an extension of \fo and has low rank definability property on a class of (colored) graphs $\Cc$. Then for every formula of low rank \mso there is a formula of $\Ll$ so that the two formulas are equivalent on all graphs from $\Cc$.
\end{theorem}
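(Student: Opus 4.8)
\emph{Overall approach.} The plan is to prove the statement by structural induction on the low rank \mso formula $\Phi$, producing an $\Ll$-formula $\Phi'$ with the same free variables that is equivalent to $\Phi$ on all graphs from $\Cc$. The only case requiring an idea is that of the low-rank set quantifier, and there \cref{def:low-rank-definability} does precisely the work needed. Throughout, I use that $\Ll$, being an extension of \fo, is closed under Boolean connectives, first-order quantification, and substitution of an $\Ll$-definable set for a free set variable (that is, replacing each membership atom $x\in X$ by $\psi(x,\ldots)$ for an $\Ll$-formula $\psi$); these closure properties are evident for \fo and for all the extensions considered in the paper, which are \fo enriched with additional atomic predicates on vertices.

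\emph{Easy cases.} Atomic formulas $E(x,y)$, $U(x)$, and $x\in X$ are already formulas of \fo, hence of $\Ll$, and translate to themselves. If $\Phi$ is a Boolean combination of, or a first-order quantification over, subformulas $\Phi_1$ (and possibly $\Phi_2$), I apply the induction hypothesis to the subformulas and recombine the resulting $\Ll$-formulas with the same connective or first-order quantifier; equivalence on $\Cc$ is immediate. The universal low-rank quantifier is handled via $\forall X\colon r\ \Phi \equiv \neg\,\exists X\colon r\,\neg\Phi$, so it remains to treat $\Phi=\exists X\colon r\ \varphi(X,\wtup Y,\tup z)$.

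\emph{The low-rank set quantifier.} First I apply the induction hypothesis to $\varphi$, obtaining an $\Ll$-formula $\varphi'(X,\wtup Y,\tup z)$ equivalent to $\varphi$ on $\Cc$ for all evaluations of the free variables, including $X$. Let $\varphi_{r,\F_2}(X)$ be the \fo formula from \cref{lem:inter-def} testing $\rk(X)\le r$, and set $\chi(X,\wtup Y,\tup z)\coloneqq \varphi_{r,\F_2}(X)\wedge\varphi'(X,\wtup Y,\tup z)$, which is an $\Ll$-formula. Then for every $G\in\Cc$ and every $A\subseteq V(G)$ we have $G\models\chi(A,\wtup B,\tup c)$ if and only if $\rk(A)\le r$ and $G\models\varphi(A,\wtup B,\tup c)$. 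Now I invoke the low rank definability property of $\Ll$ on $\Cc$ for the integer $r$ and the formula $\chi$, obtaining an $\Ll$-formula $\psi(x,\tup t,\wtup Y,\tup z)$ as in \cref{def:low-rank-definability}. Let $\chi^\ast(\tup t,\wtup Y,\tup z)$ be the $\Ll$-formula obtained from $\chi$ by substituting the definable set $\psi(\cdot,\tup t,\wtup Y,\tup z)$ for $X$ (choosing $\tup t$ fresh). I claim that $\exists\tup t\ \chi^\ast(\tup t,\wtup Y,\tup z)$ is equivalent to $\exists X\colon r\ \varphi(X,\wtup Y,\tup z)$ on $\Cc$, which finishes the induction. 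For the forward implication, if some $A$ with $\rk(A)\le r$ satisfies $G\models\varphi(A,\wtup B,\tup c)$, then $G\models\chi(A,\wtup B,\tup c)$, so by the low rank definability property there are $A'$ and $\tup d$ with $G\models\chi(A',\wtup B,\tup c)$ and $A'=\psi(G,\tup d,\wtup B,\tup c)$, hence $G\models\chi^\ast(\tup d,\wtup B,\tup c)$. For the backward implication, if $G\models\chi^\ast(\tup d,\wtup B,\tup c)$ for some $\tup d$, put $A'\coloneqq\psi(G,\tup d,\wtup B,\tup c)$; then $G\models\chi(A',\wtup B,\tup c)$, which by the shape of $\chi$ forces both $\rk(A')\le r$ and $G\models\varphi(A',\wtup B,\tup c)$, so $A'$ witnesses $\exists X\colon r\ \varphi$.

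\emph{Main difficulty.} There is essentially no obstacle beyond careful bookkeeping: the proof is a routine induction whose single substantive step is the low-rank set quantifier, and there the content is entirely delegated to \cref{def:low-rank-definability}. The one point that genuinely needs care is that the low rank definability property, as stated, does not guarantee that the witness set $A'$ it produces has rank at most $r$; this is why I conjoin the \fo rank-check formula $\varphi_{r,\F_2}$ from \cref{lem:inter-def} into $\chi$ \emph{before} applying the property, so that the mere fact ``$G\models\chi(A',\wtup B,\tup c)$'' already entails ``$\rk(A')\le r$'' and $A'$ is a legitimate witness on the \mso side. The only other thing to verify is the closure of $\Ll$ under the operations used silently above (Boolean operations, first-order quantification, and substitution into membership atoms), which holds for \fo, separator logic, flip-connectivity logic, and flip-reachability logic.
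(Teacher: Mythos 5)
Your proof is correct and follows essentially the same route as the paper's: induction on formula structure, conjoining the \fo rank-check formula from \cref{lem:inter-def} into the subformula so that the low rank definability property's witness $A'$ is guaranteed to have rank at most $r$, and then replacing the set quantifier by first-order quantification over $\tup t$. You merely spell out the final step (forming $\chi^\ast$ by substitution and quantifying $\exists\tup t$) in more detail than the paper's terse ``we can express the formula \eqref{eq:existential-formula} in $\Ll$.''
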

\begin{proof}
    We show that over our class of graphs $\Cc$, every formula of low rank \mso is equivalent to some formula of $\Ll$ by induction on the structure of a formula of low rank \mso.
    The only interesting part of the induction is low rank set quantification, since all other formula constructors of low rank \mso are already present in \fo.
    Consider a formula that begins with such a quantifier:
    \begin{align}
    \label{eq:existential-formula}
    \exists X \colon r\quad \varphi(X, \wtup Y, \tup z).
    \end{align}
    By induction assumption, the inner formula $\varphi$ can be expressed in $\Ll$.
    We may also assume that for every graph $G$, a set $A \subseteq V(G)$, and evaluations $\wtup B$ of $\wtup Y$ and $\tup c$ of $\tup z$, the condition $G \models \varphi(A, \wtup B, \tup c)$ implies that the rank of $A$ is at most $r$.
    Indeed, by \Cref{lem:inter-def} we know that the rank of a set $A$ in a graph $G$ is at most $r$ can be verified in \fo, and we can always add this assertion to $\varphi$.
    Then, by low rank definability property we have a formula $\psi(x, \tup t, \wtup Y, \tup z)$ of $\Ll$ such that for any graph $G \in \Cc$, whenever there exists a set $A \subseteq V(G)$ of rank at most $r$ satisfying $G \models \varphi(A, \wtup B, \tup c)$ for some evaluation $\wtup B$ of $\wtup Y$ and $\tup c$ of $\tup z$, then we have a tuple $\tup d \in V(G)^{\tup t}$ and a set $A' = \psi(G, \tup d, \wtup B, \tup c)$ such that $G \models \varphi(A', \wtup B, \tup c)$.
    Since the satisfaction of $\varphi$ implies that the rank is at most $r$, we can express the formula \eqref{eq:existential-formula} in $\Ll$.
\end{proof}

Next, we propose tools for streamlining the verification of the low rank definability property for the considered extensions of \fo. Intuitively, for each considered extension and relevant graph class $\Cc$, we will show a structure theorem showing that every set $A$ of low rank in a graph $G\in \Cc$ admits a certain structure guarded by a small set of parameters $S\subseteq V(G)$. It will be useful to consider $A$ in a suitable ``simplification'' of~$G$, which we call the {\em{$S$-operation}} of $G$. The definition of $S$-operation depends on the considered logic, as explained formally~below.

\begin{definition}
    Let $\Ll \in \set{\text{separator logic}, \text{flip-connectivity logic}, \text{flip-reachability logic}}$, $G$ be a colored graph and $S \subseteq V(G)$ be a subset of its vertices.
    We say that a graph $G'$ is an \emph{$S$-operation} with respect to $\Ll$ of $G$ if the following holds:
    \begin{itemize}[nosep]
        \item If $\Ll$ is separator logic, then $G'$ is the graph obtained from $G$ by isolating vertices from $S$ (i.e. removing all edges incident to vertices in $S$). Further, for each $s\in S$ add a unary predicate that marks the neighborhood of $S$ in $G$, and a unary predicate that marks only $s$.
        \item If $\Ll$ is flip-connectivity logic, then $G'$ is a symmetric flip of $G$ with parameters $S$. Further, for every atomic type over $S$ in $G$ (with an arbitrary fixed enumeration of $S$), add a unary predicate that marks the vertices of $G$ of this type.
        \item If $\Ll$ is flip-reachability logic, and $G'$ is a flip of $G$ with parameters $S$. Again, for every atomic type over $S$ in $G$, add a unary predicate that marks the vertices of $G$ of this type.
    \end{itemize}
\end{definition}

Observe that if we fix an arbitrary enumeration of $S$, then the unary predicates that we add to $G'$ (but not their interpretations) depend only on $|S|$.
Therefore, for a given class of graphs $\Cc$ we may talk about the language of $k$-operations, for a fixed $k \in \N$.
This is the language that consists of all the relations used in graphs from $\Cc$ and all the unary predicates added to $S$-operations of graphs from $\Cc$ for $S$ of size $k$.

Next, for each considered extension $\Ll$ of \fo, we  show that we can freely translate $\Ll$-formulas working on the graph and on its $S$-operation.

\begin{lemma}
    \label{lem:operation-forawrd}
    Let $\Ll \in \set{\text{separator logic}, \text{flip-connectivity logic}, \text{flip-reachability logic}}$, $G$ be a graph, $S \subseteq V(G)$ be a subset of its vertices, $G'$ be an $S$-operation with respect to $\Ll$ of $G$, and $\varphi(\wtup X, \tup y)$ be a formula of $\Ll$ with free set variables $\wtup X$ and free vertex variables $\tup y$. Then there is a formula $\varphi'(\wtup X, \tup y)$ of $\Ll$, depending only on $\varphi$ and $|S|$, such that for every evaluation $\wtup A$ of $\wtup X$ and $\tup b$ of $\tup y$ we have
    \[
        G \models \varphi(\wtup A, \tup b) \qquad\textrm{if and only if}\qquad G' \models \varphi'(\wtup A, \tup b).
    \]
\end{lemma}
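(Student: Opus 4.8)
The plan is to proceed by induction on the structure of $\varphi$, where the only interesting cases are the atomic formulas specific to $\Ll$ (the connectivity, flip-connectivity, or flip-reachability predicates) and the edge relation $E$; Boolean connectives and first-order quantifiers are handled trivially by the induction hypothesis, since the universe of $G'$ equals the universe of $G$. For the edge relation, observe that in each of the three cases, whether $\vec{uv}$ (or $uv$) is an edge of $G'$ is determined by whether $uv \in E(G)$ together with membership of $u$ and $v$ in the unary predicates added to $G'$: for separator logic it depends on whether $u$ or $v$ lies in $S$ (marked by the singleton predicates); for the two flip logics it depends on $uv\in E(G)$ xor-ed with whether the pair of atomic types $(\atp(u,\tup a),\atp(v,\tup a))$ lies in $A$, and those atomic types are exactly what the added unary predicates record. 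Hence $E(x,y)$ in $G'$ is equivalent to a quantifier-free formula over the vocabulary of $G'$ plus $E$; symmetrically, $E(x,y)$ in $G$ can be recovered from the data in $G'$, so we may pass edge atoms back and forth. Crucially, the enumeration of $S$ is fixed, so the Boolean combination that defines this translation depends only on $|S|$ (and $A$, which is part of the formula), not on $G$ or the interpretation of $S$.

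The heart of the argument is that the reachability/connectivity structure of $G$ coincides with that of $G'$ once we allow the parameters to be named. For separator logic: a predicate $\conn_\ell(s,t,\bar c)$ asks for an $s$–$t$ path in $G$ avoiding $\bar c$; in $G' = G \ominus S$ (all edges at $S$ deleted), this is not the same graph, but we only ever need to translate in the direction where the free set variables are evaluated to sets we have control over in the framework, and in fact for this lemma the point is that $G'$ retains full information about $G$: the neighborhood of $S$ and the singleton of each $s\in S$ are marked, so adjacency to $S$ can be reconstructed, and therefore connectivity in $G$ avoiding any $\ell$-tuple of vertices is first-order-plus-$\conn$ expressible over $G'$ — one guesses which vertices of the path are in $S$, and stitches together sub-paths in $G - S$ (which equals $G'$ minus the marked edges, recoverable from $G'$) through the named elements of $S$, whose adjacencies are known from the marks. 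For the flip logics the situation is cleaner: $G'$ is literally the flip $G \oplus_{\tup a} A$ (with the atomic-type marks), so $\flipconn_{k,A}(s,t,\tup a)$ in $G$ becomes plain reachability/connectivity in $G'$, which is itself expressible as $\flipconn_{k',A'}$ or $\flipreach_{k',A'}$ in $G'$ with the empty flip and no parameters — and any other flip predicate $\flipconn_{j,B}(\cdot)$ evaluated in $G$ can be rewritten as a flip predicate evaluated in $G'$, because flipping $G'$ back by $A$ and then forward by $B$ is again a flip of $G$ (flips compose), and composing flips only changes the relation $A$ and may enlarge the parameter tuple by the fixed set $S$, so the resulting predicate lives in $\Ll$ and depends only on the original formula and $|S|$.

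The main obstacle I expect is the separator-logic case, precisely because deleting the edges at $S$ is not a reversible operation at the level of the graph the way a flip is: to simulate a connectivity query of $G$ inside $G'$ one must carefully route paths through the named vertices of $S$, and to go the other direction (simulate a query of $G'$ inside $G$) one must forbid ever using an edge incident to $S$, which is not directly a separator predicate but can be arranged since $S$ is a fixed finite set of named vertices. Making the bookkeeping uniform — so that a single formula works for all $G$ and all choices of $S$ of the given size — requires quantifying over the (boundedly many) ways a path can interleave with $S$ and using the singleton marks to pin down which named vertex is which; this is routine but notationally heavy. Once the atomic cases are dispatched, the induction closes immediately, and the resulting $\varphi'$ depends only on $\varphi$ and $|S|$ as claimed.
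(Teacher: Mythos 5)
Your proposal follows essentially the same route as the paper's proof: induct on formula structure, translate edge atoms via the new unary predicates, handle separator‐logic connectivity by guessing the interleaving of the path with $S$ and stitching subpaths in $G-S$ (the paths are recoverable in $G'$ since $G'-S=G-S$), and handle the flip predicates by observing that $G$ can be recovered from $G'$ as a flip.

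One small point of imprecision worth noting is your treatment of the flip cases. You say that composing the two flips ``may enlarge the parameter tuple by the fixed set $S$.'' Taken literally, this would be a problem for the statement of the lemma, because $\varphi'$ is required to be a fixed formula of $\Ll$ that depends only on $\varphi$ and $|S|$ (not on the actual vertices of $S$), so the elements of $S$ cannot appear as hard‐coded parameters of a flip predicate. The paper sidesteps this cleanly: since the unary predicates of $G'$ record the atomic types over $S$ in $G$, the original graph $G$ is a flip of $G'$ with \emph{zero} parameters, and hence every flip of $G$ with parameters $\tup c$ is a flip of $G'$ with the same parameters $\tup c$ and an appropriately transformed relation. (Alternatively, your version can be repaired: the atomic type of $s_i\in S$ over $\tup s$ contains an equality atom, so each $s_i$ is identified by a unique unary predicate in $G'$, and $\varphi'$ may existentially quantify the elements of $S$ via those marks before invoking the flip predicate with the enlarged parameter list.) Either patch is routine, so your plan is sound; it just glosses over why the dependence really is only on $|S|$.
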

\begin{proof}
    We start with the case when $\Ll$ is separator logic.
    Observe that in $G'$ we can define the adjacency relation of $G$ as a quantifier-free formula that uses the unary predicates for the neighborhoods of vertices in $S$ and the unary predicates for the individual vertices of $S$.
    Indeed, if we want to check whether $G \models E(u, v)$ for some vertices $u, v \in V(G)$, then first we check if any of them is in $S$.
    If not, then we know $G \models E(u, v)$ if and only if $G' \models E(u, v)$.
    In the other case, i.e.\ when (at least) one of them is in $S$, we check if the other one is marked with the respective unary predicate.

    Next, observe that we can define each connectivity predicate of $G$ in $G'$.
    Indeed, to check whether $G \models \conn_k(u, v, \tup a)$ for some vertices $u, v \in V(G)$ and vertices $\tup a \in V(G)^k$, we can guess (by making a disjunction over all cases) if the path goes through vertices from $S$ and in which order they appear.
    Then in $G'$ we can express that there are consecutive paths between $u$, guessed vertices from $S$, and $v$ after removing $\tup a$.
    In this way we get a disjunction over quantifier-free formulas that use connectivity predicates in $G'$.

    To obtain the formula $\varphi'$ we replace all occurrences of the edge relation and connectivity predicates in $\varphi$ with the respective quantifier-free formulas.

    The cases of flip-connectivity logic and flip-reachability logic are similar.
    The only interesting case is when we need to express flip-connectivity (respectively flip-reachability) predicates of $G$ in $G'$.
    For this note that since we added unary predicates to $G'$, $G$ is a symmetric flip (respectively flip) of $G'$ without parameters.
    So every symmetric flip (respectively flip) of $G$ is a symmetric flip (respectively flip) of~$G'$.
\end{proof}

\begin{lemma}
    \label{lem:operation-backward}
    Let $\Ll \in \set{\text{separator logic}, \text{flip-connectivity logic}, \text{flip-reachability logic}}$, $G$ be a graph, $S \subseteq V(G)$ be a subset of its vertices, $\tup s$ be a tuple enumerating all the elements of $S$, $G'$ be an $S$-operation of $G$ with respect to $\Ll$, and $\varphi(\wtup X, \tup y)$ be a formula of $\Ll$ with free set variables $\wtup X$ and free vertex variables $\tup y$.
    Then there is a formula $\varphi'(\wtup X, \tup y, \tup z)$ of $\Ll$, depending only on $\varphi$ and $|S|$ with $|\tup z| = |\tup s|$, such that for every evaluation $\wtup A$ of $\wtup X$ and $\tup b$ of $\tup y$ we have
    \[
        G' \models \varphi(\wtup A, \tup b) \qquad\textrm{if and only if}\qquad G \models \varphi'(\wtup A, \tup b, \tup s).
    \]
\end{lemma}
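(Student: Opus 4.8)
The plan is to mirror the proof of \Cref{lem:operation-forawrd}, only now translating in the opposite direction: we want to express the atomic relations of $G'$ — its edge relation and its connectivity/flip-connectivity/flip-reachability predicates — inside $G$, using the tuple $\tup s$ enumerating $S$ as extra vertex parameters. Once every atomic relation of $G'$ is rewritten as an $\Ll$-formula over $G$ with parameters $\tup s$ (and a suitable substitution of quantifier-free formulas for the unary predicates that $G'$ carries but $G$ does not), we obtain $\varphi'$ by performing these substitutions throughout $\varphi$; since all of $\Ll$'s quantifiers range over vertices, no care beyond syntactic substitution is needed.

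First I would handle the unary predicates. In each of the three cases, the predicates added to $G'$ mark vertices by their ``interaction pattern'' with $S$: in the separator case, membership in the neighborhood of $S$ or being a specific $s \in S$; in the flip cases, the atomic type $\atp(v, \tup s)$ in $G$. Each such predicate is definable in $G$ by a quantifier-free formula using the edge relation, the original unary predicates, and the parameters $\tup z := \tup s$ — e.g. the predicate for atomic type $\tau$ is the conjunction of literals asserting $\atp(x, \tup s) = \tau$. So every occurrence of an added unary predicate $U(x)$ in $\varphi$ is replaced by the corresponding quantifier-free formula over $G$ with parameters $\tup z$.

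Next I would handle the edge relation of $G'$. In the separator case, $\vec{uv} \in E(G')$ (equivalently $uv \in E(G')$) holds iff $uv \in E(G)$ and neither $u$ nor $v$ lies in $S$; this is expressible over $G$ with parameters $\tup z$ by combining $E(u,v)$ with inequalities $u \neq z_i$, $v \neq z_i$. In the flip cases, $G'$ is $G \oplus_{\tup s} A$ for some fixed $A$ depending only on $\varphi$'s context, so $\vec{uv} \in E(G')$ iff $[uv \in E(G)]\ \textrm{xor}\ [(\atp(u,\tup s),\atp(v,\tup s)) \in A]$, which is again a quantifier-free formula over $G$ with parameters $\tup z$, obtained by expanding the atomic-type tests as above and taking the disjunction over the finitely many pairs in $A$. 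Finally I would handle the connectivity-style predicates of $G'$. Here the cleanest route is: for the flip cases, observe (as in \Cref{lem:operation-forawrd}) that since $G$ carries the unary predicates recording atomic types over $S$, $G'$ itself is a flip (resp. symmetric flip) of $G$ with parameters drawn from $\tup s$ together with the flip relation $A$ used to define $G'$; composing two flips with parameter sets $S$ and $\tup a$ yields a flip with parameter set $S \cup \tup a$, whose defining relation is computable from the two given ones and $|S|+|\tup a|$. Hence every predicate $\flipreach_{k,A'}(s,t,\tup a)$ evaluated in $G'$ equals some $\flipreach_{k+|S|,A''}(s,t,\tup a,\tup s)$ evaluated in $G$, and symmetrically for $\flipconn$. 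For the separator case, a $\conn_k$ predicate in $G'$ is rewritten in $G$ by the same path-decomposition trick as in \Cref{lem:operation-forawrd}: a path in $G'$ avoiding $\tup a$ corresponds, by the structure of the separator $S$-operation, to a sequence of paths in $G$ through guessed vertices of $S$ while avoiding $\tup a$ and $S$ appropriately; this is captured by a finite disjunction of quantifier-free formulas over $G$ using $\conn$ predicates and the parameters $\tup z = \tup s$.

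The main obstacle I anticipate is bookkeeping in the flip cases: one must check carefully that the composition of the flip turning $G$ into $G'$ with a further flip applied ``inside $G'$'' is itself a legitimate flip of $G$, i.e. that the class refinement induced by $S \cup \tup a$ behaves correctly and that the resulting flip relation $A''$ is well-defined and depends only on $|S|$ and the data of $\varphi$. This requires the standard but slightly delicate observation that flips compose, together with the fact that atomic types over $S$ in $G$ are recorded as unary predicates, so that conditioning on $\atp(v,\tup s)$ is available ``for free'' in $G'$. The separator case is routine once the path-splitting argument of \Cref{lem:operation-forawrd} is run in reverse, noting that in the $S$-operation the only edges destroyed are those incident to $S$, so a walk in $G'$ is exactly a walk in $G$ that uses no vertex of $S$ — and one recovers full connectivity in $G$ by allowing the walk to be stitched through guessed vertices of $S$, each of which must itself be checked not to lie among the forbidden vertices $\tup a$.
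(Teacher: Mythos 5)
Your plan matches the paper's proof in substance, and the flip-logic cases are handled correctly by flip composition: since $G'$ is itself a flip of $G$ with parameters $\tup s$, a further flip of $G'$ with parameters $\tup a$ composes to a flip of $G$ with parameters $\tup a\tup s$, so $\flipreach_{k,A'}(s,t,\tup a)$ in $G'$ becomes $\flipreach_{k+|S|,A''}(s,t,\tup a\tup s)$ in $G$ (the unary predicates of $G'$, which record atomic types over $\tup s$ in $G$, are quantifier-free definable over $G$ with parameters $\tup z=\tup s$). Note, though, a small slip: it is $G'$, not $G$, that carries those unary predicates; what you mean is that they are \emph{definable} in $G$ from $\tup s$.

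Your treatment of the separator case, however, is internally inconsistent and partly wrong. You first claim that ``a path in $G'$ avoiding $\tup a$ corresponds to a sequence of paths in $G$ through guessed vertices of $S$,'' and frame this as the same path-decomposition trick as in \cref{lem:operation-forawrd}. That is backwards: guessing and stitching through vertices of $S$ is exactly what is needed for the \emph{forward} direction (expressing $\conn$ of $G$ in $G'$, where those vertices are isolated), not here. For the backward direction the situation is simpler, and you actually state the correct fact later: since the $S$-operation merely isolates $S$, a path in $G'$ that avoids $\tup a$ is precisely a path in $G$ that avoids $\tup a$ \emph{and} $S$. Hence the whole translation is the one-line replacement $\conn_k(u,v,\tup a) \mapsto \conn_{k+|S|}(u,v,\tup a\tup s)$ (with the trivial handling of endpoints in $S$), with no guessing or disjunction over stitchings. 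Drop the path-decomposition language here entirely; keeping it would lead you to write a more complicated and in fact incorrect formula.
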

\begin{proof}
    We start with the case when $\Ll$ is separator logic.
    The proof is similar to the proof of \cref{lem:operation-forawrd}.
    We clearly see that we can define the adjacency relation of $G'$ in $G$ using the provided tuple $\tup s$.
    The same is true for the unary predicates marking the neighborhoods of vertices from $S$ in $G'$.
    Finally, for each connectivity predicate $\conn_k(u, v, \tup a)$ in $G'$ we can replace it with predicate $\conn_{k+|S|}(u, v, \tup a\tup s)$ of $G$.

    The cases of flip-connectivity logic and flip-reachability logic are similar.
\end{proof}

Intuitively,
\cref{lem:operation-forawrd,lem:operation-backward} allow us to reduce verification of the low rank definability property from the original graph to its $S$-operation, for any set of parameters $S$ of bounded size. As the $S$-operation will be a much simpler graph, arguing definability there is significantly easier. We will use this scheme in the proofs of \cref{thm:main-separator,thm:main-fconn-positive}. In the proof of \cref{thm:main-freach}, low rank definability property will be argued directly.
\section{Relation to separator logic}
\label{sec:separator}

In this section we prove \cref{thm:main-separator}. By \cref{thm:low-rank-quantifier-elimination}, it suffices to show that for every weakly sparse graph class $\Cc$, separator logic has low rank definability property on $\Cc$. The plan is as follows. We first prove a combinatorial characterization of low rank sets in weakly sparse graph classes. Next, we use this characterization to prove low rank definability property.

For the characterization, we prove that every low rank set in a graph from a weakly sparse graph class is very close to a (vertex) separation of small order. Let us first recall some graph-theoretic terminology. A {\em{separation}} of a graph $G$ is a pair $(L,R)$ of vertex subsets such that $L\cup R=V(G)$ and there is no edge with one endpoint in $L\setminus R$ and the other in $R\setminus L$; see \cref{fig:sep}. The {\em{order}} of the separation $(L,R)$ is $|L\cap R|$, i.e. the number of vertices the two sides have in common. We shall say that a set of vertices $X\subseteq V(G)$ is {\em{captured}} by a separation $(L,R)$ if we have
\[L\setminus R \subseteq X\subseteq L.\]
That is, $X$ has to contain all the vertices that lie strictly in the left side of the separation (set $L\setminus R$) and may additionally contain a subset of the separator (set $L\cap R$).

	\begin{figure}
\centering
		\includegraphics[page=2,scale=0.22]{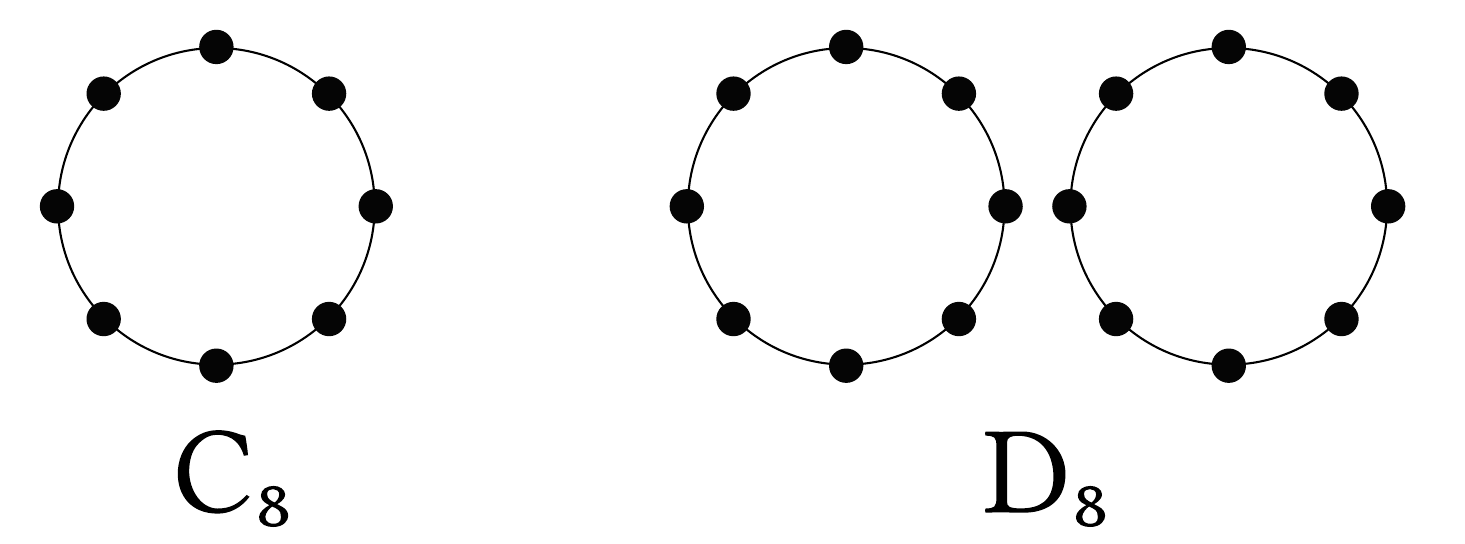}
		\caption{A separation.}\label{fig:sep}

	\end{figure}

In this terminology, the combinatorial characterization  reads as follows.

\begin{lemma}\label{lem:characterization-separator}
 Let $G$ be a graph that does not contain the complete bipartite graph $K_{t,t}$ as a subgraph. Let $X$ be a set of rank at most $r$ in $G$. Then $X$ is captured by some separation of $G$ of order at most $2^{r+1}(t-1)$.
\end{lemma}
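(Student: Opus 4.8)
The plan is to combine two features of a low-rank set $X$: towards the complement $\wh X$ its vertices realize only few distinct neighbourhoods, and between any two such ``neighbourhood classes'' the adjacency is completely homogeneous, so the $K_{t,t}$-freeness of $G$ forces one of the two classes involved to be small. The union of all such small classes will then serve as the separator.

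First I would set $M \coloneqq \Adj_G[X,\wh X]$ and record a diversity bound: since $\rk(M)\le r$, every row of $M$ lies in the $\F_2$-span of the rows of $M$, a space with at most $2^r$ elements, so $M$ has at most $2^r$ distinct rows, and symmetrically at most $2^r$ distinct columns (this is exactly the computation inside the proof of \cref{lem:equiv-measures}). I then partition $X$ into \emph{twin classes} according to the rows of $M$, and $\wh X$ into twin classes according to the columns, and let $\mathcal{P}$ (resp.\ $\mathcal{Q}$) collect those classes whose vertices actually have a neighbour on the other side; thus $|\mathcal{P}|,|\mathcal{Q}|\le 2^r$. The key structural observation is that for $P\in\mathcal{P}$ and $Q\in\mathcal{Q}$ the bipartite graph between $P$ and $Q$ in $G$ is either complete or empty, which is immediate from the definition of twin classes. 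Consequently, by $K_{t,t}$-freeness, whenever $P$--$Q$ is complete we have $\min(|P|,|Q|)\le t-1$. Now define $S$ to be the union of all classes in $\mathcal{P}\cup\mathcal{Q}$ of size at most $t-1$. Every edge of $G$ between $X$ and $\wh X$ runs between some $P\in\mathcal{P}$ and some $Q\in\mathcal{Q}$ with $P$--$Q$ complete, hence has an endpoint in a class of size $\le t-1$, i.e.\ an endpoint in $S$; so $S$ meets every edge across the cut $(X,\wh X)$. Finally I would take $L\coloneqq X\cup S$ and $R\coloneqq \wh X\cup S$: then $L\cup R=V(G)$, there is no edge between $L\setminus R=X\setminus S$ and $R\setminus L=\wh X\setminus S$, the order is $|L\cap R|=|S|\le |\mathcal{P}|(t-1)+|\mathcal{Q}|(t-1)\le 2^{r+1}(t-1)$, and $X$ is captured because $X\setminus S\subseteq X\subseteq X\cup S$.

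The one place where care is needed --- and where a careless argument would lose the exponent --- is the definition of $S$. If instead one went through the complete pairs $(P,Q)$ one at a time and added the smaller side of each to $S$, the bound would degrade to roughly $|\mathcal{P}|\cdot|\mathcal{Q}|\cdot(t-1)=2^{2r}(t-1)$; the saving comes from adding each small class only once, regardless of how many complete pairs it participates in, which is legitimate precisely because a class of size $\ge t$ can be complete-adjacent only to small classes. The remaining steps --- verifying homogeneity of the adjacency between twin classes, and disposing of the degenerate cases $X=\emptyset$ or $\wh X=\emptyset$ --- are routine, and I would not expect any difficulty there.
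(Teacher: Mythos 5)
Your proof is correct and is essentially the same argument as the paper's. Where you speak of twin classes of size at most $t-1$ and of homogeneous (complete-or-empty) adjacency between classes, the paper equivalently speaks of "non-frequent" rows/columns of $\Adj_G[X,\wh X]$ (those appearing at most $t-1$ times) and observes that the entry at the intersection of a frequent row and a frequent column must be $0$; your separator $S$ is the paper's $L\cap R$ up to the harmless refinement of discarding all-zero rows and columns, and the counting ($2^r$ classes per side, $\le t-1$ vertices per small class) is identical.
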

\begin{proof}
 Denote $\wh X\coloneqq V(G)\setminus X$ and consider the bipartite adjacency matrix $M\coloneqq \Adj_G[X,\wh X]$. Call a row of $M$ {\em{frequent}} if it appears in $M$ at least $t$ times, that is, there are at least $t-1$ other rows equal to it. Define {\em{frequent columns}} of $M$ analogously. Note that the entry at the intersection of every frequent row and every frequent column must be $0$, for otherwise the at least $t$ equal rows and the at least $t$ equal columns would induce a $t\times t$ submatrix of $M$ entirely filled with $1$s, which in turn would yield a $K_{t,t}$ subgraph in $G$. This proves that the following sets $L$ and $R$ form a separation of $G$:
 \begin{itemize}[nosep]
  \item $L$ comprises all the vertices of $X$ and those vertices of $\wh X$ whose columns are non-frequent; and
  \item $R$ comprises all the vertices of $\wh X$ and those vertices of $X$ whose rows are non-frequent.
 \end{itemize}
 Clearly, $(L,R)$ captures $X$. Further, $L\cap R$ comprises all the vertices of $X$ whose rows are non-frequent and all the vertices of $\wh X$ whose columns are non-frequent. Since $\rk(X)\leq r$, the rank of $M$ over $\F_2$ is at most $r$, hence $M$ has at most $2^r$ distinct rows and at most $2^r$ distinct columns. Since every non-frequent row occurs at most $t-1$ times in $M$, $M$ has at most $2^r(t-1)$ non-frequent rows; and similarly $M$ has at most $2^r(t-1)$ non-frequent columns. We conclude that $|L\cap R|\leq 2^{r+1}(t-1)$, as required.
\end{proof}

In the next lemma we verify the low rank definability property for $r=0$ and formulas without additional free variables. This is the key step of the proof, where we use the idempotence property of separator logic in order to turn set quantification into first-order quantification. Note here that if $X$ is a set of rank $0$ in a graph $G$, then every connected component of $G$ is either entirely contained in $X$ or is disjoint with $X$. Hence, $X$ is the union of the vertex set of a collection of connected components of $G$.

\begin{lemma}\label{lem:rank0-separator}
 Let $\varphi(X)$ be a formula of separator logic with one free set variable $X$. Then there exists a formula $\varphi'(x,\tup t)$ of separator logic, where $x$ and $\tup t$ are free vertex variables, such that the following holds. For every graph $G$, if there exists a set $A\subseteq V(G)$ of rank $0$ such that $G\models \varphi(A)$, then there exists $A'\subseteq V(G)$ of rank $0$ and an evaluation $\tup d$ of variables $\tup t$ such that
 \[G \models \varphi(A')\qquad \textrm{and}\qquad A' = \varphi'(G, \tup d).\]
\end{lemma}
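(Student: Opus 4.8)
The plan is to exploit the defining feature of a rank-$0$ set $A$ in a graph $G$: it is exactly a union of some connected components of $G$. Writing $(G,A)$ for the structure $G$ expanded by a unary predicate marking $A$, this means that if $A$ has rank $0$, then $(G,A)$ is the disjoint union of its ``component structures'', each being a connected component $C$ of $G$ with $A$ interpreted as either $V(C)$ or $\emptyset$. So I would first establish a Feferman--Vaught-style composition theorem for separator logic over disjoint unions: the separator-logic type up to quantifier rank $q$ and connectivity arity $k$ of $G_1\sqcup G_2$ depends only on the corresponding types of $G_1$ and $G_2$. The only point that goes beyond the standard \fo argument is checking that the connectivity predicates are \emph{component-local}: $\conn_j(s,t,\tup a)$ is false whenever $s$ and $t$ lie in different parts, and otherwise reduces to a lower-arity connectivity predicate of the single part containing $s$ and $t$. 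With this in hand, the Ehrenfeucht--Fra\"iss\'e argument goes through, so these types form a finite commutative monoid under $\sqcup$; this is where ``idempotence'' enters, since — exactly as for \fo — the powers of each type stabilize, so adding further components of a fixed type beyond some threshold $N=N(\varphi)$ does not change the type.

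Combining these observations, whether $G\models\varphi(A)$ for a rank-$0$ set $A$ depends only on the multiset $\{(\mathrm{tp}(C),[C\subseteq A])\}_{C}$, where $\mathrm{tp}(C)$ ranges over the finitely many separator-$(q,k)$-types of connected colored graphs and multiplicities are capped at $N$ (here I use that replacing the membership atom $X(y)$ by $\top$ or $\bot$ shows the type of an ``all-in'' or ``all-out'' component is a function of $\mathrm{tp}(C)$ alone). Two further facts are needed. First, for each type $\tau$ there is a separator-logic formula $\theta_\tau(x)$ holding exactly when the connected component of $x$ has type $\tau$: take the sentence $\chi_\tau$ characterizing $\tau$ and relativize its quantifiers to $\{y:\conn_0(x,y)\}$, which is sound because any $s$--$t$ path avoiding some avoided set inside the component of $x$ exists in $G$ iff it exists in $G$ restricted to that component. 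Second, since a rank-$0$ witness $A$ with $G\models\varphi(A)$ exists, its capped multiset type is in particular realizable in $G$; hence it suffices to build $\varphi'$ so that, as its parameters range over $V(G)$, the sets $\varphi'(G,\cdot)$ realize \emph{every} realizable capped multiset type by some rank-$0$ set.

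For the construction of $\varphi'$, I would use a constant number of vertex parameters, organized by component type $\tau$. For each $\tau$, the parameters name up to $N$ components of type $\tau$ (pointing at a vertex inside each), and a uniform rule, expressed with $\conn_0$ and the formulas $\theta_\tau$, decides for $x$ in a type-$\tau$ component whether $x\in A'$: either $A'$ contains exactly the named type-$\tau$ components (``inclusion mode''), or all type-$\tau$ components except the named ones (``exclusion mode''); which mode applies is itself decoded from the parameters, e.g.\ from whether the number of distinct named type-$\tau$ components exceeds $N$. A short case analysis — on $c_\tau$, the number of type-$\tau$ components of $G$, and on how many of them the witness $A$ contains — shows every realizable capped multiset type is hit: if $A$ contains fewer than $N$ of them, name exactly those; if $c_\tau-|A\cap\{\text{type-}\tau\}|<N$, use exclusion mode naming the out-components; and if both counts are at least $N$ then $c_\tau\ge 2N$ and naming any $N$ of them suffices. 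Since $\varphi'$ only ever queries the component of $x$, the set $\varphi'(G,\tup d)$ is automatically a union of components, hence of rank $0$. Finally, a couple of extra parameters together with a separate clause driven by the \fo-definable predicate ``$x$ is the only vertex'' handle the degenerate cases where there are not enough distinct components to name — producing the empty set, and matching $\varphi$'s behaviour on one-vertex graphs by a Boolean combination of color predicates.

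I expect the main obstacle to be the composition theorem for separator logic: one must verify carefully that the connectivity predicates do not break the Feferman--Vaught argument, in particular that the relevant types genuinely form a finite monoid under disjoint union whose powers stabilize, so that capping multiplicities at a constant is sound and component types are first-order-over-$\conn$ definable via relativization. The parameter bookkeeping in the construction of $\varphi'$, including the small degenerate cases, is routine but somewhat fiddly; the conceptual heart is the interplay of ``rank $0$ equals union of components'', the composition theorem, and definability of component types.
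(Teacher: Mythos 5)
Your proposal takes essentially the same route as the paper's proof: rank-$0$ sets are unions of connected components; separator-logic types of components stabilize under disjoint union by a Feferman--Vaught / Ehrenfeucht--Fra\"iss\'e argument, so multiplicities can be capped at a constant (the paper's Claim~\ref{cl:EF}); component types are defined by relativizing type-sentences through $\conn_0$; and $\varphi'$ names a bounded number of components per type, switching between an inclusion mode and an exclusion mode. One small point to fix in the bookkeeping: decoding the mode from whether the number of distinct named type-$\tau$ components ``exceeds $N$'' cannot work if you only name $N$ of them, and the paper sidesteps this by instead having $\varphi'$ internally range over all mode-functions $g\colon\mathcal{T}\to\{0,1\}$ and use the lexicographically smallest $g$ for which $\varphi$ holds of the resulting set.
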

\begin{proof}
  Let $\Sigma$ be the set of unary predicates used by $\varphi$, and let $q$ be the quantifier rank of $\varphi$. We will use the standard notion of {\em{types}} tailored to separator logic. Concretely, for a $\Sigma$-colored graph $H$, the {\em{type}} of $H$ is the set of all sentences of separator logic in the language of $\Sigma$-colored graphs with quantifier rank at most $q+1$ that are satisfied in $H$. A standard induction argument shows that for fixed $q$ and $\Sigma$, there are only boundedly many (in terms of $q$ and $\Sigma$) non-equivalent such sentences, hence also only boundedly many types. Let $\cal T$ be the set of all types; then $|\cal T|$ is bounded by a function of $q$ and $\Sigma$. Observe that for each $\tau \in \cal T$, there is a sentence $\psi_\tau$ such that $H\models \psi_\tau$ if and only if $H$ has type $\tau$. Indeed, it suffices to take $\psi_\tau$ to be the conjunction of all the sentences belonging to $\tau$.

 Let $\cal C$ be the set of connected components of $G$ and let $\{\cal C_\tau\colon \tau\in \cal T\}$ be the partition of $\cal C$ according to the types of the individual components; that is, a component $C\in \cal C$ belongs to $\cal C_\tau$ iff the type of $C$ is $\tau$. The following claim follows from a standard argument involving Ehrenfeucht--Fra\"iss\'e games.
 
 \begin{claim}\label{cl:EF}
  There exists a constant $p$, depending only on $q$, such that the following holds. Suppose for some type $\tau\in \cal T$, $A$ contains more than $p$ components from $\cal C_\tau$, and $A$ is also disjoint with more than $p$ components from $\cal C_\tau$. Then for every $C\in \cal C_\tau$ contained in $A$, we have $G\models \varphi(A\setminus C)$.
 \end{claim}

 An immediate corollary of \Cref{cl:EF} is the following.

 \begin{claim}\label{cl:idemp}
  Suppose there exists $A\subseteq V(G)$ of rank $0$ such that $G\models \varphi(A)$.
  Then there also exists $A'\subseteq V(G)$ of rank $0$ such that the following holds:
  \begin{itemize}[nosep]
   \item $G\models \varphi(A')$; and
   \item for each $\tau \in \cal T$, $A'$ contains at most $p$ components of $\cal C_\tau$ or $A'$ contains all but at most $p$ components~of~$\cal C_\tau$.
  \end{itemize}
 \end{claim}

 It remains to construct a sentence $\varphi'(x,\tup t)$ that will define some set $A'$ whose existence is asserted by \Cref{cl:idemp}. For this, for each $\tau \in \cal T$ construct a $p$-tuple of variables $\tup t_\tau$, and let $\tup t$ be the union of $\tup t_\tau$ over all $\tau\in \cal T$. For every function $g\colon {\cal T}\to \{0,1\}$, we may use the sentences $\{\psi_\tau\colon \tau\in \cal T\}$ to write a formula $\alpha_g(x,\tup t)$ expressing the following assertion:
 supposing $x$ belongs to a component $C$ whose type is $\tau$, we have that $C$ contains a vertex of $\tup t_\tau$ if $g(\tau)=0$, and $C$ does not contain a vertex of $\tup t_\tau$ if $g(\tau)=1$. Then, we may write $\varphi'(x,\tup t)$ as the formula expressing the following:
 \begin{itemize}[nosep]
  \item for some $g\colon {\cal T}\to \{0,1\}$, we have $G\models \varphi(\alpha_g(G,\tup t))$; and
  \item $\alpha_{g^\circ}(x,\tup t)$, where $g^\circ$ is the lexicographically smallest function from $\cal T$ to $\{0,1\}$ satisfying the above.
 \end{itemize}
 By \Cref{cl:idemp}, for some evaluation $\tup d$ of $\tup t$ and some function $g\colon {\cal T}\to \{0,1\}$, we have $A'=\alpha_g(G,\tup d)$. As $G\models \varphi(A')$, the first point above holds. Then we have $\varphi'(G,\tup d)=\alpha_{g^\circ}(G,\tup d)$, where $g^{\circ}$ is as in the second point. And by construction, this set has rank $0$ and satisfies~$\varphi$.
\end{proof}

We now generalize the conclusion of \cref{lem:rank0-separator} to low rank definability property using \cref{lem:characterization-separator} in combination with \cref{lem:operation-forawrd,lem:operation-backward}.

\begin{lemma}\label{lem:lrdp-separator}
 For every weakly sparse graph class $\Cc$, separator logic has low rank definability property on $\Cc$.
\end{lemma}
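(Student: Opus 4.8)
The plan is to apply \cref{thm:low-rank-quantifier-elimination}: it suffices to establish the low rank definability property for separator logic on $\Cc$, and I would do this by reducing the rank-$\le r$ case to the rank-$0$ case already handled by \cref{lem:rank0-separator}. Fix $r$ and a separator-logic formula $\varphi(X,\wtup Y,\tup z)$; exactly as in the proof of \cref{thm:low-rank-quantifier-elimination}, using \cref{lem:inter-def} I may assume that $G\models\varphi(A,\wtup B,\tup c)$ already forces $\rk(A)\le r$, so I never have to separately control the rank of the set produced. Let $t$ be such that every graph of $\Cc$ is $K_{t,t}$-subgraph-free and put $k\coloneqq 2^{r+1}(t-1)$. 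Now suppose $G\in\Cc$, $\wtup B$, $\tup c$ are given and some $A\subseteq V(G)$ of rank at most $r$ satisfies $G\models\varphi(A,\wtup B,\tup c)$. By \cref{lem:characterization-separator}, $A$ is captured by a separation $(L,R)$ of $G$ of order at most $k$; set $S\coloneqq L\cap R$ and let $\tup s$ enumerate $S$, so $|\tup s|\le k$.

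The central observation is that after isolating the separator $S$, the set $A$ splits into a rank-$0$ part and a bounded part. Concretely, let $G'$ be the $S$-operation of $G$ with respect to separator logic, and write $A=A_0\cup A_S$ with $A_0\coloneqq A\setminus S$ and $A_S\coloneqq A\cap S$. Since $(L,R)$ captures $A$ we get $A_0=L\setminus R$; as $G$ has no edge between $L\setminus R$ and $R\setminus L$ and $G'$ additionally removes every edge touching $S$, all $G'$-edges incident to $A_0$ stay inside $A_0$, so $A_0$ is a union of connected components of $G'$ and thus $\rk_{G'}(A_0)=0$. Also $A_S\subseteq S$, hence $A_S=\{s_i : i\in I\}$ for one of the (boundedly many) subsets $I$ of $[|\tup s|]$. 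By \cref{lem:operation-forawrd} there is a formula $\varphi_1(X,\wtup Y,\tup z)$ equivalent to $\varphi$ over the $S$-operation, and relaxing its membership test $u\in X$ to $u\in X\vee\bigvee_{i\in I}u=s_i$ turns it into a formula $\chi_I(X,\wtup Y,\tup z,\tup s)$ of separator logic for which $G'\models\chi_I(A_0,\wtup B,\tup c,\tup s)$; note the parameters $\tup s$ are anyway available in $G'$ as the singleton colors supplied by the $S$-operation.

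With this reduction in place, I would apply \cref{lem:rank0-separator}, which is stated only for a single free set variable, by absorbing the remaining parameters into colors: pass to a colored graph $G''$ with the same edge set as $G'$ in which $\wtup Y$ is realized by colors interpreted as $\wtup B$ and $\tup z$ by singleton colors interpreted as $\tup c$; then $\rk_{G''}(A_0)=0$ still, and $\chi_I$ becomes a sentence $\bar\chi_I(X)$ with only the free set variable $X$, satisfied by $A_0$ in $G''$. \cref{lem:rank0-separator} gives a formula $\bar\psi_I(x,\tup t_I)$ and an evaluation $\tup d_I$ with $\bar\psi_I(G'',\tup d_I)$ of rank $0$ and $G''\models\bar\chi_I(\bar\psi_I(G'',\tup d_I))$. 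Un-absorbing the colors turns $\bar\psi_I$ into a separator-logic formula over the language of $G'$ with free variables $\wtup Y,\tup z$ defining the same set, which by \cref{lem:operation-backward} (using $\tup s$ as the additional parameters) transports to a formula $\psi_I(x,\tup t_I,\wtup Y,\tup z,\tup s)$ over $G$. Taking $A'\coloneqq\psi_I(G,\tup d_I,\wtup B,\tup c,\tup s)\cup\{s_i : i\in I\}$ and running \cref{lem:operation-forawrd} in the reverse direction gives $G\models\varphi(A',\wtup B,\tup c)$, and $A'$ is defined in $G$ by $\psi_I(x,\ldots)\vee\bigvee_{i\in I}x=s_i$ with parameters $\tup d_I$ and $\tup s$. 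Finally, the sought formula $\psi(x,\tup t,\wtup Y,\tup z)$ is the finite disjunction of these constructions over all $|S|\le k$ and all $I$, quantifying the concatenation of all parameter blocks and outputting the set associated with the lexicographically least choice that makes $\varphi$ hold — exactly the device used at the end of the proof of \cref{lem:rank0-separator} to ensure $\psi$ names a single set.

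The conceptual heart — the idempotence of separator logic that lets quantification over a union of connected components collapse to ordinary first-order quantification — is already contained in \cref{lem:rank0-separator}, so the work here is mostly organizational. I expect the main point requiring care to be the structural claim that isolating the separator of the capturing separation turns $A\setminus S$ into a rank-$0$ set of $G'$, together with the bookkeeping that moves $\varphi$ and the various defining formulas among $G$, its $S$-operation, and the color-augmented graph on which \cref{lem:rank0-separator} is invoked; I do not anticipate an essential obstacle beyond this.
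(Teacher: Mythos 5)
Your proof is correct and follows the same core strategy as the paper's: use \cref{lem:characterization-separator} to capture $A$ by a separation $(L,R)$ of bounded order, pass to the $S$-operation $G'$ for $S=L\cap R$, reduce to the rank-$0$ case via \cref{lem:rank0-separator}, and transport the resulting formula back to $G$ through \cref{lem:operation-forawrd,lem:operation-backward}. The one place where you make the argument slightly heavier than necessary is in splitting $A$ into $A_0=A\setminus S$ and $A_S=A\cap S$ and hardcoding $A_S$ via a finite disjunction over subsets $I\subseteq[|\tup s|]$, together with the rewriting $u\in X\mapsto u\in X\vee\bigvee_{i\in I}u=s_i$. This is avoidable: since the $S$-operation isolates every vertex of $S$, each $s\in S$ is a singleton connected component of $G'$, so the \emph{whole} set $A$ (not merely $A_0$) is already a union of connected components of $G'$ and hence has rank $0$ there. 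One can therefore invoke \cref{lem:rank0-separator} directly on $A$ in $G'$ and skip the $I$-bookkeeping entirely, which is exactly what the paper does. Everything else in your write-up — the color-absorption of $\wtup Y,\tup z$, the reliance on the idempotence encapsulated in \cref{lem:rank0-separator}, and the final disjunction over $|S|\le k$ with lexicographic tie-breaking — matches the paper's argument.
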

\begin{proof}
 Let $r\in \N$ be a bound on the rank and $\varphi(X,\tup Y,\tup z)$ be a formula of separator logic, as in the definition of low rank definability property. We may assume that $\wtup Y=\emptyset$ and $\tup z=\emptyset$.
 Indeed, otherwise we may expand the signature by $|\wtup Y|+|\tup z|$ unary predicates marking the free variables of $\varphi$, and perform the reasoning on the modified formula $\varphi$ that treats the free variables of $\wtup Y$ and $\tup z$ as unary predicates present in the structure.
 Once we obtain a suitable formula $\varphi'$, the additional unary predicates can be interpreted back as free variables.
 Hence, from now $\varphi$ has only one free set variable $X$.

 Since $\Cc$ is weakly sparse, there is some $t\in \N$ such that no graph from $\Cc$ contains $K_{t,t}$ as a subgraph. Consider any $G\in \Cc$ and $A\subseteq V(G)$ with $\rk(A)\leq r$. By \cref{lem:characterization-separator}, there is a separation $(L,R)$ of $G$ of order at most $p\coloneqq 2^{r+1}\cdot (t-1)$ that captures $A$. Let $S\coloneqq L\cap R$; then $|S|\leq p$. Observe that since $(L,R)$ captures $A$, $A$ is the union of a subset of $S$ and a collection of connected components of $G-S$. In other words, $A$ is a set of rank $0$ in $G'$ --- the $S$-operation of $G$ with respect to the separator logic.

 By \cref{lem:operation-forawrd}, there is a formula $\psi(X)$ depending only on $\varphi$ and $p$ such that for every $B\subseteq V(G)$,
 \[G\models \varphi(B)\qquad\textrm{if and only if}\qquad G'\models \psi(B).\]
 In particular, $G'\models \psi(A)$. By \cref{lem:rank0-separator}, we may find a formula $\psi'(x,\tup t)$ depending only on $\psi$ and $p$, an evaluation $\tup d$ of variables $\tup t$, and a set $A'\subseteq V(G)$ of rank $0$ in $G'$, such that
 \[G'\models \psi(A')\qquad \textrm{and}\qquad A'=\psi'(G',\tup d).\]
 By \cref{lem:operation-backward}, there is a formula $\varphi'(x,\tup t,\tup t')$, depending only on $\psi'$ and $p$, such that for every $u\in V(G)$ and evaluation $\tup c$ of $\tup t$, we have
 \[G\models \varphi'(u,\tup c, \tup s)\qquad \textrm{if and only if}\qquad G'\models \psi'(u,\tup c),\]
 where $\tup s$ is the tuple of elements from $S$.
 All in all, we have
 \[G\models \varphi(A')\qquad \textrm{and}\qquad A'=\varphi'(G',\tup d, \tup s).\]
 Therefore, all the conditions required from $\varphi',A'$ are~met.
\end{proof}

Now \cref{thm:main-separator} follows immediately by combining \cref{lem:lrdp-separator} with \cref{thm:low-rank-quantifier-elimination}.

\section{Relation to flip-connectivity logic}\label{sec:fconn}

In this section we compare the expressive power of low rank \mso with flip-connectivity logic.

\subsection{Negative result}
As our first result we show that on the class of all graphs low rank \mso is strictly more expressive than flip-connectivity logic. That is, we prove \Cref{thm:main-fconn-negative}, which we restate here for convenience.

\mainFconnNegative*
\begin{proof}
    In the proof we construct a family of graphs $G_n$ and $G_n'$ that can be distinguished by a sentence of low rank \mso, but for every sentence $\psi$ of flip-connectivity logic there is some $n$ such that $G_n$ and $G_n'$ are indistinguishable by $\psi$.

    We begin with an auxiliary graph, which we call $F_n$. In this graph, there are $n+1$ layers of vertices. There are $10n^2$ vertices in layer $0$. The vertices in layer $i \in \set{1,\ldots,n}$ are all subsets of vertices in layer $i-1$ that have size $5n$, and the edge relation between layers $i$ and $i-1$ represents membership. Thus, layer $1$ has $\binom{10n^2}{5n}$ vertices, layer $2$ has $\binom{\binom{10n^2}{5n}}{5n}$ vertices, and so on. Here is a picture:
    \mypic{4}

    Using the auxiliary graph $F_n$, we define the two colored graphs $G_n$ and $G'_n$. Actually, the underlying uncolored graphs of $G_n$ and $G'_n$ will be exactly the same; the difference between them is only in the placement of colors (unary predicates). The uncolored graph underlying $G_n$ and $G_n'$ is constructed as~follows:
        \begin{itemize}[nosep]
            \item take four disjoint copies of $H_n$;
            \item merge the first two copies of layer 0;
            \item merge the last two copies of layer 0;
            \item create a biclique between the merged copies from the previous two items.
        \end{itemize}
    Here is a picture of the graph: 
    \mypic{3}
    This completes the description of the uncolored graph underlying  $G_n$ and $G'_n$. To differentiate them, we add four unary predicates that mark vertices of $n$'th layer in each copy of $F_n$. These, call them $A, B, C, D$, are distributed as follows:
    \mypic{5}

    To simplify the notation in our proof, we denote sets of vertices marked with predicates $A, B, C,$ and $D$ by $V_n^A, V_n^B, V_n^C,$ and $V_n^D$, respectively.
    Also, the set of vertices in the $j$'th layer of the copy of $F_n$ (for $j \in [n]$) whose last layer is marked with a predicate $L \in \set{A, B, C, D}$ is denoted by $V_j^L$.
    Finally, in $G_n$ we call the set of vertices adjacent to $V_1^A$ and $V_1^B$ by $U_1$ and the set of vertices adjacent to $V_1^C$ and $V_1^D$ by $U_2$.
    Similarly, in $G_n'$ we call the set of vertices adjacent to $V_1^A$ and $V_1^C$ by $U_1$ and the set of vertices adjacent to $V_1^B$ and $V_1^D$ by $U_2$.
    This way we can assume that $G_n$ and $G_n'$ have the same vertex set (so we interpret $V_j^L$ as the same set of vertices in both $G_n$ and $G_n'$), but the edge relation is different.
    Also, we call each set $V_j^L$ or $U_1, U_2$ a \emph{block}.
    For convenience, we will say that a~block $Q$ is an~\emph{expansion} of a~block $P$ if for every subset $P' \subseteq P$ of cardinality $5n$ there exists exactly one vertex $v \in Q$ whose neighborhood in $P$ is precisely $P'$.
    We will call this vertex a~\emph{witness} of $P'$ (in $P$).

    We start with the following claim.
    \begin{claim}
        There exists a sentence $\varphi$ of low rank \mso such that $G_n \models \varphi$ and $G_n' \not \models \varphi$ for every $n \ge 3$.
    \end{claim}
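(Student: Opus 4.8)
I would take $\varphi$ to be the low rank \mso sentence $\exists X\colon 1$ such that $V_n^A\cup V_n^B\subseteq X$ and $X\cap(V_n^C\cup V_n^D)=\emptyset$ (using the unary predicates $A,B,C,D$, the membership test, and the rank-$1$ set quantifier). The intuition is that such an $X$ is forced to be essentially ``one side of the biclique'': in $G_n$ the copies of $F_n$ coloured $A$ and $B$ sit on the same side, whereas in $G_n'$ the copies coloured $A$ and $C$ have been made to share a $0$-th layer, so no bounded-rank set can separate $\{A,B\}$ from $\{C,D\}$. (In fact only the constraints involving $A$ and $C$ will be used, but the symmetric form is the conceptually natural one, and it is plainly expressible in low rank \mso.)

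For $G_n\models\varphi$ I would just exhibit the witness: let $X$ consist of the block $U_1$ together with all of layers $1,\dots,n$ of the two copies of $F_n$ coloured $A$ and $B$. In $G_n$ those two copies hang off $U_1$ and touch no other block, so the only edges between $X$ and $\widehat X$ are those of the biclique between $U_1\subseteq X$ and $U_2$; hence $\Adj_{G_n}[X,\widehat X]$ is a single all-ones block and $\rk(X)=1$, while the colour constraints hold by construction.

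The work is in showing $G_n'\not\models\varphi$ for $n\ge 3$. Suppose some $X$ with $\rk(X)\le 1$ witnesses $\varphi$ in $G_n'$. If $\rk(X)=0$ then $X$ is a union of connected components of $G_n'$, but $G_n'$ is connected, so $X\in\{\emptyset,V(G_n')\}$, and neither obeys the colour constraints. So $\rk(X)=1$, and by \cref{lem:equiv-measures} the matrix $\Adj_{G_n'}[X,\widehat X]$ has at most $4$ distinct rows and at most $4$ distinct columns. The heart of the argument is a rigidity statement for the layered power-set structure of a copy of $F_n$: \emph{if the top layer of some copy lies entirely on one side $S\in\{X,\widehat X\}$ of the partition $(X,\widehat X)$, then every layer of that copy lies on side $S$ up to at most $4$ vertices.} I would prove this by downward induction; the inductive step asserts that a low-diversity cut cannot ``flip'' between consecutive layers $P$ (below) and $Q$ (above, the $5n$-subsets of $P$). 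Indeed, if at least $5$ vertices $p_1,\dots,p_5$ of $P$ lay on the minority side $\bar S$, then for each intersection pattern $T\subseteq\{p_1,\dots,p_5\}$ there are $\binom{|P|-5}{\,5n-|T|}>4$ members of $Q$ realizing it (here $|P|\ge|U_1|=10n^2$, which is $\ge 5n$ and large enough once $n\ge 3$), and since by induction only $\le 4$ vertices of $Q$ avoid $S$, one can pick $w_1,\dots,w_5\in Q\cap S$ realizing $5$ distinct patterns; their rows (if $S=X$) or columns (if $S=\widehat X$), restricted to the coordinates $p_1,\dots,p_5\subseteq\bar S$, are pairwise distinct, producing $>4$ distinct rows or columns and contradicting $\rk(X)=1$. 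Applying this to the copy coloured $A$ (top layer $V_n^A\subseteq X$) shows its $0$-th layer --- the block $U_1$ --- has all but $\le 4$ vertices in $X$; applying it to the copy coloured $C$ (top layer $V_n^C\subseteq\widehat X$) shows its $0$-th layer --- which in $G_n'$ is \emph{again} $U_1$, since copies $A$ and $C$ share their bottom layer there --- has all but $\le 4$ vertices in $\widehat X$. Hence $|U_1|\le 8$, contradicting $|U_1|=10n^2\ge 90$ for $n\ge 3$.

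I expect the rigidity statement, and specifically its inductive step, to be the main obstacle: one must convert ``a bounded-rank cut cannot turn around inside one copy of $F_n$'' into a clean counting argument, and the feature making it go through is that distinct $5n$-subsets of a large ground set realize all small intersection patterns many times over, so a bounded excursion to the minority side of one layer already forces too many distinct rows or columns one layer up. The remaining ingredients --- the rank-$0$ case, the $G_n$ witness, and combining the two applications of the rigidity statement --- are routine.
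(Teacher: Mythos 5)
Your proposal is correct, and the negative direction ($G_n' \not\models \varphi$) takes a genuinely different route from the paper's. Your sentence $\varphi$ is a mild strengthening of the paper's (the paper only demands $V_n^A \subseteq X$ and $X \cap V_n^C = \emptyset$), but the witness for $G_n$ coincides with the paper's, and a stronger sentence only makes $G_n' \not\models \varphi$ easier, so this is immaterial. For the refutation in $G_n'$, the paper scans the chain of blocks $V_n^C, \ldots, V_1^C, U_1, V_1^A, \ldots, V_n^A$, locates the \emph{earliest} block containing at least two vertices of $X$, and via a short case analysis exhibits a non-degenerate $2\times 2$ submatrix of $\Adj_{G_n'}[X,\widehat X]$, certifying $\rk(X) \geq 2$. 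You instead prove a rigidity lemma by downward induction on layers: if the top layer of a copy of $F_n$ lies entirely on one side of the cut, then every layer of that copy deviates from that side by only boundedly many vertices, because an excursion of $\geq 5$ vertices to the minority side of some layer produces, via the $5n$-subset structure of the layer above and the inductive hypothesis, at least $5$ vertices on the majority side realizing $5$ distinct intersection patterns, hence too many distinct rows or columns in $\Adj_{G_n'}[X,\widehat X]$. Applying this twice, to the $A$-copy (top layer in $X$) and the $C$-copy (top layer in $\widehat X$), which in $G_n'$ share the bottom layer $U_1$, forces $U_1$ to be almost entirely in both $X$ and $\widehat X$, a contradiction. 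Both arguments are valid and rely on the same layered-expansion counting; the paper's ``earliest block with two $X$-vertices'' device is more economical and localizes the contradiction to a $2\times 2$ submatrix, while your rigidity lemma is somewhat heavier but exposes more structure, explaining conceptually why a rank-$\leq 1$ cut is essentially constant along each copy of $F_n$. (One small imprecision: \cref{lem:equiv-measures} gives $\dv(M)\le 4$ for a rank-$1$ $\{0,1\}$-matrix, i.e.\ the \emph{sum} of distinct rows and distinct columns is at most $4$ --- not ``at most $4$ of each'' --- but since you produce $5$ distinct rows this does not affect the argument.)
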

    \begin{claimproof}
        Let $\varphi$ be a sentence of low rank \mso that stipulates:
        \begin{center}
            \textit{There is a set of rank $1$ that contains all the vertices marked with predicate $A$ but no vertices marked with predicate $C$.}
        \end{center}
        This sentence is clearly true in $G_n$: the sought set is $U_1 \cup \left(\bigcup_{i \in [n]} V_i^A\right) \cup \left(\bigcup_{i \in [n]} V_i^B\right)$.
        Now, we will show that $\varphi$ does not hold in $G_n'$.
        By contradiction, assume there is a set $X \subseteq V(G_n')$ of rank $1$ that contains all the vertices marked with predicate $A$ but no vertex marked with predicate $C$.
        Consider blocks $V_n^C, V_{n-1}^C, \ldots, V_1^C, U_1, V_1^A, V_2^A, \ldots, V_n^A$ ordered in this way.
        Since $V_n^C \cap X = \emptyset$ and $V_n^A \subseteq X$, there is the~earliest block $P$ in this ordering which contains at least two vertices of $X$.
        Take any two distinct vertices $u, v \in P \cap X$.
        If $P$ is $V_i^C$ for some $i < n$ then there is at most one vertex in $V_{i+1}^C \cap X$.
        However, for each subset $T$ of vertices in $V_i^C \setminus \set{u, v}$ of size $5n - 1$, there is a vertex in $V_{i+1}^C$ that is connected in $V_{i}^C$ precisely to $T \cup \set{u}$ (i.e., a~witness of $T \cup \set{u}$).
        Since we have more than $2$ such sets $T$, we get a vertex in the complement of $X$ that is connected to $u$ but not to $v$.
        Similarly, we have a vertex in the complement of $X$ that is connected to $v$ but not to $u$, so the rank of $X$ is at least $2$.
        The case of $P$ being $U_1$ is similar.

        Now consider the case when $P$ is $V_1^A$.
        Since both $u$ and $v$ have different neighborhoods in $U_1$ each of size $5n$, there are two vertices $w$ and $w'$ in $U_1$ such that $E(u, w)$, $\neg E(v, w)$, $E(v, w')$, and $\neg E(u, w')$.
        If neither $w$ nor $w'$ is in $X$ then the rank of $X$ is at least $2$.
        Therefore, assume that one of $w, w'$ (without loss of generality $w$) is in $X$.
        Then, since $w$ has more than one neighbor in $V_1^C$, it has a neighbor $z \in V_1^C\setminus X$.
        Then we have $E(w, z)$, $\neg E(v, z)$, $E(v, w')$, and $\neg E(w, w')$, so the rank of $X$ is at least $2$.
        The case of $P$ being $V_i^A$ for some $i > 1$ is analogous.
        This shows that $G_n' \not \models \varphi$.
    \end{claimproof}

    Now assume by contradiction that $\varphi$ is equivalent to a sentence $\psi$ of flip-connectivity logic.
    Assume that $\psi$ has quantifier rank less than $q$ and uses flip-connectivity predicates with at most $q$ arguments.
    We will show that $\psi$ does not distinguish $G_n$ and $G_n'$ for $n$ large enough; in particular we require $n \ge q$.

    Our first goal is to show that in $G_n$, flip-connectivity predicates can be expressed in first-order logic (and similarly in $G_n'$).
    Consider any set $S$ of vertices of $G_n$ with $|S| \leq q$ and a symmetric flip $H_n$ of $G_n$ with parameters $S$.
    We will show that all the vertices in $V(G_n) \setminus (V_n^A \cup V_n^B \cup V_n^C \cup V_n^D \cup S)$ are in the same connected component of $H_n$.
    For this we need to show a number of claims.

    \begin{claim}
        \label{cl:limited-adjacency}
        Suppose $P, Q$ are two blocks such that either $Q$ is an~expansion of $P$ or $\{P, Q\} = \{U_1, U_2\}$.
        Then every vertex $v \notin P \cup Q$ is adjacent to at most $5n$ vertices of $Q$.
        In particular, there are at most $5nq \leq 5n^2$ vertices of $Q$ that either belong to $S$ or are adjacent to $S \setminus P$.
    \end{claim}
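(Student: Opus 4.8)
The plan is to prove the first sentence by a short structural case analysis on the two allowed shapes of the pair $(P,Q)$, and then to deduce the ``in particular'' part by a charging argument. I will use the following features of the construction of $G_n$ (everything is symmetric in $G_n'$): the edges of each copy of $F_n$ run only between consecutive layers; a vertex of layer $i$ is a $5n$-element subset of layer $i-1$, hence has exactly $5n$ neighbours in layer $i-1$, and its neighbours in layer $i+1$ are exactly the witnesses of the $5n$-subsets containing it; merging the layer-$0$ copies only identifies vertices and creates no edge inside a block; and the only edges not covered by the above are the $U_1$--$U_2$ biclique. In particular, no block contains an edge.

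Consider first the case where $Q$ is an expansion of $P$. By construction this forces $P,Q$ to be consecutive layers of one copy of $F_n$, with $U_1$ or $U_2$ playing the role of layer $0$ (so $Q\notin\{U_1,U_2\}$, since a layer-$0$ block is not an expansion of anything); say $Q$ is layer $i$ and $P$ layer $i-1$. Any vertex adjacent to $Q$ lies in layer $i-1=P$ or in layer $i+1$ of the same copy, the biclique being irrelevant as $Q$ is not a layer-$0$ block. So a vertex $v\notin P\cup Q$ adjacent to $Q$ must be in layer $i+1$; but layer $i+1$ is an expansion of $Q$, so $v$'s neighbourhood in $Q$ has size exactly $5n$ (and if layer $i+1$ is absent, i.e.\ $Q=V_n^L$, no such $v$ exists). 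In the remaining case $\{P,Q\}=\{U_1,U_2\}$, say $P=U_1$, $Q=U_2$: the vertices adjacent to $U_2$ are those of $U_1$ (via the biclique) and those of $V_1^C$ and $V_1^D$, so any $v\notin U_1\cup U_2$ adjacent to $Q$ lies in $V_1^C$ or $V_1^D$, each of which is an expansion of $U_2$, again giving exactly $5n$ neighbours in $Q$. This proves the first sentence.

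For the second sentence, let $T$ be the set of vertices of $Q$ lying in $S$ or adjacent to some vertex of $S\setminus P$. Charge each $u\in T$ to a vertex of $S$: to $u$ itself if $u\in S$, otherwise to some $s\in S\setminus P$ with $us\in E(G_n)$. A fixed $s$ gets at most one charge of the first kind; a charge of the second kind comes from a vertex of $Q$ adjacent to $s$, and since no block has an internal edge we then have $s\notin Q$, so by the first sentence $s$ is adjacent to at most $5n$ vertices of $Q$. Thus each $s\in S$ absorbs at most $5n$ charges, whence $|T|\le 5n\,|S|\le 5nq\le 5n^2$ using $q\le n$.

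I do not foresee a genuine difficulty; the one point that needs care is making the case analysis exhaustive — verifying that, once the layer-$0$ copies are merged and the $U_1$--$U_2$ biclique added, the blocks listed above are the only ones adjacent to a given block $Q$, and that each such adjacency sends exactly $5n$ edges into $Q$ (from the ``size-$5n$ subset'' structure) rather than more.
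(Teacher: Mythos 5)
Your proof is correct and is exactly the case analysis the paper is alluding to (the paper's proof is the single line ``Straightforward case analysis''). The first part is exhaustive: for $Q$ a non-top layer, the only neighbours of $Q$ outside $P\cup Q$ lie in the next layer up, which is an expansion of $Q$ and hence meets $Q$ in exactly $5n$ vertices; for $Q=V_n^L$ there is no such $v$; for $\{P,Q\}=\{U_1,U_2\}$ the outside neighbours lie in the two $V_1$-blocks that are expansions of $Q$. One point worth making explicit in the charging step: a fixed $s$ absorbs a first-kind charge only if $s\in Q$, and a second-kind charge only if $s\notin Q$ (as you observe via the independence of blocks), so the two kinds are mutually exclusive for a given $s$; without that observation you would only get the weaker bound $(5n+1)|S|$. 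As written you quietly use this to land on $5n|S|\le 5nq$, which is what the claim requires; spelling it out would make the count airtight.
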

    \begin{claimproof}
        Straightforward case analysis.
    \end{claimproof}

    \begin{claim}
        \label{cl:expansion-adjacency}
        Let a~block $Q$ be an~expansion of a~block $P$ in $G_n$.
        Then for every two vertices $u, v \in P \setminus S$ there is a~vertex $w \in Q \setminus S$ such that $w$ is adjacent to both $u$ and $v$ in $H_n$.
    \end{claim}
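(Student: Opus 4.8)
The plan is to build the witness $w$ directly, by choosing its $G_n$-neighbourhood inside $P$ --- which is exactly the freedom that ``$Q$ is an expansion of $P$'' provides --- while retaining enough slack to control how the flip acts on it. Write $A$ for the symmetric relation defining $H_n$, so that for any vertices $x,y$ we have $xy\in E(H_n)$ precisely when the value $[xy\in E(G_n)]$ xor $[(\atp(x,S),\atp(y,S))\in A]$ equals $1$, where $\atp(\cdot,S)$ abbreviates the atomic type over a fixed enumeration of $S$. Fix $u,v\in P\setminus S$.

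First I would isolate a convenient target type for $w$. Call a vertex $w\in Q$ \emph{clean} if $w\notin S$ and $w$ is $G_n$-non-adjacent to every vertex of $S$. Every block of $G_n$ is an independent set, and all vertices of $Q$ carry the same unary predicates (the single predicate $L$ if $Q=V_n^L$, and none if $Q$ is not a top layer); consequently all clean vertices of $Q$ share one atomic type over $S$, which I denote $\sigma_0$. Put $b_u\coloneqq[(\sigma_0,\atp(u,S))\in A]$ and $b_v\coloneqq[(\sigma_0,\atp(v,S))\in A]$; these are bits determined by $u$, $v$, and $Q$ alone. The objective is then to find a clean $w\in Q$ whose $G_n$-neighbourhood contains $u$ exactly when $b_u=0$ and contains $v$ exactly when $b_v=0$: for such a $w$ the value $[wu\in E(G_n)]$ xor $b_u$ is $1$, and symmetrically for $v$, so $wu,wv\in E(H_n)$ while $w\in Q\setminus S$, which is what the claim demands.

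The heart of the argument is a counting step producing this $w$. Since $Q$ is an expansion of $P$, each $w\in Q$ is named by a $5n$-element set $P_w\subseteq P$, namely its $G_n$-neighbourhood in $P$, and every such set occurs for exactly one $w$. I would consider the $5n$-element sets $P'\subseteq P$ that are disjoint from $S$, contain $u$ iff $b_u=0$, and contain $v$ iff $b_v=0$. As $u,v\in P\setminus S$ these three conditions are jointly satisfiable, and a crude estimate counts at least $\binom{|P|-|S|-2}{5n-2}$ of them; since $|P|\ge 10n^2$ and $|S|\le q\le n$ (recall $n\ge q$), this is at least, say, $8n^2>5n^2$. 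On the other hand, \cref{cl:limited-adjacency} guarantees that at most $5n^2$ vertices of $Q$ either lie in $S$ or are $G_n$-adjacent to a vertex of $S\setminus P$. Hence at least one admissible set $P'$ has its witness $w$ both outside $S$ and non-adjacent to $S\setminus P$; and since $P'\cap S=\emptyset$ this same $w$ is non-adjacent to $S\cap P$ as well, hence non-adjacent to all of $S$ --- so $w$ is clean and $\atp(w,S)=\sigma_0$. By the choice of $P'$, the $G_n$-neighbourhood of $w$ contains $u$ iff $b_u=0$ and contains $v$ iff $b_v=0$, which is exactly the target, and the claim follows.

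I do not anticipate a genuine obstacle; the only delicate point is juggling the three demands on $w$ --- a prescribed $G_n$-adjacency to $u$ and to $v$, avoidance of $S$, and non-adjacency to all of $S$ --- which are met respectively by the expansion structure, by the counting gap ($8n^2$ admissible sets against at most $5n^2$ bad witnesses), and by the observation that once $w$ is pinned to the clean type $\sigma_0$, whether $w$ is adjacent to $u$ (resp.\ $v$) in $H_n$ becomes a pure function of whether $u$ (resp.\ $v$) lies in the $5n$-element set that we are free to choose.
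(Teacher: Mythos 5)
Your argument is essentially the paper's proof, with a minor streamlining: where the paper exhibits four clean witnesses in $Q\setminus S$ realizing all four $G_n$-adjacency patterns to $\{u,v\}$ and then observes that one of them must survive the flip, you precompute the bits $b_u,b_v$ to identify directly which single pattern is needed and find one clean witness with that pattern by the same counting argument (expansion supplies many candidate witnesses; \Cref{cl:limited-adjacency} rules out at most $5n^2$ of them). Both proofs rest on the same two pillars — that all clean vertices of $Q$ share one atomic type over $S$, and that $\binom{|P|-|S|-2}{5n-j}$ dwarfs $5n^2$ — so this is the paper's route, slightly tightened.
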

    \begin{claimproof}
        We will show that there are vertices $w_1, w_2, w_3, w_4 \in Q \setminus S$ that do not neighbor any vertex of $S$ in $G_n$ and exhibit all four possible neighborhoods on $\set{u, v}$.
        Therefore $w_1, w_2, w_3,$ and $w_4$ have the same atomic type on $S$ and no matter how we flip their neighborhood class with the classes of $u$ and $v$, one of them is adjacent to both $u$ and $v$ in $H_n$.

        Since all the cases are similar, we will show that there is a vertex $w_1 \in Q \setminus S$ not neighboring any vertex of $S$ in $G_n$ that is adjacent to $u$ but not to $v$.
        Observe that there are at least $\binom{|P| - 2 - q}{5n - 1} > 5n^2$ vertices of $Q$ adjacent to $u$ and non-adjacent to $\set{v} \cup (P \cap S)$: this is because for every $T \subseteq P \setminus (\{u, v\} \cup S)$ of size $5n - 1$, $Q$ contains a~witness of $T \cup \set{u}$.
        Out of these, at most $5n^2$ vertices belong to $S$ or are adjacent to $S \setminus P$ by \Cref{cl:limited-adjacency}.
        Hence there exists a~vertex $w_1 \in Q \setminus S$ that is adjacent to $u$ and non-adjacent to $\set{v} \cup (P \cap S) \cup (S \setminus P) = \set{v} \cup S$. The other cases can be argued similarly.
    \end{claimproof}

    \begin{claim}
        \label{cl:center-adjacency}
        In $H_n$ there is either an edge between $U_1 \setminus S$ and $U_2 \setminus S$ or there is an edge between $V_1^A \setminus S$ and $V_1^D \setminus S$.
    \end{claim}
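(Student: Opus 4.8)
We prove the claim by a direct case analysis on a single bit of the symmetric flip relation. Fix the enumeration $\tup s$ of $S$, so that $H_n = G_n \oplus_{\tup s} A$ for a symmetric $A \subseteq \atp^{|S|+1}\times\atp^{|S|+1}$, and recall that $uv$ is an edge of $H_n$ obtained from $G_n$ by flipping exactly when $(\atp(u,\tup s),\atp(v,\tup s))\in A$. The two structural properties of $G_n$ we will use are: (i) $U_1$ and $U_2$ form a biclique; and (ii) there is no edge between $V_1^A$ and $V_1^D$ (both blocks lie in different halves, each adjacent to $U_1$ resp.\ $U_2$ and to layer $2$ of its own copy).

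First I would pin down the ``generic'' atomic types over $S$. Each vertex of $V_1^A\cup V_1^B$ has exactly $5n$ neighbours in $U_1$, and no vertex outside $U_2\cup V_1^A\cup V_1^B$ is adjacent to $U_1$; hence, since $|S|\le q\le n$, all but at most $5n^2$ vertices of $U_1\setminus S$ are non-adjacent to $S\setminus(U_1\cup U_2)$. Every such vertex realises the same atomic type $\alpha_1$ over $S$: adjacent to all of $S\cap U_2$ (by the biclique), to none of $S\cap U_1$ ($U_1$ is independent), to none of $S\setminus(U_1\cup U_2)$, and carrying no unary predicate ($A,B,C,D$ live in the top layer and $n\ge 3$). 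Symmetrically, all but at most $5n^2$ vertices of $U_2\setminus S$ realise the type $\alpha_2$ of a vertex adjacent to exactly $S\cap U_1$ and carrying no unary predicate. As $|U_1|=|U_2|=10n^2$, the type $\alpha_1$ is realised in $U_1\setminus S$ and $\alpha_2$ is realised in $U_2\setminus S$.

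Now split into two cases. If $(\alpha_1,\alpha_2)\notin A$, pick $u\in U_1\setminus S$ of type $\alpha_1$ and $w\in U_2\setminus S$ of type $\alpha_2$; since $uw\in E(G_n)$ by (i), we get $uw\in E(H_n)$, so there is an edge between $U_1\setminus S$ and $U_2\setminus S$. If $(\alpha_1,\alpha_2)\in A$, then also $(\alpha_2,\alpha_1)\in A$ by symmetry of $A$, and the plan is to produce a vertex $v\in V_1^A\setminus S$ of type $\alpha_2$ and a vertex $v'\in V_1^D\setminus S$ of type $\alpha_1$; then $vv'\notin E(G_n)$ by (ii) while $(\atp(v,\tup s),\atp(v',\tup s))=(\alpha_2,\alpha_1)\in A$, so $vv'\in E(H_n)$, giving an edge between $V_1^A\setminus S$ and $V_1^D\setminus S$. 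To realise $\alpha_2$ inside $V_1^A$, note that a vertex of $V_1^A$ is adjacent precisely to the $5n$ vertices of $U_1$ it encodes and to the vertices of $V_2^A$ encoding a $5n$-subset that contains it; it therefore suffices to choose $v$ whose encoded $5n$-subset of $U_1$ contains $S\cap U_1$ and is otherwise disjoint from $S$, and which is encoded by no vertex of $S\cap V_2^A$ — at most $5n^2$ vertices of $V_1^A$ violate the last condition, whereas the number of candidates is far larger, so $v$ exists, has exactly $S\cap U_1$ as its $S$-neighbourhood and carries no mark, hence realises $\alpha_2$. The vertex $v'\in V_1^D$ of type $\alpha_1$ is produced symmetrically, using that $V_1^D$ encodes $5n$-subsets of $U_2$.

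I expect the only genuinely subtle point to be this last realisability step. When $S$ avoids $U_1\cup U_2$, the types $\alpha_1,\alpha_2$ and the type of a vertex non-adjacent to all of $S$ all coincide, and the case split is transparent; the real content is the observation that even when $S$ meets $U_1$ or $U_2$ and $\alpha_1,\alpha_2$ become non-trivial, the generic $U_1$-type re-appears among the layer-$1$ vertices on the $D$-side (and symmetrically), which is precisely what forces the biclique constraint on $A$ and the non-edge constraint on $A$ to clash. Everything else is the bounded counting of how many vertices can be adjacent to $S$, of the same kind already used in \Cref{cl:limited-adjacency,cl:expansion-adjacency}.
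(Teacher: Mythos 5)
Your proof is correct and is essentially the paper's argument: the paper introduces the adjacency classes $P_1$ (vertices outside $S$ adjacent to exactly $S\cap U_2$) and $P_2$ (outside $S$, adjacent to exactly $S\cap U_1$), shows by the same counting that $P_1$ meets both $U_1\setminus S$ and $V_1^D\setminus S$ and $P_2$ meets both $U_2\setminus S$ and $V_1^A\setminus S$, and then concludes that the single flip bit on this pair of types must create an edge on one side or the other. Your presentation as a direct case split on whether $(\alpha_1,\alpha_2)\in A$ instead of the paper's contrapositive framing is cosmetic, and the realisability argument for $\alpha_2$ inside $V_1^A$ (respectively $\alpha_1$ inside $V_1^D$) matches the paper's use of \Cref{cl:limited-adjacency}.
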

    \begin{claimproof}
        Assume that in $H_n$ there is no edge between $U_1 \setminus S$ and $U_2 \setminus S$.
        Denote $T_1 = U_1 \cap S$ and $T_2 = U_2 \cap S$.
        Consider the set $P_2$ of vertices outside of $S$ neighboring (in $G_n$) all the vertices in $T_1$ and no other vertex of $S$.
        By \Cref{cl:limited-adjacency} there are at most $5n^2$ vertices in $U_2$ that neighbor a vertex in $S \setminus T_1$, so $P_2 \cap U_2$ is non-empty.
        In the same way we show that $P_2 \cap V_1^A$ is non-empty.
        Similarly, for the set $P_1$ of vertices outside of $S$ neighboring all the vertices in $T_2$ and no other vertex of $S$, both $P_1 \cap U_1$ and $P_1 \cap V_1^D$ are non-empty.
        Note that $P_1$ and $P_2$ are two (possibly equal) adjacency classes of vertices in $G_n$.
        Since there are all possible edges between $U_1$ and $U_2$ in $G_n$, but no such edges in $H_n$, we get that during the construction of $H_n$ we flipped the adjacency relation between $P_1$ and $P_2$.
        Hence in $H_n$ there is an edge between $V_1^A \setminus S$ and~$V_1^D \setminus S$.
    \end{claimproof}

    Combining all the claims above we get the following.

    \begin{claim}
        \label{cl:one-large-component}
        All vertices of $V(H_n) \setminus \left(S \cup V_n^A \cup V_n^B \cup V_n^C \cup V_n^D\right)$ are in the same connected component of~$H_n$.
    \end{claim}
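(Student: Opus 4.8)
The plan is to deduce \cref{cl:one-large-component} from \cref{cl:limited-adjacency,cl:expansion-adjacency,cl:center-adjacency} by propagating connectivity along the block structure of $G_n$. Call a block \emph{internal} if it is $U_1$, $U_2$, or $V_i^L$ for some $L\in\{A,B,C,D\}$ and $i\in[n-1]$; the internal blocks are exactly those whose union, after deleting $S$, equals $V(H_n)\setminus(S\cup V_n^A\cup V_n^B\cup V_n^C\cup V_n^D)$, since every vertex of $H_n$ lies in exactly one block. Two facts will be used throughout. First, every internal block $P$ has a block $Q$ that is an \emph{expansion} of $P$: take $Q=V_{i+1}^L$ when $P=V_i^L$ (including $i=n-1$, where $Q=V_n^L$), and $Q=V_1^A$ when $P=U_1$, $Q=V_1^C$ when $P=U_2$ (in $G_n$ one may also take $Q=V_1^B$ for $U_1$ and $Q=V_1^D$ for $U_2$). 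Second, since $|S|\le q\le n$ while $|U_1|=|U_2|=10n^2$ and $|V_i^L|\ge\binom{10n^2}{5n}$, every block satisfies $P\setminus S\ne\emptyset$ for $n\ge 3$.

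First I would show that for each internal block $P$, all of $P\setminus S$ lies in one connected component $\mathcal{K}_P$ of $H_n$: given $u,v\in P\setminus S$, pick an expansion $Q$ of $P$ and apply \cref{cl:expansion-adjacency} to obtain $w\in Q\setminus S$ adjacent in $H_n$ to both $u$ and $v$, so $u$, $v$ (and $w$) lie in a common component. Next I would glue these components along the block structure. For an internal block $V_i^L$ with $i\le n-2$, taking $u\in V_i^L\setminus S$ and, via \cref{cl:expansion-adjacency}, a vertex $w\in V_{i+1}^L\setminus S$ adjacent to $u$ in $H_n$, we get $\mathcal{K}_{V_i^L}=\mathcal{K}_{V_{i+1}^L}$; the same argument with $P=U_1$ and its expansions gives $\mathcal{K}_{U_1}=\mathcal{K}_{V_1^A}=\mathcal{K}_{V_1^B}$, and symmetrically $\mathcal{K}_{U_2}=\mathcal{K}_{V_1^C}=\mathcal{K}_{V_1^D}$. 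Iterating, all internal blocks on the $U_1$-side share one component and all internal blocks on the $U_2$-side share another. Finally, \cref{cl:center-adjacency} yields either an $H_n$-edge between $U_1\setminus S$ and $U_2\setminus S$, which merges the two components directly, or an $H_n$-edge between $V_1^A\setminus S$ and $V_1^D\setminus S$, which also merges them because $\mathcal{K}_{V_1^A}=\mathcal{K}_{U_1}$ and $\mathcal{K}_{V_1^D}=\mathcal{K}_{U_2}$. In either case all internal blocks lie in a single component of $H_n$, which is precisely the claim.

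I do not expect a genuine obstacle here: all the graph-theoretic substance sits in \cref{cl:limited-adjacency,cl:expansion-adjacency,cl:center-adjacency}, and what remains is the bookkeeping of walking up and down the at most $2n$ blocks of each copy of $F_n$ and joining the two sides through the $U_1$--$U_2$ interface. The only points needing a line of justification are that every internal block is nonempty after removing $S$ (immediate from the size bounds above) and that $V_n^L$, although excluded from the statement, may still legitimately serve as the expansion witnessing connectivity of $V_{n-1}^L\setminus S$. The argument for $G_n'$ is identical, now using that $V_1^A,V_1^C$ are expansions of $U_1$ and $V_1^B,V_1^D$ are expansions of $U_2$, so that the second alternative of the $G_n'$-analogue of \cref{cl:center-adjacency} still connects the $U_1$-side to the $U_2$-side.
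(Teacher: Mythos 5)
Your proof is correct and takes essentially the same approach as the paper: establish that each internal block minus $S$ lies in a single component via \Cref{cl:expansion-adjacency} (with an expansion block serving as a hub), glue consecutive blocks along the chain $V_1^L,\ldots,V_{n-1}^L$ and the $U_1$--$V_1^A,V_1^B$ and $U_2$--$V_1^C,V_1^D$ interfaces, and finally join the two sides via \Cref{cl:center-adjacency}. The only difference is expository: you make explicit the non-emptiness of $P\setminus S$, the role of $V_n^L$ as a bridge despite being excluded from the claim, and the adaptation to $G_n'$, all of which the paper leaves implicit.
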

    \begin{claimproof}
        For every $i \in [n - 1]$ and $L \in \set{A, B, C, D}$, the vertices of $V_i^L \setminus S$ are in the same connected component of $H_n$: by \Cref{cl:expansion-adjacency}, each pair of vertices in $V_i^L \setminus S$ shares a~neighbor in $V_{i+1}^L \setminus S$.
        The same conclusion is true also for $U_1 \setminus S$ and $U_2 \setminus S$.
        By the same claim, there exists at least one edge between $V_i^L \setminus S$ and $V_{i+1}^L \setminus S$; at least one edge between $U_1 \setminus S$ and each of $V_1^A \setminus S$ and $V_1^B \setminus S$; and at least one edge between $U_2 \setminus S$ and each of $V_1^C \setminus S$ and $V_1^D \setminus S$.
        Finally, by \Cref{cl:center-adjacency} there is also either an edge between $U_1 \setminus S$ and $U_2 \setminus S$ or an edge between $V_1^A \setminus S$ and $V_1^D \setminus S$.
    \end{claimproof}

    Note that all the claims above also work in the setting where we consider a~symmetric flip $H'_n$ of $G'_n$ with parameters $S$. This is because the uncolored graph underlying $G_n$ and $G_n'$ is the same.

    \begin{claim}
        \label{cl:flipconn-to-fo}
        For every predicate $\flipconn_{q, \Pi}(s, t, a_1, \ldots, a_q)$ there is an \fo formula $\xi_{q, \Pi}(s, t, a_1, \ldots, a_q)$ such that for any $n$ large enough, for any vertices $s, t, a_1, \ldots, a_q$ in $G_n$ we have
        \[
            G_n \models \flipconn_{q, \Pi}(s, t, a_1, \ldots, a_q)\qquad \textrm{if and only if}\qquad G_n \models \xi_{q, \Pi}(s, t, a_1, \ldots, a_q);
        \]
        and also for any vertices $s', t', a_1', \ldots, a_q'$ in $G_n'$, we have
        \[
            G_n' \models \flipconn_{q, \Pi}(s', t', a_1', \ldots, a_q')\qquad \textrm{if and only if}\qquad  G_n' \models \xi_{q, \Pi}(s', t', a_1', \ldots, a_q').
        \]
    \end{claim}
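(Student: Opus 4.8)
The plan is to show that, for $n \ge q$, connectivity in the symmetric flip $H_n := G_n \oplus_{\tup{a}} \Pi$ (with $\tup{a} = (a_1,\ldots,a_q)$ and $S := \{a_1,\ldots,a_q\}$; equivalently, whether $G_n \models \flipconn_{q,\Pi}(s,t,\tup{a})$) between any two vertices is always witnessed by a route of length bounded by a constant $C = C(q)$, once we are allowed to ``reach the core''. Let $W := S \cup V_n^A \cup V_n^B \cup V_n^C \cup V_n^D$ and let $Z := V(G_n) \setminus W$ be the \emph{core}. The core is nonempty --- it contains all but at most $q$ of the $\Theta(n^2)$ layer-$0$ vertices --- and, since the only unary predicates are $A,B,C,D$ and they mark precisely the four top layers, membership in the core is expressed by the quantifier-free formula $\zeta(x) := \neg A(x) \wedge \neg B(x) \wedge \neg C(x) \wedge \neg D(x) \wedge \bigwedge_{i\in[q]} x \ne a_i$. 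By \Cref{cl:one-large-component} (which, as noted there, applies equally to $H_n$ and to $H_n' := G_n' \oplus_{\tup{a}} \Pi$ since $G_n$ and $G_n'$ share the same uncolored graph), the whole core lies in a single connected component $K$ of $H_n$; the same holds for $H_n'$. As all of the reasoning below refers only to the common uncolored graph and to the four predicates $A,B,C,D$, the formula $\xi_{q,\Pi}$ we construct will serve both $G_n$ and $G_n'$.

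First I would record two easy structural facts. (1) The edge relation of $H_n$ is \emph{quantifier-free} definable over $G_n$: $uv \in E(H_n)$ iff $[uv \in E(G_n)] \oplus [(\atp(u,\tup{a}),\atp(v,\tup{a})) \in \Pi]$, and ``$(\atp(u,\tup{a}),\atp(v,\tup{a}))\in\Pi$'' is a quantifier-free formula in $u,v,a_1,\ldots,a_q$ (a finite disjunction over the pairs in $\Pi$). (2) The graph $H_n[W]$ is a \emph{blow-up of a graph of bounded size}: any two top-layer vertices are non-adjacent in $G_n$ (two vertices of one top layer, or of distinct copies of $F_n$, never share an edge), so by (1) their $H_n$-adjacency depends only on their atomic types over $\tup{a}$; likewise, since adjacency to each $a_i$ is part of the atomic type, adjacency to or among the at most $q$ vertices of $S$ is a function of atomic types over $\tup{a}$. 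Hence $H_n[W]$ is obtained from a graph on at most $|\atp^{q+1}| + q$ nodes --- one per atomic type over $\tup{a}$ and one per element of $S$ --- by replacing each node with a (possibly empty, possibly loop-carrying) class; in particular every connected component of $H_n[W]$ has diameter at most a constant $C = C(q)$ depending on $q$ alone.

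The core claim I would then prove is: $s$ and $t$ lie in the same component of $H_n$ if and only if \emph{either} (i) there is an $H_n$-walk of length at most $C+1$ from $s$ to $t$, \emph{or} (ii) there are $H_n$-walks of length at most $C+1$ from $s$ to some core vertex and from $t$ to some core vertex. The ``if'' direction is immediate: in case (ii) both $s$ and $t$ are connected to $Z \subseteq K$, hence to each other. For ``only if'', let $\mathcal K$ be the common component of $s$ and $t$. If $\mathcal K \ne K$, then $\mathcal K$ is disjoint from $Z$ and so has no edge to $K$, hence is also a component of $H_n[W]$; by fact (2), $s$ and $t$ are joined within $\mathcal K$ by a walk of length at most $C$, so (i) holds. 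If $\mathcal K = K$, then each of $s,t$ is connected to $Z$; taking a shortest such path, all of its vertices but the last lie in $W$, so by fact (2) it shortens to length at most $C+1$, giving (ii). Finally, (i) and (ii) translate directly into first-order logic over $(s,t,a_1,\ldots,a_q)$: an ``$H_n$-walk of length at most $C+1$'' is a block of at most $C$ existentially quantified intermediate vertices with quantifier-free constraints (using fact (1), and allowing consecutive vertices to coincide so as to absorb shorter walks), and ``lies in the core'' is $\zeta$. Taking $\xi_{q,\Pi}$ to be the disjunction of (i) and (ii) finishes the construction, and the whole argument only used $n \ge q$, matching the hypothesis that $n$ is large enough.

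The main obstacle will be fact (2): verifying that $H_n[W]$ really is a blow-up of a graph whose size is bounded purely in terms of $q$ --- i.e.\ that top-layer vertices with equal atomic type over $\tup{a}$ are interchangeable both with respect to one another and with respect to the parameters $S$ --- and then extracting a diameter bound $C(q)$ that is uniform in $n$, carefully handling boundary cases such as atomic types realized by a single top-layer vertex, loop-carrying classes, and parameters $a_i$ that happen to lie in a top layer. Everything after that is routine first-order bookkeeping.
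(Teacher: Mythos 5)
Your proposal is correct and follows essentially the same strategy as the paper: isolate the FO-definable core $Z = V(G_n) \setminus (S \cup V_n^A \cup V_n^B \cup V_n^C \cup V_n^D)$, which by Claim~\ref{cl:one-large-component} lies in a single component of $H_n$; bound the diameter of $H_n$ restricted to the remainder $W$ by noting that top-layer vertices are pairwise non-adjacent in $G_n$ so their $H_n$-adjacency depends only on their atomic type over $\tup{a}$, making $H_n[W]$ a blow-up of a graph of size $O_q(1)$; and express the resulting bounded-distance criterion in \fo. The only difference from the paper is cosmetic: the paper phrases the final formula negatively (``$s,t$ lie in different components, at least one of which has diameter $\leq (q+1)(4\cdot 2^q+1)$''), whereas you phrase it positively as ``short walk from $s$ to $t$, or short walks from each of $s,t$ into the quantifier-free-definable core'' --- which is arguably cleaner, but both are the same idea.
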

    \begin{claimproof}
        Let $S = \{a_1, \ldots, a_q\}$ and $H_n$ be the $\Pi$-flip of $G_n$ with parameters $a_1, \ldots, a_q$.
        As in \Cref{cl:one-large-component}, all vertices of $X \coloneqq V(H_n) \setminus \left(S \cup V_n^A \cup V_n^B \cup V_n^C \cup V_n^D\right)$ belong to a~single connected component of $H_n$.
        Next, vertices of each $V_n^L \setminus S$ for $L \in \set{A, B, C, D}$ can be partitioned into at most $2^q$ parts according to the atomic types on $\set{a_1, \ldots, a_q}$; hence $(V_n^A \cup V_n^B \cup V_n^C \cup V_n^D) \setminus S$ is partitioned into $4 \cdot 2^q$ parts in total.
        All these parts are homogeneous (i.e. fully adjacent or fully non-adjacent) with respect to the adjacency in~$H_n$.
        It follows that if there exists a~path in $H_n$ between two vertices of $(V_n^A \cup V_n^B \cup V_n^C \cup V_n^D) \setminus S$ whose all vertices are in $(V_n^A \cup V_n^B \cup V_n^C \cup V_n^D) \setminus S$, then the shortest such path has length at most $4 \cdot 2^q$.

        Note now that the diameter of every connected component $Y$ of $H_n$ that is disjoint from $X$ is bounded from above by $(q + 1)(4 \cdot 2^q + 1)$. Indeed, suppose $u, v \in Y$ are connected in $H_n$ and consider a~shortest path $P$ between $u$ and $v$.
        Then every subpath of $P$ of length $4 \cdot 2^q + 1$ must contain a~vertex outside of $V_n^A \cup V_n^B \cup V_n^C \cup V_n^D \cup X$; but such a~vertex must belong to $S$ and so there are at most $q$ such vertices.
        
        Now, we can easily see that $G_n \not\models \flipconn_{q, \Pi}(s, t, a_1, \ldots, a_q)$ if and only if $s$ and $t$ reside in different connected components of $H_n$, at least one of which has diameter at most $(q + 1)(4 \cdot 2^q + 1)$; this property can be easily tested by an \fo formula, whose negation can be taken as $\xi_{q, \Pi}$.
        Also, it can be verified that the same formula $\neg \xi_{q, \Pi}$ checks if $s'$ and $t'$ reside in different connected components of $H'_n$, where $H'_n$ is the $\Pi$-flip of $G'_n$ with parameters $a_1', \ldots, a_q'$.
        Applying the same argumentation as in $H_n$, we conclude that $G_n' \models \flipconn_{q, \Pi}(s', t', a_1', \ldots, a_q')$ if and only if $G_n' \models \xi_{q, \Pi}(s', t', a_1', \ldots, a_q')$.
    \end{claimproof}

    To finish the argument, observe that if $G_n$ and $G_n'$ were distinguishable by a sentence of flip-connectivity logic, then by \Cref{cl:flipconn-to-fo} they would be distinguishable by a sentence of first-order logic.
    However, by a standard argument using Ehrenfeucht--Fraïssé games we know that $G_n$ and $G_n'$ are indistinguishable by \fo sentences of quantifier rank $o(\log n)$.
    This finishes the example and shows that on all graphs, low rank \mso is strictly more expressive than flip-connectivity logic.
\end{proof}

\subsection{VC dimension and related notions} \label{ssec:vcdim}
Before we continue to our positive result for flip-connectivity logic, i.e. \cref{thm:main-fconn-positive} that states that for every class of bounded Vapnik--Chervonenkis (VC) dimension, low rank \mso and flip-connectivity logic are equivalent, we start with a number of definitions that explain this notion.

A {\em{set system}} over a universe $U$ is just a family $\Ff$ of subsets of $U$. For a subset of the universe $X\subseteq U$, we say that $X$ is {\em{shattered}} by $\Ff$ if for every $Y\subseteq X$ there exists $F\in \Ff$ such that $F\cap X=Y$. The {\em{VC dimension}} of $\Ff$ is the largest cardinality of a set shattered by $\Ff$. The VC dimension of a graph $G$ is the VC dimension of the set system of neighborhoods $\{N(v)\colon v\in V(G)\}$; and a graph class $\Cc$ has {\em{bounded VC dimension}} if there is $d\in \N$ such that every member of $\Cc$ has VC dimension at most $d$.

Next, we will need the following definition of duality of a binary relation, and its connection to the notion of VC dimension. The following definitions and results are taken from~\cite{incremental-lemma}.

\begin{definition}
    Let $E \subseteq A \times B$ be a binary relation.
    We say that $E$ has a {\em{duality}} of order $k$ if at least one of two cases holds:
    \begin{itemize}[nosep]
        \item there exists $A_0 \subseteq A$ of size at most $k$ such that for every $b \in B$ there is some $a \in A_0$ with $\neg E(a, b)$, or
        \item there exists $B_0 \subseteq B$ of size at most $k$ such that for every $a \in A$ there is some $b \in B_0$ with $E(a, b)$.
    \end{itemize}
\end{definition}

\begin{theorem}[see {\cite{incremental-lemma}}]
    \label{thm:vc-dim-duality}
    For every $d \in \N$ there is some $k \in \N$ such that the following holds.
    Let $E \subseteq A \times B$ be a binary relation with both $A$ and $B$ finite such that the set system $\{\setof{b \in B}{E(a, b)}\colon a \in A\}$ has VC dimension at most $d$.
    Then $E$ has a duality of order $k$.
\end{theorem}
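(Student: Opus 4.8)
The plan is to prove the statement by a Ramsey-type / compactness argument, reducing it to the standard fact that a bipartite relation of bounded VC dimension either contains a large ``half-graph'' ladder or is controlled by a bounded set. First I would recall the classical tools: the Sauer--Shelah lemma bounds the number of distinct ``traces'' $\{F \cap X : F \in \Ff\}$ by $O(|X|^d)$ when the VC dimension is at most $d$, and consequently the dual shatter function is also polynomial, so the dual VC dimension is bounded by some $d^\star = d^\star(d)$ (one has $d^\star < 2^{d+1}$). This symmetry in $A$ and $B$ is what makes the two bullet points of the definition of duality natural to aim for.

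The core of the argument is the following dichotomy, which I would state as the main lemma: for every $d$ there is $k = k(d)$ such that for every relation $E \subseteq A \times B$ of VC dimension at most $d$, either there is $B_0 \subseteq B$ with $|B_0| \le k$ such that every $a \in A$ satisfies $E(a,b)$ for some $b \in B_0$ (i.e.\ $B_0$ is a ``dominating set'' for $E$ on the $A$-side), or else there is $A_0 \subseteq A$ with $|A_0| \le k$ that is an ``independent transversal'' in the sense that every $b \in B$ misses (has $\neg E$ with) some element of $A_0$. I would prove this by contradiction: if no $B_0$ of size $k$ dominates, then greedily one finds $a_1, a_2, \ldots$ together with $b_1, b_2, \ldots$ such that $\neg E(a_i, b_j)$ for all $j \le i$ while $b_i$ is chosen to ``escape'' the current partial domination — giving a half-graph of height growing with $k$; but a half-graph of height $h$ shatters a set of size $\lfloor \log_2 h \rfloor$ on one of the two sides, contradicting the VC-dimension bound (on $A$'s side) or the dual-VC-dimension bound (on $B$'s side) once $k$ is chosen large enough relative to $2^{d^\star}$. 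The bookkeeping of this greedy extraction — making sure that each newly chosen row/column is genuinely new and that the accumulated pattern is exactly a half-graph — is the step that needs care, but it is the standard ``ladder $\Rightarrow$ shattering'' computation.

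The main obstacle I anticipate is purely one of packaging rather than of mathematical depth: the statement as written is a clean existential ``$E$ has a duality of order $k$'', and one has to be careful that the two asymmetric cases in the definition line up correctly with the two outcomes of the half-graph dichotomy (the $A$-side failure uses VC dimension directly, the $B$-side failure uses dual VC dimension via Sauer--Shelah). One also has to track that $k$ depends only on $d$ and not on $|A|, |B|$; since all the extracted structures have size bounded purely in terms of $d$ (via $2^{d^\star}$), this is automatic once the greedy process is set up. Since the paper explicitly cites~\cite{incremental-lemma} for this result, I would in the write-up simply invoke it, but the sketch above is the proof I would reconstruct: Sauer--Shelah to bound dual VC dimension, then the half-graph extraction to force one of the two bullets.
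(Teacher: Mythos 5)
The paper does not prove this theorem; it invokes it as a black box from~\cite{incremental-lemma}, as you note at the end. Your proposed reconstruction, however, breaks at its central step.

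The claim that ``a half-graph of height $h$ shatters a set of size $\lfloor \log_2 h\rfloor$ on one of the two sides'' is false. Consider the half-graph $E\subseteq [h]\times[h]$ with $E(i,j)\iff i\le j$. For any $X$ on the column side, the trace $N(i)\cap X=\{j\in X : j\ge i\}$ is a suffix of $X$ in the natural order, so there are at most $|X|+1$ distinct traces; no $2$-element set is shattered, and the VC dimension of a half-graph is $1$ regardless of $h$, on both sides (the sets $N(1)\supseteq N(2)\supseteq\cdots$ form a chain, and a chain cannot shatter a pair). Consequently, your greedy extraction of a tall half-graph from the failure of both alternatives of the duality is \emph{consistent} with VC dimension $1$ and produces no contradiction. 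Model-theoretically, this is precisely the gap between NIP (bounded VC dimension) and stability (bounded ladder index): the former does not imply the latter, and a family of VC dimension $1$ can contain half-graphs of every finite height. So the ``ladder $\Rightarrow$ shattering'' computation you gesture at does not exist, and the dichotomy lemma is not proved.

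A route that does work keeps your Sauer--Shelah / dual-VC ingredient but replaces the ladder extraction with the $\epsilon$-net theorem together with LP duality, applied once on each side. Let $N(a)=\{b\in B : E(a,b)\}$, $\Ff=\{N(a) : a\in A\}$, and let $\tau^*(\Ff)$ be the optimum of the fractional transversal LP for $\Ff$ over $B$. Fix $T=2$. If $\tau^*(\Ff)\le T$, normalize an optimal fractional transversal to a probability measure $\nu$ on $B$, so that $\nu(N(a))\ge 1/T$ for every $a\in A$; a $(1/3)$-net for $\Ff$ with respect to $\nu$, of size $O(d)$ by Haussler--Welzl since $\Ff$ has VC dimension at most $d$, hits every $N(a)$ and witnesses the second bullet of the duality. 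If instead $\tau^*(\Ff)>T$, LP duality gives a probability measure $\mu$ on $A$ with $\mu\{a : E(a,b)\}<1/T$ for every $b\in B$, hence each complement $\{a : \neg E(a,b)\}$ has $\mu$-measure greater than $1/2$; these complements form a family over $A$ whose VC dimension is at most the dual VC dimension $d^\star<2^{d+1}$ of $\Ff$, so a $(1/2)$-net of size $O(d^\star)$ hits all of them and witnesses the first bullet. Either branch yields a duality of order $k=O(2^d)$, a bound of the same shape you aimed for, but the combinatorial engine is $\epsilon$-nets and LP duality rather than a half-graph.
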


In the proof of equivalence of low rank \mso and flip-connectivity logic, we will use the following result proven by Bonnet, Dreier, Gajarsk\'y, Kreutzer, M\"ahlmann, Simon, and Toru\'nczyk~\cite{incremental-lemma}.
Here, a {\em{pseudometric}} is a symmetric function $\delta\colon V \times V \to \R_{\geq 0} \cup \set{+\infty}$ satisfying the triangle inequality; and for a partition $\Pp$ of a set $V$, by $\Pp(v)$ we denote the unique part of $\Pp$ containing $v$.

\begin{theorem}
    [{\cite[Theorem~3.5]{incremental-lemma}}]
    \label{thm:incremental-lemma}
    Fix $r, k, t \in \N$.
    Let $V$ be a finite set equipped with:
    \begin{itemize}[nosep]
        \item a binary relation $E \subseteq V \times V$ such that for all $A \subseteq V$ and $B \subseteq V$, $E \cap (A \times B)$ has a duality of order~$k$,
        \item a pseudometric $\dist\colon V \times V \to \R_{\geq 0} \cup \set{+\infty}$, and
        \item a partition $\Pp$ of $V$ with $|\Pp| \leq t$.
    \end{itemize}
    Suppose further that $E(u, v)$ depends only on $\Pp(u)$ and $\Pp(v)$, for all $u, v$ with $\dist(u, v) > r$. (That is, whenever $\dist(u,v)>r$, $\dist(u',v')>r$, $\Pp(u)=\Pp(u')$, and $\Pp(v)=\Pp(v')$, we have $E(u,v)\iff E(u',v')$.)
    Then there is a set $S \subseteq V$ of size $\Oh(kt^2)$ such that $E(u, v)$ depends only on $E(u, S)$ and $E(S, v)$ for all $u, v \in V$ with $\dist(u, v) > 5r$.
\end{theorem}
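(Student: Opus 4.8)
The plan is to construct $S$ \emph{incrementally}: repeatedly enlarge it by a bounded number of vertices supplied by the duality assumption, until it is good, and then bound the number of enlargements.

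First I would distill the hypothesis into a single object. Since $E(u,v)$ depends only on $\Pp(u)$ and $\Pp(v)$ whenever $\dist(u,v)>r$, there is a function $c\colon\Pp\times\Pp\to\{0,1\}$ with $E(u,v)=c(\Pp(u),\Pp(v))$ for every such pair. In particular, on pairs at distance more than $5r$ the relation $E$ is entirely controlled by $c$, so the conclusion is equivalent to the following: choose $S$ with $|S|=\Oh(kt^2)$ so that whenever $E(u,S)=E(u',S)$, $E(S,v)=E(S,v')$, $\dist(u,v)>5r$ and $\dist(u',v')>5r$, we have $c(\Pp(u),\Pp(v))=c(\Pp(u'),\Pp(v'))$. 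I will refer to a failure of this statement as a \emph{bad quadruple} $(u,u',v,v')$.

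Now the construction. Start with $S\coloneqq\emptyset$. As long as $S$ admits a bad quadruple, pick an ordered pair of parts $(P,Q)$ that is the ``part-pattern'' of some still-unresolved bad quadruple, let $A\subseteq P$ and $B\subseteq Q$ be the sets of vertices that currently occur on the respective sides of such a quadruple, and apply the duality assumption to $E\cap(A\times B)$. It hands us either $A_0\subseteq A$ of size at most $k$ seeing a non-edge into every vertex of $B$, or $B_0\subseteq B$ of size at most $k$ seeing an edge from every vertex of $A$; add this set to $S$. This strictly refines the equivalence ``$E(\cdot,S)$ and $E(S,\cdot)$ agree'', separating one side of the chosen part-pattern from the part it was conflated with. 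The crucial point for correctness is that, because $A$ and $B$ were defined using the distance bound from the bad quadruples, the added witnesses lie at distance more than some fixed multiple of $r$ from the vertices they must separate; the triangle inequality then lets one replace a vertex by a ``representative'' of its part while staying in the regime $\dist>r$, and this is exactly where the slack between $r$ and $5r$ is consumed. Since there are at most $t^2$ ordered pairs of parts and, by a monotonicity argument on the refinement of $\Pp$ by traces, each pair can be charged only a bounded number of times, the process stops after adding $\Oh(kt^2)$ vertices in total; this is the only place where $|\Pp|\le t$ is used essentially.

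The hard part, I expect, is the correctness bookkeeping rather than the construction. Two issues intertwine. First, the distance estimates: one has to choose the restrictions on which duality is invoked so that the witnesses are provably far from the relevant vertices, and then chain several triangle inequalities so that every pair appearing in the verification stays at distance $>r$, where $c$ is meaningful. Second, and more subtly, the two coordinates are not symmetric and $E(S,v)=E(S,v')$ does not by itself give $\Pp(v)=\Pp(v')$; so one must argue that by the time the iteration halts, trace-equivalent parts are genuinely indistinguishable by $c$ on far-apart pairs, i.e. the termination condition really has been reached and not merely approximated. Reconciling these two demands --- enough distance slack for the triangle-inequality transfers, while still forcing the refinement to stabilize within $\Oh(t^2)$ rounds --- is the technical heart of the argument.
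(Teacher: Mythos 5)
The paper does not actually prove this statement: it is imported verbatim as \cite[Theorem~3.5]{incremental-lemma} (the ``incremental lemma'' of Bonnet, Dreier, Gajarsk\'y, Kreutzer, M\"ahlmann, Simon, Toru\'nczyk) and used as a black box in the proof of \Cref{lem:low-rank-sets-in-flipconn}. So there is no in-paper proof to compare your attempt against; I can only assess the sketch on its own terms.

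Your reformulation is correct: since $E$ agrees with a function $c\colon \Pp\times\Pp\to\{0,1\}$ on all pairs at distance $>r$, the goal is a small $S$ such that $E(u,S)=E(u',S)$, $E(S,v)=E(S,v')$, $\dist(u,v)>5r$, $\dist(u',v')>5r$ together force $c(\Pp(u),\Pp(v))=c(\Pp(u'),\Pp(v'))$. The overall shape of your argument (build $S$ incrementally, fund each extension by one application of the duality hypothesis, charge extensions to ordered pairs of parts, conclude $|S|=\Oh(kt^2)$) is in the spirit of the original ``incremental'' argument. But two of the three steps you flag as ``bookkeeping'' are, in fact, the mathematical content, and as written they do not go through. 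First, you never argue \emph{why} adding a duality witness set $A_0$ (or $B_0$) makes progress. Duality gives a set $A_0\subseteq A$ such that every $b\in B$ has some $a\in A_0$ with $\neg E(a,b)$; this is a domination-type statement and does not by itself split any pair $v,v'$ with $E(S,v)=E(S,v')$, let alone a bad quadruple. Some argument is needed that the dominating set, combined with the fact that $c$ controls $E$ at distance $>r$, forces a distinction --- and this is precisely where the ball-radius bookkeeping (the passage from $r$ to $5r$) has to enter, not merely as slack but as the reason the witnesses are usable at all. Second, the termination bound ``each ordered pair of parts is charged $\Oh(1)$ times by a monotonicity argument'' is asserted but not given; since $A_0\subseteq P$ consists of vertices of the \emph{same} part you are trying to split on the $u$-side, it is not clear that the trace on $A_0$ refines anything relevant, and a priori the same pair $(P,Q)$ could be charged many times. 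You would need a concrete potential function (for instance, the number of ordered pairs of parts that are still ``conflated at distance $>5r$'' by the current $S$) and a proof that each round strictly decreases it, together with an argument that $t^2$ is an upper bound on its initial value.

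In short: the strategy is a reasonable guess at how an incremental/duality argument should go, and you are candid about where the gaps are, but the two gaps you defer (why the duality witness refines, and why the charging terminates in $\Oh(t^2)$ rounds) are exactly the parts that constitute the proof. Without them the sketch cannot be certified as correct, and the specific difficulty with $A_0\subseteq P$ suggests the charging scheme needs to be set up differently than you describe.
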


\subsection{Positive result}

Now we proceed to the proof of \Cref{thm:main-fconn-positive}.
We restate it here for convenience.
\mainFconnPositive*

We start with the following combinatorial lemma that is the analogue of \cref{lem:characterization-separator}. Intuitively, it characterizes any set of low rank in a graph $G$ of bounded VC dimension as the union of a collection of connected components in a graph obtained from $G$ by a small flip.
\begin{lemma}
    \label{lem:low-rank-sets-in-flipconn}
    Fix a constant $r \in \N$ and a class  $\Cc$ of graphs of bounded VC dimension.
    Then there exists a constant $\ell \in \N$ such that for every graph $G \in \Cc$ and every set $A \subseteq V(G)$ of rank at most $r$, there is a set $S \subseteq V(G)$ of size at most $\ell$ and a symmetric flip $G'$ of $G$ with parameters $S$ such that $A$ is the union of the vertex sets of a collection of connected components of $G'$.
\end{lemma}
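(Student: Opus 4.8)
The plan is to produce the set $S$ and the flip by invoking the ``incremental lemma'' (\cref{thm:incremental-lemma}), feeding into it the bounded VC dimension of $\Cc$ through \cref{thm:vc-dim-duality}. Fix $G\in\Cc$ and $A\subseteq V(G)$ with $\rk(A)\le r$, and write $\wh A\coloneqq V(G)\setminus A$. First I would observe that it suffices to find $S$ of bounded size and a symmetric $A$-flip $G'$ of $G$ with parameters $S$ having \emph{no edge} between $A$ and $\wh A$: then every connected component of $G'$ is contained in $A$ or in $\wh A$, so $A$ is exactly the union of the components of $G'$ that it contains. Consider $M\coloneqq\Adj_G[A,\wh A]$. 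Since $\rk(A)\le r$, the matrix $M$ has at most $2^r$ distinct rows and at most $2^r$ distinct columns (cf.\ \cref{lem:equiv-measures}). Partitioning $A$ by the row of $M$ and $\wh A$ by the column of $M$ gives a partition $\Pp$ of $V(G)$ with $|\Pp|\le t\coloneqq 2^{r+1}$, whose key property is that \emph{for $u\in A$ and $v\in\wh A$, whether $uv\in E(G)$ depends only on $\Pp(u)$ and $\Pp(v)$} --- even though within $A$ or within $\wh A$ the adjacency may be arbitrary.

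Next I would equip $V(G)$ with the pseudometric $\dist$ that is $0$ on pairs of vertices lying on the same side of $\{A,\wh A\}$ and $+\infty$ on pairs on opposite sides; as the ultrametric of a two-part partition it satisfies the triangle inequality. Since $\Cc$ has bounded VC dimension, \cref{thm:vc-dim-duality} applied to each restriction $E(G)\cap(A'\times B')$ --- whose associated set system has VC dimension no larger than that of $G$ --- shows that $E(G)$ has a duality of some order $k$ on all products, with $k$ depending only on the VC-dimension bound of $\Cc$. Then \cref{thm:incremental-lemma} applies with this $\dist$, $\Pp$, $k$, $t$ and radius parameter $0$ (or any fixed value; since $\dist$ is $\{0,+\infty\}$-valued, the precise threshold is immaterial): its ``$E(u,v)$ depends only on $\Pp(u),\Pp(v)$ when $\dist(u,v)>r$'' hypothesis is exactly the patternedness noted above. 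The lemma yields $S\subseteq V(G)$ with $|S|=\Oh(kt^2)=:\ell$ --- a bound depending only on $r$ and $\Cc$ --- and a single function $g$ such that for all $u\in A$ and $v\in\wh A$ (which automatically have $\dist(u,v)>0$) one has $uv\in E(G)$ if and only if $g\bigl(E(u,S),E(v,S)\bigr)=1$, where $E(u,S)$ denotes the adjacency pattern of $u$ to $S$.

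Finally I would read off the flip. Because $E(G)$ is symmetric, $g$ is symmetric on the set of pattern pairs it is realized on; so, writing $\pi_\sigma$ for the adjacency-to-$S$ pattern encoded by an atomic type $\sigma\in\atp^{|S|+1}$, the relation $\Pi\coloneqq\setof{(\sigma,\tau)}{g(\pi_\sigma,\pi_\tau)=1}$ is a symmetric subset of $\atp^{|S|+1}\times\atp^{|S|+1}$. In $G'\coloneqq G\oplus_S\Pi$, the defining xor-condition for arcs shows, for every $u\in A$ and $v\in\wh A$, that $[uv\in E(G)]$ and $[(\atp(u,S),\atp(v,S))\in\Pi]$ agree, so $G'$ has no arc between $A$ and $\wh A$; as $\Pi$ is symmetric, $G'$ is undirected, and $A$ is a union of its components, as required. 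I expect the only genuinely delicate point to be the symmetry of $g$ --- this is where undirectedness of $G$ is used, and it is what lets one \emph{symmetric} relation $\Pi$ erase the $A$-to-$\wh A$ and $\wh A$-to-$A$ adjacencies simultaneously; the rest (checking that restriction preserves the duality bound so \cref{thm:incremental-lemma} applies uniformly, and handling the harmless fact that the vertices of $S$ form singleton classes of the flip yet still carry their patterns to $S$) is routine bookkeeping.
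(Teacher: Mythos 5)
Your proposal follows essentially the same route as the paper: the same partition of $V(G)$ by row/column patterns of $\Adj_G[A,\wh A]$, the same $\{0,+\infty\}$-valued pseudometric separating $A$ from $\wh A$, and the same combination of \cref{thm:vc-dim-duality} with \cref{thm:incremental-lemma} to extract a bounded set $S$ such that cross-adjacency is determined by patterns to $S$. Your extra care in spelling out the symmetry of the resulting function $g$ and the construction of the symmetric relation $\Pi$ is a level of detail the paper glosses over with ``Therefore, there exists a symmetric flip $G'$\ldots'', but it is not a different argument.
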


\begin{proof}
    Since $\Cc$ has bounded VC dimension, by \cref{thm:vc-dim-duality} there is some $k \in \N$ such that the edge relation of every graph $G \in \Cc$ has a duality of order $k$.
    Take a graph $G \in \Cc$ and a subset $A \subseteq V(G)$ of rank at most $r$.
    Partition $A$ according to the edge types on the complement of $A$, i.e. two vertices $u, v \in A$ are in the same part if for every $w \in V(G) \setminus A$ we have $E(u, w) \iff E(v, w)$.
    Since $A$ has rank at most $r$, the number of parts is at most $2^r$.
    Similarly, partition the complement of $A$ according to the edge types on $A$; again we get at most $2^r$ parts.
    In this way we obtain a partition $\Pp$ of $V(G)$ into at most $2^{r + 1}$ parts.

    Define a pseudometric on $V(G)$ as follows:
    \[
        \dist(u, v) =
        \begin{cases}
            0 & \text{if $u, v \in A$,} \\
            0 & \text{if $u, v \in V(G) \setminus A$,} \\
            +\infty & \text{otherwise.}
        \end{cases}
    \]

    Note that $E(u, v)$ depends only on $\Pp(u)$ and $\Pp(v)$, for all $u, v$ with $\dist(u, v) > 1$.
    Therefore, by \Cref{thm:incremental-lemma} there is a set $S \subseteq V(G)$ of size at most $\ell$ for some constant $\ell = \ell(k, r)$ such that $E(u, v)$ depends only on $E(u, S)$ and $E(v, S)$ for all $u, v \in V(G)$ with $\dist(u, v) > 5$.
    In particular, for every $u \in A$ and $v \in V(G) \setminus A$ we have that $E(u, v)$ depends only on $E(u, S)$ and $E(v, S)$.
    Therefore, there exists a symmetric flip $G'$ of $G$ with parameters $S$ that has no edges between $A$ and $V(G) \setminus A$.
    Naturally, then $A$ is the union of the vertex sets of a collection of connected components of $G'$.
\end{proof}

With the combinatorial characterization of low rank sets in graphs of bounded VC dimension given by \Cref{lem:low-rank-sets-in-flipconn}, we can prove that flip-connectivity logic has low rank definability property over any class of bounded VC dimension, as postulated in \Cref{def:low-rank-definability}.
Similarly as in \cref{sec:separator}, we first prove this property just for rank $0$ sets, i.e.\ unions of the vertex sets of connected components.

\begin{lemma}
    \label{lem:rank-0-definable}
    Let $\varphi(X, \wtup Y, \tup z)$ be a formula of flip-connectivity logic with free set variables $X, \wtup Y$ and free vertex variables $\tup z$.
    Then there is a formula $\psi(x, \tup t, \wtup Y, \tup z)$ of flip-connectivity logic such that the following holds:
    For every graph $G$ and evaluations $\wtup B$ of $\wtup Y$ and $\tup c$ of $\tup z$, if there is a set $A \subseteq V(G)$ of rank $0$ such that $G \models \varphi(A, \wtup B, \tup c)$, then there exists a set $A' \subseteq V(G)$ of rank $0$ and an evaluation $\tup d$ of variables $\tup t$ such that
    \[G \models \varphi(A', \wtup B, \tup c)\qquad \textrm{and}\qquad A' = \psi(G, \tup d, \wtup B, \tup c).\]
\end{lemma}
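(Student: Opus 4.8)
The plan is to mimic the structure of the proof of \cref{lem:rank0-separator}, but with two adaptations: first, we must handle the free variables $\wtup Y, \tup z$, which were absorbed into the signature in the separator-logic case; second, and more importantly, we must find a way to \emph{connectedly address} an individual component of $G$ — and here flip-connectivity predicates are strictly more powerful than plain connectivity. As a first step I would reduce to the parameter-free case: treat the evaluations $\wtup B$ of $\wtup Y$ as extra unary predicates and $\tup c$ of $\tup z$ as extra marked singleton predicates, run the argument on the enriched signature, and reinterpret the added predicates as free variables at the end. So from now on $\varphi(X)$ has only the free set variable $X$, and a rank-$0$ set $A$ is exactly a union of vertex sets of connected components of $G$.

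Next I would set up the type machinery exactly as in \cref{lem:rank0-separator}: let $\Sigma$ be the unary predicates of $\varphi$, let $q$ be its quantifier rank, and for a $\Sigma$-colored graph $H$ let its \emph{type} be the set of flip-connectivity sentences of quantifier rank at most $q+1$ true in $H$; there are boundedly many types $\cal T$, and for each $\tau$ a sentence $\psi_\tau$ defining it. Let $\cal C$ be the connected components of $G$, partitioned as $\{\cal C_\tau\colon \tau\in\cal T\}$ by type. An Ehrenfeucht--Fra\"iss\'e argument for flip-connectivity logic — which, crucially, is a perfectly ordinary EF game once one observes that a flip-connectivity predicate of arity $k+2$ restricted to a disjoint union $H_1\sqcup H_2$ of connected $\Sigma$-structures is determined by the atomic types of the parameters together with the connectivity relation \emph{within each $H_i$ separately} — gives a constant $p=p(q)$ such that if $A$ contains more than $p$ components from $\cal C_\tau$ and misses more than $p$ components from $\cal C_\tau$, then removing any single $C\in\cal C_\tau$ contained in $A$ preserves satisfaction of $\varphi$. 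Iterating, we get an $A'$ of rank $0$ with $G\models\varphi(A')$ such that for each $\tau$, $A'$ contains at most $p$ components of $\cal C_\tau$ or all but at most $p$ of them.

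It remains to \emph{define} such an $A'$ by a formula $\psi(x,\tup t)$. For each $\tau$ take a $p$-tuple of variables $\tup t_\tau$ and let $\tup t$ be their concatenation. The key subroutine: a flip-connectivity formula $\beta_\tau(x, x')$ expressing ``$x$ and $x'$ lie in the same connected component of $G$ and that component has type $\tau$''. Same-component is $\flipconn_{0,A_\varnothing}(x,x')$ with the empty flip (ordinary connectivity), and ``the component of $x$ has type $\tau$'' is expressible because, inside the unique component of $x$, each quantified vertex can be relativized by insisting it is connected to $x$, turning $\psi_\tau$ into a formula evaluated in that component — flip-connectivity predicates relativize correctly since connectivity inside the component is unaffected by vertices outside it. With $\beta_\tau$ in hand, for each $g\colon\cal T\to\{0,1\}$ write $\alpha_g(x,\tup t)$ saying ``if $x$'s component has type $\tau$, then it contains a vertex among $\tup t_\tau$ iff $g(\tau)=0$'', and finally let $\psi(x,\tup t)$ assert: for some $g$ we have $G\models\varphi(\alpha_g(G,\tup t))$, and $\alpha_{g^\circ}(x,\tup t)$ holds for the lexicographically least such $g$. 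Note $G\models\varphi(\alpha_g(G,\tup t))$ is a \emph{sentence} in the free variables $\tup t$ — obtained by substituting the formula $\alpha_g$ for $X$ inside $\varphi$, which flip-connectivity logic permits since it is closed under such first-order substitutions — and $\alpha_g(G,\tup d)$ always has rank $0$ as a union of components. Choosing $\tup d$ to witness the components whose existence is guaranteed above completes the argument.

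The main obstacle is the Ehrenfeucht--Fra\"iss\'e claim, i.e.\ proving that flip-connectivity sentences cannot distinguish $\bigsqcup_i C_i$ from $\bigsqcup_i C_i$ with one extra type-$\tau$ copy, once there are already $p$ present on both sides of $A$. The subtlety is that a flip-connectivity predicate can mix parameters from many different components, so one must verify that the duplicator's component-matching strategy still controls these predicates; the point is that in a disjoint union, a flip with parameters $\tup a$ never creates edges \emph{between} components whose vertices all avoid $\tup a$ (within-component adjacency classes only flip against within-component classes, plus the parameter singletons), so reachability in the flip decomposes componentwise, and the standard pigeonhole/threshold argument over the boundedly many component-types goes through.
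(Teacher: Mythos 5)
Your approach is indeed the one the paper takes: the paper's own proof of \Cref{lem:rank-0-definable} simply states that it is ``essentially the same as'' \Cref{lem:rank0-separator}, with flip-connectivity types replacing separator-logic types. However, the details you supply to fill in that remark contain a genuine error, and it recurs in two places. You assert that ``a flip with parameters $\tup a$ never creates edges between components whose vertices all avoid $\tup a$'', on the grounds that ``within-component adjacency classes only flip against within-component classes.'' This is false. In a flip, vertices are classified by their atomic type with respect to $\tup a$, and for a vertex in a component disjoint from $\tup a$ that type is determined solely by its unary predicates (it is neither equal nor adjacent to any $a_i$). Hence two vertices with the same colors lying in two \emph{different} components, both disjoint from $\tup a$, belong to the same class, and flipping that class against itself (or against another such class) creates cross-component edges. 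Reachability in the flip therefore does \emph{not} decompose componentwise, and the stated justification of the Ehrenfeucht--Fra\"iss\'e threshold claim does not go through as written.

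The same misconception undermines the relativization you use to build $\beta_\tau$: you claim that ``flip-connectivity predicates relativize correctly since connectivity inside the component is unaffected by vertices outside it.'' While $(G \oplus_{\tup a} R)$ restricted to $C_x$ does coincide with $C_x \oplus_{\tup a} R$ when $\tup a \subseteq C_x$, the flip of the full graph $G$ may add edges from $C_x$ to the rest of $G$, so two vertices of $C_x$ lying in different connected components of $C_x \oplus_{\tup a} R$ can nevertheless be connected in $G \oplus_{\tup a} R$ via a detour through other components. Thus $\flipconn_{k,R}$ evaluated in the standalone component $C_x$ is not recovered by simply relativizing the $G$-level predicate, and you have not actually produced a flip-connectivity formula detecting the standalone flip-connectivity type of $C_x$. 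To be fair, the paper itself is extremely terse at this point and does not spell out these details either; your attempt is useful precisely because it exposes that the passage from separator logic (where $\conn_k$ genuinely decomposes over disjoint unions, so both the EF claim and the relativization of $\psi_\tau$ are routine) to flip-connectivity logic is not as innocuous as the paper's one-line remark suggests, and a complete proof must cope directly with the global effect of flips across components.
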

\begin{proof}
    The proof of this lemma is essentially the same as that of \Cref{lem:rank0-separator}.
    The only substantial difference is that we need to tailor the standard notion of types to flip-connectivity logic.
    Concretely, instead of considering the set of all sentences of separator logic over a given signature and of bounded quantifier rank, we consider the set of all flip-connectivity sentences  over that signature and with the same bound on their quantifier rank.
\end{proof}
    
Now we can prove that flip-connectivity logic has low rank definability property on every class of graphs of bounded VC dimension. The proof essentially repeats the reasoning of \cref{lem:lrdp-separator}.
\begin{lemma}
    \label{lem:property-for-fconn}
    For every graph class $\Cc$ of bounded VC dimension, flip-connectivity logic has low rank definability property on $\Cc$.
\end{lemma}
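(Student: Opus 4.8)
The plan is to mimic the proof of \cref{lem:lrdp-separator} almost verbatim, now with flip-connectivity logic in place of separator logic and with \cref{lem:low-rank-sets-in-flipconn} in place of \cref{lem:characterization-separator}. Fix $r \in \N$ and a formula $\varphi(X, \wtup Y, \tup z)$ of flip-connectivity logic, as in \cref{def:low-rank-definability}. As in \cref{lem:lrdp-separator}, I would first reduce to the case $\wtup Y = \emptyset$ and $\tup z = \emptyset$: expand the signature by $|\wtup Y| + |\tup z|$ unary predicates marking the evaluations of the free variables, run the argument on the resulting formula with one free set variable $X$, and at the end reinterpret the extra predicates back as free variables. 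So from now on $\varphi$ has a single free set variable $X$.

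Now take any $G \in \Cc$ and $A \subseteq V(G)$ with $\rk(A) \le r$. By \cref{lem:low-rank-sets-in-flipconn} there is a constant $\ell$ depending only on $r$ and $\Cc$, a set $S \subseteq V(G)$ with $|S| \le \ell$, and a symmetric flip $G_0$ of $G$ with parameters $S$ such that $A$ is the union of the vertex sets of a collection of connected components of $G_0$. Fix an enumeration of $S$ and add to $G_0$, for each atomic type over $S$ in $G$, the unary predicate marking the vertices of that type; the result $G'$ is an $S$-operation of $G$ with respect to flip-connectivity logic. Since adding unary predicates does not change the edge relation, $A$ is still a union of connected components of $G'$, i.e.\ $A$ has rank $0$ in $G'$.

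It remains to assemble the framework exactly as in \cref{lem:lrdp-separator}. By \cref{lem:operation-forawrd} there is a formula $\psi(X)$, depending only on $\varphi$ and $|S|$, such that $G \models \varphi(B) \iff G' \models \psi(B)$ for every $B \subseteq V(G)$; in particular $G' \models \psi(A)$. By \cref{lem:rank-0-definable}, applied over the language of $\ell$-operations of $\Cc$, there is a formula $\psi'(x, \tup t)$ depending only on $\psi$ and $|S|$, an evaluation $\tup d$ of $\tup t$, and a set $A' \subseteq V(G)$ of rank $0$ in $G'$ with $G' \models \psi(A')$ and $A' = \psi'(G', \tup d)$. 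Finally, \cref{lem:operation-backward} yields a formula $\varphi'(x, \tup t, \tup t')$ depending only on $\psi'$ and $|S|$, with $|\tup t'| = |S|$, such that $G' \models \psi'(u, \tup c) \iff G \models \varphi'(u, \tup c, \tup s)$ for all $u \in V(G)$ and all evaluations $\tup c$ of $\tup t$, where $\tup s$ enumerates $S$. Chaining the three equivalences gives $G \models \varphi(A')$ and $A' = \varphi'(G, \tup d, \tup s)$, so $\varphi'$, with combined vertex-variable tuple $\tup t\,\tup t'$, witnesses the low rank definability property; reinstating the free variables $\wtup Y, \tup z$ completes the argument. All produced formulas and variable tuples depend only on $\varphi$, $r$, and $\Cc$, because $\ell$ does.

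The proof is a modular assembly, so there is no single hard step left here: the genuinely non-routine content has already been isolated into \cref{lem:low-rank-sets-in-flipconn} (the combinatorial structure of low-rank sets, obtained via duality, \cref{thm:vc-dim-duality}, and \cref{thm:incremental-lemma}) and into \cref{lem:rank-0-definable} (the idempotence / Ehrenfeucht--Fra\"iss\'e argument for rank-$0$ sets). The only points that require care are bookkeeping: checking that the flip delivered by \cref{lem:low-rank-sets-in-flipconn} really does promote to an $S$-operation in the sense of \cref{sec:framework} (it does, since an $S$-operation for flip-connectivity logic is by definition a symmetric flip with parameters $S$ augmented with the atomic-type predicates), and verifying that $\ell$ — and hence the arity of the new quantifier $\tup t\,\tup t'$ and the signature of $\ell$-operations — is bounded purely in terms of $r$ and $\Cc$, which is exactly the content of \cref{lem:low-rank-sets-in-flipconn}.
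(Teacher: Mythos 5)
Your proof is correct and follows the same route as the paper: reduce to the $S$-operation picture via \cref{lem:low-rank-sets-in-flipconn}, transfer the formula with \cref{lem:operation-forawrd}, apply the rank-$0$ definability \cref{lem:rank-0-definable} in the $S$-operation, and pull back with \cref{lem:operation-backward}. The only difference from the paper's write-up is cosmetic: you eliminate the extra free variables $\wtup Y,\tup z$ up front via auxiliary unary predicates (mimicking \cref{lem:lrdp-separator}), whereas the paper just carries them through the argument; both treatments are equivalent.
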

\begin{proof}
    Let $r \in \N$ be a bound on the rank and $\varphi(X, \wtup Y, \tup z)$ be a formula of flip-connectivity logic, as in the definition of low rank definability property.
    By \Cref{lem:operation-forawrd} there is a formula $\varphi'(X, \wtup Y, \tup z)$ depending only on $\varphi$ and $r$ such that for every subset $S$ of vertices of $G$ of size at most $\ell$ and every symmetric flip $G'$ of $G$ with parameters $S$, we have
    \[
        G \models \varphi(A, \wtup B, \tup c)\qquad\textrm{if and only if}\qquad G' \models \varphi'(A, \wtup B, \tup c).
    \]
    Fix an evaluation $\wtup B$ of $\wtup Y$ and $\tup c$ of $\tup z$.
    Assume that there is a set $A \subseteq V(G)$ of rank at most $r$ such that $G \models \varphi(A, \wtup B, \tup c)$.
    By \Cref{lem:low-rank-sets-in-flipconn} there is a set $S \subseteq V(G)$ of size at most $\ell$ and an $S$-flip $G'$ of $G$ such that $A$ is a set of rank $0$ in $G'$.
    Also, we know that $G' \models \varphi'(A, \wtup B, \tup c)$.
    So, by \Cref{lem:rank-0-definable} there is a set $A'$ of rank $0$ in $G'$ such that $G' \models \varphi'(A', \wtup B, \tup c)$ and $A' = \psi'(G', \tup d, \wtup B, \tup c)$ for some tuple of parameters $\tup d$ and a formula $\psi'$ that depends only on $\varphi'$ (so only on $\varphi$ and $\ell$).
    By \Cref{lem:operation-backward}, there is a formula $\psi(x, \tup t, \tup t', \wtup Y, \tup z)$ such that $A' = \psi(G, \tup d, \tup s, \wtup B, \tup c)$.
    As all the conditions required from $A', \psi$ are~met,  flip-connectivity logic has low rank definability property on $\Cc$.
\end{proof}

\Cref{thm:main-fconn-positive} follows now from \Cref{lem:property-for-fconn} and \Cref{thm:low-rank-quantifier-elimination}.
\section{Relation to flip-reachability logic}\label{sec:freach}

In this section we prove \Cref{thm:main-freach}. The main part of the work is presented in \cref{sec:low-rank-comb}, where we develop a combinatorial characterization of sets of low rank in undirected graphs. This description is captured in the key technical statement: Low Rank Structure Theorem (\cref{thm:low-rank-structure}). Then, we complete the proof of \Cref{thm:main-freach} in \cref{sec:flip-reach-proof}.

\subsection{Combinatorics of sets of low rank}
\label{sec:low-rank-comb}

For an undirected graph $G$ and $r \in \N$, we define $\LowRank^r(G)$ to be the family of all vertex sets in $G$ of rank at most $r$:
\[
    \LowRank^r(G) \coloneqq \{ X \subseteq V(G) \,\mid\, \rk(X) \leq r \}.
\]
Our main goal in this section is to give an~effective representation of $\LowRank^r(G)$.
More precisely, we will show that $\LowRank^r(G)$ can be written as the~union of $|V(G)|^{\Oh_r(1)}$ \emph{well-structured} families of subsets of $V(G)$, and that each such family can be defined using $\Oh_r(1)$ vertices of $G$.
Formal definitions follow.

Let $G$ be an~undirected graph.
A~\emph{seed} of $G$ is a~tuple $(X_+, X_-, \Xc)$, where
\begin{itemize}[nosep]
 \item $X_+, X_- \subseteq V(G)$,
 \item $\Xc$ is a~collection of nonempty subsets of $V(G)$ (called {\em{parts}}), and
 \item $\Xc \cup \{X_+, X_-\}$ is a~partition of $V(G)$.
\end{itemize}
The family \emph{spanned} by a~seed $(X_+, X_-, \Xc)$ is
\[ \Span(X_+, X_-, \Xc) \coloneqq \{X_+ \cup \, \bigcup \mathcal{Y} \,\colon\, \mathcal{Y} \subseteq \Xc\}. \]
In our characterization of  $\mathsf{LowRank}^r(G)$, the obtained families covering  $\mathsf{LowRank}^r(G)$ can be spanned by seeds with two additional structural restrictions: first, they need to satisfy a~specific uniformity condition, which we will introduce in a~moment; and second, the seeds themselves need to be defined in terms of at most $\Oh_r(1)$ vertices of $G$ via fixed formulas of flip-reachability logic.

For any seed $(X_+, X_-, \Xc)$ and $u \in \bigcup \Xc$, we define $\Xc(u) \in \Xc$ as the unique part of $\Xc$ containing $u$.
Then, given a~tuple $\tup{a}$ of vertices of $G$, we say that a~seed $(X_+, X_-, \Xc)$ is \emph{$\tup{a}$-uniform} if for every $u_1, u_2 \in \bigcup \Xc$ with $\atp(u_1, \tup{a}) = \atp(u_2, \tup{a})$, we have that $N_G(u_1) \setminus (\Xc(u_1) \cup \Xc(u_2)) = N_G(u_2) \setminus (\Xc(u_1) \cup \Xc(u_2))$.
In other words, whenever $u_1$ and $u_2$ are twins with respect to $\tup{a}$, they are also twins with respect to $V(G) \setminus (\Xc(u_1) \cup \Xc(u_2))$.

Next, we say that a~seed $(X_+, X_-, \Xc)$ is \emph{defined} by formulas $\varphi_+$, $\varphi_-$, $\varphi_{\sim}$ of flip-reachability logic and a~$k$-tuple of vertices $\tup{a}$ if:
\begin{itemize}[nosep]
    \item $X_+ = \{v \in V(G) \,\colon\, G \models \varphi_+(v, \tup{a})\}$,
    \item $X_- = \{v \in V(G) \,\colon\, G \models \varphi_-(v, \tup{a})\}$, and
    \item two vertices $u, v \in V(G) \setminus (X_+ \cup X_-)$ belong to the same part of $\Xc$ if and only if $G \models \varphi_\sim(u, v, \tup{a})$.
\end{itemize}
Note that, assuming that $\varphi_+(\cdot, \tup{a})$ and $\varphi_-(\cdot, \tup{a})$ describe disjoint subsets of $V(G)$ and $\varphi_{\sim}(\cdot, \cdot, \tup{a})$ is an~equivalence relation on $V(G)$, there exists a~unique seed defined by $\varphi_+, \varphi_-, \varphi_{\sim}$ and $\tup{a}$.
We denote this seed by $\Seed(\varphi_+, \varphi_-, \varphi_{\sim}, \tup{a})$.

We are now ready to state our structural characterization of vertex sets of bounded rank.
\begin{theorem}[Low Rank Structure Theorem]
    \label{thm:low-rank-structure}
    Let $r \in \N$.
    Then there exist an~integer $k \coloneqq k(r)$ and formulas $\varphi_+$, $\varphi_-$, $\varphi_{\sim}$ of flip-reachability logic with the following properties:
    \begin{itemize}[nosep]
        \item For any undirected graph $G$ and $k$-tuple $\tup{a} \in V(G)^k$, we have that $\varphi_+, \varphi_-, \varphi_{\sim}, \tup{a}$ define an~$\tup{a}$-uniform seed of $G$.
        \item For any undirected graph $G$,
        \[
            \LowRank^r(G) = \bigcup_{\tup{a} \in V(G)^k} \Span(\Seed(\varphi_+, \varphi_-, \varphi_{\sim}, \tup{a})).
        \]
    \end{itemize}
\end{theorem}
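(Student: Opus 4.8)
The plan is to produce, for each $r$, a bounded-size ``index'' $\tup a$ of vertices that controls the rank-$\leq r$ structure, and to read off the seed $(X_+,X_-,\Xc)$ from $\tup a$ via flip-reachability formulas. I would proceed by induction on $r$, with the base case $r=0$ being essentially trivial: a rank-$0$ set is a union of connected components of $G$, so $\tup a$ can be empty, $\varphi_+=\varphi_-=\bot$, $\varphi_\sim(u,v)$ tests ``$u$ and $v$ lie in the same connected component'' (a flip-connectivity, hence flip-reachability, predicate with no parameters), and $\Span$ of the resulting seed is exactly the family of unions of components, which equals $\mathsf{LowRank}^0(G)$; $\tup a$-uniformity is vacuous since $N_G(u_1)\setminus(\Xc(u_1)\cup\Xc(u_2))=\emptyset$ whenever $u_1,u_2$ lie in different components, and equals $N_G(u_1)$ otherwise, giving equality trivially.

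For the inductive step the key observation is the standard rank-reduction trick: if $\rk(A)\leq r$ and $A$ is not a union of components, pick a vertex $a$ with neighbors both inside $A$ and outside $A$; then in the graph obtained by toggling adjacency appropriately around the neighborhood class of $a$ (i.e. a one-parameter flip determined by $a$ and the class structure), the rank of $A$ drops. More precisely, following the twin/diversity viewpoint of \cref{lem:equiv-measures}, I would argue that for \emph{some} choice of a single parameter vertex $a$ and a flip $F\coloneqq G\oplus_{(a)}A_0$ (over all $A_0\subseteq \atp^2\times\atp^2$), the set $A$ has strictly smaller rank in $F$; iterating, after at most some $f(r)$ rounds we reach a graph $F^{(m)}$, obtained from $G$ by a flip with $\Oh_r(1)$ parameters $\tup a$, in which $A$ has rank $0$, i.e. $A$ is a union of connected components of $F^{(m)}$. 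This is where flip-reachability (rather than mere flip-connectivity) is needed: the intermediate flips are directed, and only reachability in a directed flip of the original undirected $G$ can recover the undirected connectivity in $F^{(m)}$; this mirrors the discussion after \cref{thm:main-freach} and is the reason the statement is phrased for flip-reachability logic. The formulas $\varphi_+,\varphi_-$ then mark the vertices forced into/out of $A$ by the flip structure around $\tup a$, and $\varphi_\sim$ tests same-component in the $\tup a$-flip, which is a $\flipreach$ predicate (used in both directions) with parameters $\tup a$.

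The remaining work is to verify the $\tup a$-uniformity clause and the exact equality of the two families. For uniformity: if $u_1,u_2\in\bigcup\Xc$ have $\atp(u_1,\tup a)=\atp(u_2,\tup a)$ but lie in different parts (different components of the $\tup a$-flip $F^{(m)}$), then there is no $F^{(m)}$-edge from $u_1$ or $u_2$ to the other's component; but the edge relation of $F^{(m)}$ between vertices of equal atomic type over $\tup a$ agrees with that of $G$ up to the fixed flip data, so for any $w\notin\Xc(u_1)\cup\Xc(u_2)$ the $G$-adjacency of $u_i$ to $w$ is determined by $\atp(u_i,\tup a)$ together with $w$'s data — hence $N_G(u_1)\setminus(\Xc(u_1)\cup\Xc(u_2))=N_G(u_2)\setminus(\Xc(u_1)\cup\Xc(u_2))$; and if $u_1,u_2$ are in the same part this difference set is simply $N_G(u_i)\setminus\Xc(u_i)$, which is again type-determined. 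For the set equality, ``$\supseteq$'' is easy: every $X\in\Span(\Seed(\varphi_+,\varphi_-,\varphi_\sim,\tup a))$ is, by construction, $X_+$ together with a union of components of an $\Oh_r(1)$-parameter flip of $G$, and one checks (via the bounded diversity of $\Adj_{F^{(m)}}[X,\widehat X]$ and the bounded flip data) that $\rk_G(X)\leq r$ — so I would build in enough slack in $k(r)$ and the formulas, or alternatively intersect with the \fo-definable predicate ``$\rk(X)\leq r$'' from \cref{lem:inter-def} to force this; ``$\subseteq$'' is exactly the rank-reduction argument above, witnessing each $A\in\mathsf{LowRank}^r(G)$ by the $\tup a$ that flattens it. The main obstacle I anticipate is making the rank-reduction step fully \emph{uniform} — choosing the parameter vertices and flip data by a fixed flip-reachability formula rather than ad hoc per graph — and bounding $k(r)$ by a clean function of $r$; handling this carefully, likely by quantifying over all $\Oh_r(1)$-tuples $\tup a$ and all bounded flip data and taking the union (as the theorem statement's $\bigcup_{\tup a}$ already permits), is the crux of the argument.
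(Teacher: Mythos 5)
Your plan diverges substantially from the paper's proof, and several of its key steps are either unproved or would not yield the exact statement of the theorem.

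\paragraph{The rank-reduction step is the core of your argument, and it is asserted rather than proved.} You claim that if $\rk(A)\le r$ and $A$ is not a union of connected components, then one can always pick a single parameter vertex and a flip in which the rank of $A$ strictly drops, and that iterating this reaches rank $0$ in $\Oh_r(1)$ rounds. This is not a standard fact, and it is not obvious. The paper does not do anything of this kind: it instead invokes \cref{prop:small-representative} (Oum--Seymour) to fix, in one shot, a $k\coloneqq 2\cdot 2^r$-tuple $\tup a$ of representatives $R_+$ of $X$ and $R_-$ of $\wh X$, and builds a single \emph{directed} flip $H_{\tup a}=G\oplus_{\tup a}A$ whose arcs encode implications of the form ``$u\in X\Rightarrow v\in X$.'' Even if your rank-reduction claim were true, you would owe a proof of it, and a proof that the number of parameters accumulated over the iterations is bounded only in terms of~$r$.

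\paragraph{Your ``$\supseteq$'' direction does not give rank at most $r$.} If $X$ is a union of connected components of a symmetric flip $F^{(m)}$ of $G$ with $m$ parameters, then $\Adj_G[X,\wh X]$ has at most $|\atp^{m+1}|$ distinct rows, so $\rk_G(X)\le|\atp^{m+1}|$, which is a function of $r$ but vastly larger than $r$. You acknowledge this and propose intersecting with the \fo-definable predicate from \cref{lem:inter-def}, but that does not produce the span of a \emph{seed}; the theorem demands that every set in $\bigcup_{\tup a}\Span(\Seed(\cdots,\tup a))$ already has rank at most $r$. The paper achieves this exactly by baking an \emph{admissibility} condition into the flip $H_{\tup a}$ (checking that $\Adj_G[\tup a^+,\tup a^-]$ has rank $\le r$) and showing in \cref{lem:low-rank-to-suffixes-rtl} that the rows/columns of any nontrivial suffix $X$ collapse onto those of $\tup a^+,\tup a^-$.

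\paragraph{Your justification for needing flip-reachability rather than flip-connectivity is incorrect.} You say the ``intermediate flips are directed'' and only directed reachability can recover undirected connectivity in $F^{(m)}$. But in your construction $F^{(m)}$ is a symmetric flip of the undirected $G$ (otherwise ``$A$ is a union of connected components of $F^{(m)}$'' makes little sense), and connectivity in a symmetric flip is precisely a flip-connectivity predicate — this would make flip-connectivity logic suffice, contradicting \cref{thm:main-fconn-negative}. The actual reason directed reachability is needed is that the paper's flip $H_{\tup a}$ is a genuinely directed graph, $X$ is a \emph{suffix} of $H_{\tup a}$ (a directed closure condition), and the formulas $\varphi_+,\varphi_-,\varphi_\sim$ are built from the reachability preorder $\le_{\tup a}$ on the strongly connected components of $H_{\tup a}$.

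\paragraph{You are missing the paper's second combinatorial lemma entirely.} Converting ``$X$ is a suffix of a flip $H$'' into ``$X\in\Span(\text{seed with boundedly many parameters})$'' is a nontrivial step (\cref{lem:suffixes-to-spans}). Its heart is an antichain argument on the poset of strongly connected components of $H$: one finds \emph{$X_+$-covers} and \emph{$X_-$-covers} of size at most $|\atp^{k+1}|^2$ (\cref{cl:small-upper-covering,cl:small-lower-covering}), exploiting that $H$ is a definable flip of an undirected graph. Nothing in your proposal plays the role of this pigeonhole/antichain argument, and without it there is no reason the seed (in particular, the forced-in set $X_+$ and forced-out set $X_-$) is expressible with a bounded number of vertex parameters. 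Your own closing remark — that making the construction uniform and bounding $k(r)$ is ``the crux of the argument'' and is left unresolved — is accurate; it is precisely that crux that the paper's antichain machinery supplies.
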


The remainder of this section is devoted to the proof of the Low Rank Structure Theorem (\cref{thm:low-rank-structure}).

\subsubsection{Sets of low rank described as suffixes of flips}

As the first step towards the proof of \cref{thm:low-rank-structure}, we show that the family of low-rank vertex sets in $G$ can be characterized as \emph{suffixes} of certain digraphs that are flips of $G$ parameterized by a~bounded number of vertices.
Namely, our goal is to prove the following statement:

\begin{lemma}
    \label{lem:low-rank-to-suffixes}
    Let $r \in \N$.
    Then there exist an~integer $k \coloneqq k(r)$ and a~binary relation $A \subseteq \atp^{k+1} \times \atp^{k+1}$ such that, for any undirected graph $G$, we have
    \[
        \LowRank^r(G) = \bigcup_{\tup{a} \in V(G)^k} \Suffixes(G \oplus_{\tup{a}} A),
    \]
    where $\Suffixes(H)$ denotes the set of suffixes of a~directed graph $H$: \[\Suffixes(H) \coloneqq \{S \subseteq V(H) \,\colon\, \forall_{\vec{uv} \in E(H)}\, u \in S \Rightarrow v \in S\}.\]
\end{lemma}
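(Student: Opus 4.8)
The plan is to prove the two inclusions separately, and the heart of the matter is showing $\LowRank^r(G) \subseteq \bigcup_{\tup a} \Suffixes(G \oplus_{\tup a} A)$. The inclusion $\supseteq$ should be the easier direction: I would argue, following the computation already carried out in the proof of \cref{lem:freach-in-lrmso}, that for any choice of $k$-tuple $\tup a$ and any relation $A \subseteq \atp^{k+1}\times\atp^{k+1}$, a suffix $S$ of $H = G \oplus_{\tup a} A$ has bounded rank in $G$. Indeed, whether there is an edge of $G$ between a vertex $u \in S$ and a vertex $v \notin S$ is forced by the atomic types $\atp(u,\tup a)$ and $\atp(v,\tup a)$: if $u$ and $v$ were adjacent in $G$ in a way not predicted by those types, then exactly one of $\vec{uv}, \vec{vu}$ would be an arc of $H$ (depending on whether $A$ flips the relevant pair), and a suffix cannot have an arc leaving it. So $\Adj_G[S, V(G)\setminus S]$ has at most $|\atp^{k+1}|$ distinct rows, hence rank at most $|\atp^{k+1}|$ over $\F_2$; choosing $k=k(r)$ large enough that $|\atp^{k+1}| \ge r$ is vacuous here — rather we need the \emph{reverse}, so actually for this direction any $k$ works provided we only claim membership in $\LowRank^{|\atp^{k+1}|}(G)$, and we must be careful that the final $k(r)$ we pick still yields rank $\le r$. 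I would resolve this by noting the suffixes we actually need to produce in the other direction will all genuinely have rank $\le r$, so the clean statement is: every suffix of every $G\oplus_{\tup a}A$ has rank $\le |\atp^{k+1}|$, and separately every rank-$\le r$ set \emph{is} such a suffix; if one wants exact equality one intersects with $\LowRank^r$, but I expect the paper to simply define $A$ so that the two sides coincide, or to phrase the lemma with the understanding that $k$ is chosen afterwards.

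For the main inclusion $\subseteq$, fix $G$ and a set $X$ with $\rk_G(X) \le r$. I want to find a $k$-tuple $\tup a$ and (a fixed, $G$-independent) relation $A$ such that $X$ is a suffix of $G \oplus_{\tup a} A$, i.e. such that in $H := G\oplus_{\tup a} A$ there are no arcs leaving $X$. The natural idea: since $\rk_G(X)\le r$, the matrix $M = \Adj_G[X,\wh X]$ has at most $2^r$ distinct rows and at most $2^r$ distinct columns. Pick representative vertices — a set $R_X \subseteq X$ of vertices realizing all distinct rows and $R_{\wh X}\subseteq \wh X$ realizing all distinct columns — so $|R_X|, |R_{\wh X}| \le 2^r$, and let $\tup a$ enumerate $R_X \cup R_{\wh X}$ together with, say, one more bit of marking data if needed; so $k \le 2^{r+1} + O(1)$. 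The point is that the atomic type $\atp(u, \tup a)$ of a vertex $u$ now records its adjacency to all the representatives, and in particular: for $u \in X$, the row of $u$ in $M$ is determined by $u$'s adjacency to $R_{\wh X}$, hence by $\atp(u,\tup a)$; and symmetrically for $v \in \wh X$ the column of $v$ is determined by $\atp(v,\tup a)$. Now the entry $M[u,v]$ — i.e.\ whether $uv \in E(G)$ for $u\in X, v\in \wh X$ — is a function of (row of $u$, column of $v$), hence a function of $(\atp(u,\tup a), \atp(v,\tup a))$. Therefore I can define the relation $A$ to be exactly: $(\sigma,\tau) \in A$ iff $\sigma$ is "an $X$-type", $\tau$ is "a $\wh X$-type", and the corresponding matrix entry is $1$ — except this cannot literally be $G$-independent, because which types are "$X$-types" depends on $G$ and $X$. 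The fix is the standard one: the partition of $\tup a$'s coordinates into the $R_X$-part and the $R_{\wh X}$-part can be encoded into the \emph{order} of the tuple $\tup a$ (first the $\le 2^r$ representatives of $X$, then those of $\wh X$, padded to fixed length), so that an atomic type "knows" which coordinates are supposedly in $X$; and the matrix entry predicted for a pair $(\sigma,\tau)$ can be read directly off $\sigma$ and $\tau$ via the adjacency atoms $E(x_i,x_j)$ they contain. With $A$ defined this way — purely syntactically in terms of atomic types over a tuple of fixed length $k$ — I claim $X$ is a suffix of $G\oplus_{\tup a}A$: for $u \in X$, $v\in\wh X$, the arc $\vec{uv}$ exists in $H$ iff $[uv\in E(G)]\ \text{xor}\ [(\atp(u,\tup a),\atp(v,\tup a))\in A]$, and by construction $(\atp(u,\tup a),\atp(v,\tup a))\in A$ holds iff $uv \in E(G)$, so the xor is false and $\vec{uv} \notin E(H)$. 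Hence no arc leaves $X$ in $H$, i.e.\ $X \in \Suffixes(G\oplus_{\tup a}A)$.

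The main obstacle I anticipate is the bookkeeping needed to make $A$ a single fixed relation that works for all graphs $G$ simultaneously — one must arrange the padding/ordering convention so that the tuple $\tup a$ carries enough information (which of its coordinates "belong to $X$", and the full adjacency pattern among representatives) for $A$ to be readable off atomic types alone, while still being definable without reference to $G$. A minor subtlety is handling the case where $X$ or $\wh X$ has fewer than $2^r$ distinct rows/columns (pad by repeating a representative, or allowing some coordinates of $\tup a$ to coincide — atomic types record equalities $x_i = x_j$, so this is harmless), and the degenerate cases $X = V(G)$ or $X = \emptyset$ (then $M$ is empty and $X$ is trivially a suffix of $G \oplus_{\tup a} A$ for any $\tup a$ and the appropriate $A$, e.g.\ with $\tup a$ arbitrary of the right length). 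Once $A$ and $k$ are pinned down, both inclusions follow as sketched; I would also remark that $A$ is symmetric on the relevant pairs is \emph{not} needed here — we genuinely want a directed flip, since the construction only controls arcs out of $X$, not into it, which is exactly why the statement is phrased with $\Suffixes$ of a digraph rather than unions of components of an undirected graph.
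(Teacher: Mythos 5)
Your overall strategy matches the paper's: take $\tup a$ to consist of a tuple of representatives of $X$ and a tuple of representatives of $\wh X$ (each of size at most $2^r$, so $k = 2\cdot 2^r$), encode the split by the fixed ordering of $\tup a$, and define $A$ purely in terms of atomic types so that the flip cancels every arc from $X$ to $\wh X$. Your $\subseteq$ direction (every rank-$\le r$ set is a suffix of some $G\oplus_{\tup a}A$) is correct as sketched, including the observation that for $u\in X$, $v\in\wh X$ the entry of $\Adj_G[X,\wh X]$ is a function of $(\atp(u,\tup a),\atp(v,\tup a))$ because it equals $E(\varphi^+(u),\varphi^-(v))$ for the representatives.

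The genuine gap is in $\supseteq$, which you flag but then wave away with ``I expect the paper to simply define $A$ so that the two sides coincide.'' It does, and that is a real step your $A$ does not perform. With your $A$, a suffix $X'$ of $G\oplus_{\tup a}A$ only satisfies $\rk(X')\le\rkk(\Adj_G[\tup a^+,\tup a^-])\le 2^r$, not $\le r$; and this slack is not an artifact of the analysis. Concretely, let $G$ be the perfect matching between $\tup a^+=(a_1,\dots,a_{2^r})$ and $\tup a^-=(b_1,\dots,b_{2^r})$, so $\Adj_G[\tup a^+,\tup a^-]=I_{2^r}$. Then under your $A$ the set $X'=\{a_1,\dots,a_{2^r}\}$ is a suffix of $G\oplus_{\tup a}A$ (for $i\ne j$, $a_ib_j\notin E$ and $(\atp(a_i,\tup a),\atp(b_j,\tup a))\notin A$; for $i=j$ both hold, and the xor is $0$ either way), yet $\rk(X')=2^r>r$. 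So equality fails. The paper's fix is an \emph{admissibility} check baked into $A$: from any single type $\atp(v,\tup a)$ one can already read off the induced subgraph $G[\tup a]$, hence whether $\tup a^+,\tup a^-$ are disjoint and whether $\rkk(\Adj_G[\tup a^+,\tup a^-])\le r$. For inadmissible $\tup a$ the flip is defined to make every parameter a universal vertex, so the only suffixes are $\emptyset$ and $V(G)$; for admissible $\tup a$ the rank bound becomes $\le r$. (The paper also adds anchoring arcs — its conditions (i)--(iii) — forcing $\tup a^+\subseteq X'$ and $\tup a^-\cap X'=\emptyset$ and handling vertices with no representative; these streamline the rank computation, but admissibility is the essential ingredient missing from your sketch.) Without it you only get $\LowRank^r(G)\subseteq\bigcup_{\tup a}\Suffixes(G\oplus_{\tup a}A)\subseteq\LowRank^{2^r}(G)$, which is strictly weaker than the lemma.
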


The proof of \Cref{lem:low-rank-to-suffixes} will roughly proceed as follows.
Suppose $G$ is an~undirected graph.
Given a~set $S \subseteq V(G)$, we say that its subset $R_S \subseteq S$ is a~\emph{representative} of $S$ if every vertex $v \in S \setminus R_S$ has a~twin $r_v \in R_S$ with respect to $V(G) \setminus S$.
Equivalently,
\[
    \{ N(v) \setminus S \,\colon\, v \in R_S \} \,=\, \{ N(v) \setminus S \,\colon\, v \in S \}.
\]
Let now $X \subseteq V(G)$ be a set of vertices of rank at most $r$.
We invoke the following standard fact:
\begin{proposition}[\cite{OumS06}]
    \label{prop:small-representative}
    There exists a~representative $R_+$ of $X$ of cardinality at most $2^r$, as well as a~representative $R_-$ of $V(G) \setminus X$ of cardinality at most $2^r$.
\end{proposition}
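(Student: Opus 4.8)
The plan is to read off both representatives directly from the bipartite adjacency matrix $M \coloneqq \Adj_G[X,\wh X]$, using only that $\rk(M) = \rk(X) \le r$. First I would record the twin characterization: for $u,u' \in X$, the sets $N(u)\setminus X$ and $N(u')\setminus X$ coincide — i.e.\ $u,u'$ are twins with respect to $\wh X$ — exactly when $u$ and $u'$ index equal rows of $M$; symmetrically, $v,v'\in \wh X$ are twins with respect to $X$ exactly when they index equal columns of $M$. Hence a representative $R_+$ of $X$ is precisely a subset of $X$ that meets every row-class of $M$, and a representative $R_-$ of $\wh X = V(G)\setminus X$ is precisely a subset of $\wh X$ that meets every column-class of $M$ (here the relevant ``twin with respect to $V(G)\setminus \wh X = X$'' condition is the column condition).

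Next I would bound the number of these classes. Since $\rkk(M)\le r$, the columns of $M$ all lie in an $\F_2$-subspace of $\F_2^{X}$ of dimension at most $r$, and such a subspace has at most $2^r$ vectors; hence $M$ has at most $2^r$ distinct columns. The symmetric argument applied to rows shows that $M$ has at most $2^r$ distinct rows as well. (This is exactly the span computation already carried out in the proof of \cref{lem:equiv-measures}.) It then suffices to pick one vertex of $X$ from each row-class to obtain $R_+$ with $|R_+|\le 2^r$, and one vertex of $\wh X$ from each column-class to obtain $R_-$ with $|R_-|\le 2^r$; by construction every vertex of $X\setminus R_+$ has a twin in $R_+$ with respect to $\wh X$, and every vertex of $\wh X\setminus R_-$ has a twin in $R_-$ with respect to $X$, which is the defining property of a representative.

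There is essentially no obstacle in this argument. The only point that requires a moment's care is that we need the \emph{separate} bound of $2^r$ on the number of distinct rows and on the number of distinct columns of $M$, rather than merely the bound $2^{r+1}$ on their sum that one would obtain by reading $\dv(M)/2 \le 2^r$ off \cref{lem:equiv-measures}; but this stronger statement is precisely what the subspace-span argument in the proof of \cref{lem:equiv-measures} yields, so it can simply be quoted.
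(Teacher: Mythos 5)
Your proof is correct, and since the paper cites this proposition to Oum and Seymour without reproving it, there is no in‑paper proof to compare against; your argument (classify $X$ by rows and $\wh X$ by columns of $\Adj_G[X,\wh X]$, bound each class count by $2^r$ via the row/column‑space span, and pick one vertex per class) is the standard one and, as you note, the key bound is exactly the computation already carried out at the end of the proof of \cref{lem:equiv-measures}.
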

So let us choose the representatives $R_+, R_-$ as above, and assume that these are inclusionwise minimal.
Our aim is now to recover $X$ given $R_+$ and $R_-$; or more precisely, find all possible sets $X \subseteq V(G)$ with $R_+$ forming a~representative of $X$ and $R_-$ forming a~representative of $V(G) \setminus X$.
To this aim, observe that every vertex $v \in V(G)$ has at most one twin $r_v^+ \in R_+$ with respect to $V(G) \setminus X$ (and such a~twin definitely exists when $v \in X$), and similarly, $v$ has at most one twin $r_v^- \in R_-$ with respect to $X$ (and it exists whenever $v \notin X$).
We can now easily reason about some of the vertices in $G$: for instance, a~vertex $v$ must belong to $X$ if $v \in R_+$ or $v$ has no twin in $R_-$ with respect to $X$; and $v$ must not belong to $X$ if $v \in R_-$ or $v$ has no twin in $R_+$ with respect to $V(G) \setminus X$.
What happens with the remaining vertices is a~bit trickier: we will show that there exists a~flip $H \coloneqq G \oplus_{\tup{a}} A$, with $\tup{a}$ consisting precisely of the vertices of $R_+ \cup R_-$, with the property that whenever $\vec{uv}$ is an~arc of $H$, then $u \in X$ implies $v \in X$.
The precise details follow below.

\medskip

We begin by explaining the construction of the flip $G \oplus_{\tup{a}} A$.
Fix $r \in \N$ for the course of the proof and set $k \coloneqq 2 \cdot 2^r$.
We consider every $k$-tuple $\tup{a} \in V(G)^k$ of parameters to consist of two halves $\tup{a}^+ = (a^+_1, \ldots, a^+_{2^r})$, $\tup{a}^- = (a^-_1, \ldots, a^-_{2^r})$; intuitively, for a~low-rank subset $X$ of $V(G)$, $\tup{a}^+$ denotes a~representative set of $X$ with respect to $V(G) \setminus X$, and $\tup{a}^-$ denotes a~representative set of $V(G) \setminus X$ with respect to $X$.
In the following description, we will often slightly abuse the notation and use $\tup{a}^+$ and $\tup{a}^-$ to denote the set of vertices present in the respective tuple of vertices.

For every $v \in V(G)$, we can deduce from the atomic type $\atp(v, \tup{a}^+, \tup{a}^-) \in \atp^{k+1}$ in $G$ the following information:
\begin{itemize}[nosep]
    \item whether $\tup{a}^+$ and $\tup{a}^-$ are disjoint;
    \item the subgraph $G[\tup{a}]$ of $G$ induced by $\tup{a}$;
    \item whether in $G[\tup{a}]$, the set $\tup{a}^+$ of vertices has rank at most $k$;
    \item whether $v \in \tup{a}^+$ and whether $v \in \tup{a}^-$;
    \item a~vertex $\varphi^+(v) \in \tup{a}^+$ with the property that $N(v) \cap \tup{a}^- = N(\varphi^+(v)) \cap \tup{a}^-$; we place $\varphi^+(v) = \bot$ if no such vertex exists, and in case multiple such vertices exist, we choose $\varphi^+(v)$ as the earliest entry of $\tup{a}^+$ with this property;
    \item an~analogous vertex $\varphi^-(v) \in \tup{a}^-$ with the property that $N(v) \cap \tup{a}^+ = N(\varphi^-(v)) \cap \tup{a}^+$ (again, we place $\varphi^-(v) = \bot$ if no such vertex exists).
\end{itemize}
Following the intuition above, assume that $X$ is a~low-rank subset of $V(G)$ and let $v \in V(G)$.
If $v \in X$, then $\varphi^+(v)$ specifies the twin of $v$ in $X$ with respect to $V(G) \setminus X$; and conversely, if $v \notin X$, then $\varphi^-(v)$ denotes the twin of $v$ in $V(G) \setminus X$ with respect to $X$.
In particular, $\varphi^+(v) = \bot$ implies that $v \notin X$, and $\varphi^-(v) = \bot$ implies that $v \in X$.
Also, $\varphi^+(v) \neq \bot$ implies that $\varphi^+(\varphi^+(v)) = \varphi^+(v)$ and similarly $\varphi^-(v) \neq \bot$ implies that $\varphi^-(\varphi^-(v)) = \varphi^-(v)$.

Now, given a~graph $G$ and a~tuple $\tup{a} \in V(G)^k$, we construct the~digraph $H_{\tup{a}} = (V, E_{\tup{a}})$ as follows.
We say that a~tuple $\tup{a}$ is \emph{admissible} if $\tup{a}^+$ and $\tup{a}^-$ are disjoint and in $G[\tup{a}]$, the set $\tup{a}^+$ of vertices has rank at most~$r$; the tuple is \emph{inadmissible} otherwise.
If $\tup{a}$ is inadmissible, we specify that $\vec{uv}$ is an~arc of $H_{\tup{a}}$ if at least one of the endpoints of the arc belongs to $\tup{a}$ or $uv$ is an~edge of $G$.
On the other hand, if $\tup{a}$ is admissible, we specify that $\vec{uv}$ is an~arc of $H_{\tup{a}}$ if any of the following conditions hold:
\begin{enumerate}[(i),nosep]
    \item \label{item:cond-is-param} $u \in \tup{a}^-$ or $v \in \tup{a}^+$;
    \item \label{item:cond-must-be-minus} $\varphi^+(u) = \bot$ and $v \in \tup{a}^-$;
    \item \label{item:cond-must-be-plus} $u \in \tup{a}^+$ and $\varphi^-(v) = \bot$;
    \item \label{item:cond-general} $E(u, v)\ \textrm{xor}\ [\varphi^+(u) \neq \bot\ \textrm{and}\ \varphi^-(v) \neq \bot \ \textrm{and}\ E(\varphi^+(u), \varphi^-(v))]$.
\end{enumerate}

Intuitively, condition \ref{item:cond-is-param} ensures that whenever $X$ is a~non-trivial suffix of $H_{\tup{a}}$ (i.e., a~suffix different from $\emptyset$ and $V(G)$), we necessarily have $\tup{a}^- \cap X = \emptyset$ and $\tup{a}^+ \subseteq X$.
Then conditions \ref{item:cond-must-be-minus}, \ref{item:cond-must-be-plus} ensure that $u \notin X$ whenever $\varphi^+(u) = \bot$ and $v \in X$ whenever $\varphi^-(v) = \bot$, respectively.
Finally, \ref{item:cond-general} will resolve the general case, encoding all possible implications of the form ``if some vertex $u$ belongs to $X$, then another vertex $v$ belongs to $X$ as well''.

\begin{example}
    Consider the graph $G$ from \Cref{sfig:dir-flip-example-before}, with $\tup{a}^- = (a^-_1, a^-_2)$ and $a^+ = (a^+_1, a^+_2)$.
    We have:
    \begin{center}
        \begin{tabular}{|c||c|c|c|c|c|c|c|c|}
            \hline
            $\mathbf{w}$ & $a^-_1$ & $a^-_2$ & $a^+_1$ & $a^+_2$ & $w_1$ & $w_2$ & $w_3$ & $w_4$ \\ \hline \hline
            $\mathbf{\varphi^+(w)}$ & $\bot$ & $\bot$ & $a^+_1$ & $a^+_2$ & $\bot$ & $a^+_1$ & $a^+_2$ & $a^+_2$ \\ \hline
            $\mathbf{\varphi^-(w)}$ & $a^-_1$ & $a^-_2$ & $\bot$ & $\bot$ & $a^-_1$ & $a^-_1$ & $a^-_1$ & $\bot$ \\ \hline
        \end{tabular}
    \end{center}
    The corresponding digraph $H_{\tup{a}}$ is given in \Cref{sfig:dir-flip-example-after}.

    Black arcs are of type \ref{item:cond-is-param}; these ensure that every non-trivial suffix of $H_{\tup{a}}$ contains $a^+_1, a^+_2$ and does not contain $a^-_1, a^-_2$.

    Since $\varphi^+(w_1) = \bot$ (so $w_1$ cannot belong to $X$), we additionally introduce arcs of type \ref{item:cond-must-be-minus} (blue), warranting that non-trivial suffixes of $H_{\tup{a}}$ do not contain $w_1$.

    Since $\varphi^-(w_4) = \bot$ (so $w_4$ must belong to $X$), we add arcs of type \ref{item:cond-must-be-plus} (yellow), ensuring that non-trivial suffixes of $H_{\tup{a}}$ contain $w_4$.

    Observe that $w_2w_3$ is a~non-edge in $G$, but $\varphi^+(w_2) \varphi^-(w_3)$ is an~edge in $G$.
    Therefore it cannot be that $w_2 \in X$ and $w_3 \notin X$: the statement $w_2 \in X$ implies that $\varphi^+(w_2)$ is a~twin of $w_2$ with respect to $V(G) \setminus X$, and the statement $w_3 \notin X$ implies that $\varphi^-(w_3)$ is a~twin of $w_3$ with respect to $X$.
    The (pink) arc $\vec{w_2w_3}$ in $H_{\tup{a}}$ of type \ref{item:cond-general} encodes this dependency: every suffix of $H_{\tup{a}}$ containing $w_2$ must also contain $w_3$.

    The non-trivial suffixes of $H_{\tup{a}}$ are $\{a^+_1, a^+_2, w_4\}$, $\{a^+_1, a^+_2, w_3, w_4\}$ and $\{a^+_1, a^+_2, w_2, w_3, w_4\}$.
    It can be verified that those are precisely the sets $X$ for which $\{a^+_1, a^+_2\}$ is a~representative of $X$ and $\{a^-_1, a^-_2\}$ is a~representative of $V(G) \setminus X$.
    
    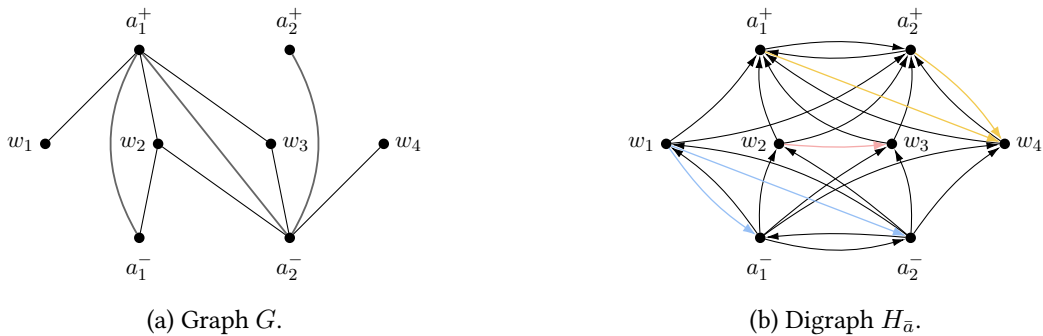
\begin{figure}[ht]
        \begin{subfigure}[t]{0.5\textwidth}
            \centering
            \begin{tikzpicture}[every node/.style={draw, circle, scale=0.8, thick,fill=black,inner sep=1.4}]
                \node[label=180:$w_1$] (w1) at (-2.25,0) {};
                \node[label=180:$w_2$] (w2) at (-0.75,0) {};
                \node[label=0:$w_3$] (w3) at (0.75,0) {};
                \node[label=0:$w_4$] (w4) at (2.25,0) {};

                \node[label=90:$a_1^+$] (a1p) at (-1,1.25) {};
                \node[label=90:$a_2^+$] (a2p) at (1,1.25) {};
                \node[label=-90:$a_1^-$] (a1m) at (-1,-1.25) {};
                \node[label=-90:$a_2^-$] (a2m) at (1,-1.25) {};

                \draw (a1p) to (w1) (a1p) to (w2) (a1p) to (w3);
                \draw (a1m) to (w2) (a2m) to (w2) (a2m) to (w3) (a2m) to (w4);
                \draw[line width=0.7pt,black!60!white] (a1p) [bend right=30] to (a1m) (a1p) [bend right=0] to (a2m) (a2p) [bend left=30] to (a2m);
            \end{tikzpicture}   
            \caption{Graph $G$.}
            \label{sfig:dir-flip-example-before}
        \end{subfigure}%
        \begin{subfigure}[t]{0.5\textwidth}
            \centering
            \begin{tikzpicture}[every node/.style={draw, circle, scale=0.8, thick,fill=black,inner sep=1.4}]
                \node[label=180:$w_1$] (w1) at (-2.25,0) {};
                \node[label=180:$w_2$] (w2) at (-0.75,0) {};
                \node[label=0:$w_3$] (w3) at (0.75,0) {};
                \node[label=0:$w_4$] (w4) at (2.25,0) {};

                \node[label=90:$a_1^+$] (a1p) at (-1,1.25) {};
                \node[label=90:$a_2^+$] (a2p) at (1,1.25) {};
                \node[label=-90:$a_1^-$] (a1m) at (-1,-1.25) {};
                \node[label=-90:$a_2^-$] (a2m) at (1,-1.25) {};

                \draw[arrows = {-Latex[width=0pt 7, length=5pt]}] (w1) [bend right= 10] to (a1p);
                \draw[arrows = {-Latex[width=0pt 7, length=5pt]}]  (w1) [bend right= 15] to (a2p);
                \draw[arrows = {-Latex[width=0pt 7, length=5pt]}] (w2) [bend left= 15] to (a1p);
                \draw[arrows = {-Latex[width=0pt 7, length=5pt]}]  (w2) [bend right= 27] to (a2p);
                \draw[arrows = {-Latex[width=0pt 7, length=5pt]}] (w3) [bend left= 27] to (a1p);
                \draw[arrows = {-Latex[width=0pt 7, length=5pt]}]  (w3) [bend right= 15] to (a2p);
                \draw[arrows = {-Latex[width=0pt 7, length=5pt]}] (w4) [bend left= 15] to (a1p);
                \draw[arrows = {-Latex[width=0pt 7, length=5pt]}]  (w4) [bend left= 10] to (a2p);
                \draw[arrows = {-Latex[width=0pt 7, length=5pt]}]  (a1p) [bend left= 10] to (a2p);
                \draw[arrows = {-Latex[width=0pt 7, length=5pt]}]  (a2p) [bend left= 10] to (a1p);

                \draw[arrows = {-Latex[width=0pt 7, length=5pt]}] (a1m) [bend right= 10] to (w1);
                \draw[arrows = {-Latex[width=0pt 7, length=5pt]}]  (a2m) [bend right= 15] to (w1);
                \draw[arrows = {-Latex[width=0pt 7, length=5pt]}] (a1m) [bend left= 15] to (w2);
                \draw[arrows = {-Latex[width=0pt 7, length=5pt]}]  (a2m) [bend right= 5] to (w2);
                \draw[arrows = {-Latex[width=0pt 7, length=5pt]}] (a1m) [bend left= 5] to (w3);
                \draw[arrows = {-Latex[width=0pt 7, length=5pt]}]  (a2m) [bend right= 15] to (w3);
                \draw[arrows = {-Latex[width=0pt 7, length=5pt]}] (a1m) [bend left= 15] to (w4);
                \draw[arrows = {-Latex[width=0pt 7, length=5pt]}]  (a2m) [bend left= 10] to (w4);
                \draw[arrows = {-Latex[width=0pt 7, length=5pt]}]  (a1m) [bend right= 15] to (a2m);
                \draw[arrows = {-Latex[width=0pt 7, length=5pt]}]  (a2m) [bend right= 5] to (a1m);

                \draw[line width=0.5pt,custom-pink,arrows = {-Latex[width=0pt 7, length=5pt]}]  (w2) [bend right= 5] to (w3);

                \draw[line width=0.5pt,custom-blue,arrows = {-Latex[width=0pt 7, length=5pt]}]  (w1) [bend right= 10] to (a1m);
                \draw[line width=0.5pt,custom-blue,arrows = {-Latex[width=0pt 7, length=5pt]}]  (w1) [bend right= 0] to (a2m);
                \draw[line width=0.5pt,custom-orange,arrows = {-Latex[width=0pt 7, length=5pt]}]  (a1p) [bend left= 0] to (w4);
                \draw[line width=0.5pt,custom-orange,arrows = {-Latex[width=0pt 7, length=5pt]}]  (a2p) [bend left= 10] to (w4);
            \end{tikzpicture}
            \caption{Digraph $H_{\tup{a}}$.}
            \label{sfig:dir-flip-example-after}
        \end{subfigure}
        \caption{Example graph $G$ and construction of the~digraph $H_{\tup{a}}$ from an~admissible tuple $\tup{a}$.}
        \label{fig:dir-flip-example}
    \end{figure}
\end{example}

We first show that $H_{\tup{a}}$ is indeed a~flip of $G$ with parameters $\tup{a}$:

\begin{lemma}
    \label{lem:low-rank-to-suffixes-exists-relation}
    There exists a~binary relation $A \subseteq \atp^{k+1} \times \atp^{k+1}$ such that, for any undirected graph $G$ and any tuple $\tup{a} \in V(G)^k$, we have $H_{\tup{a}} = G \oplus_{\tup{a}} A$.
\end{lemma}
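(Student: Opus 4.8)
The plan is to exhibit a single relation $A$ whose associated flip reproduces the construction of $H_{\tup{a}}$ uniformly over all graphs and all parameter tuples. Recall that $\vec{uv} \in E(G \oplus_{\tup{a}} A)$ holds precisely when $[uv \in E(G)]$ differs from $[(\atp(u, \tup{a}), \atp(v, \tup{a})) \in A]$. Hence it suffices to show that, for every fixed pair of atomic types $\alpha, \beta \in \atp^{k+1}$, the Boolean value
\[
    \chi \coloneqq [uv \in E(G)]\ \textrm{xor}\ [\vec{uv} \in E(H_{\tup{a}})]
\]
is the same for every undirected graph $G$, every $\tup{a} \in V(G)^k$, and every pair of distinct vertices $u, v$ with $\atp(u, \tup{a}) = \alpha$ and $\atp(v, \tup{a}) = \beta$; one then simply declares $(\alpha, \beta) \in A$ exactly when this common value equals $1$ (making an arbitrary choice for type pairs that are never realized). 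Since $[\vec{uv} \in E(H_{\tup{a}})] = [uv \in E(G)]\ \textrm{xor}\ \chi$, this choice yields $H_{\tup{a}} = G \oplus_{\tup{a}} A$.

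To carry this out I would first collect a few elementary ``type-determinacy'' facts, each immediate from the definition of atomic types of $(k+1)$-tuples. The atomic type of any $(k+1)$-tuple extending $\tup{a}$ records the induced subgraph $G[\tup{a}]$ together with the equality pattern among the entries of $\tup{a}$; consequently whether $\tup{a}$ is admissible is a function of $\alpha$ (equivalently of $\beta$) alone. For a single vertex $w$, the auxiliary data $\varphi^+(w)$ and $\varphi^-(w)$ --- read as an index into $\tup{a}$ or the symbol $\bot$ --- were by construction extracted from $\atp(w, \tup{a})$, hence are functions of it. For any $a^+_i \in \tup{a}^+$ and $a^-_j \in \tup{a}^-$, the adjacency $E(a^+_i, a^-_j)$ in $G$ is part of $\atp(w, \tup{a})$ for every $w$. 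Finally, if $u \in \tup{a}$, then the index of $u$ among the parameters is visible in $\alpha$, and $\beta$ records whether $v$ is adjacent to that parameter; so $[uv \in E(G)]$ is a function of $(\alpha, \beta)$ whenever $u \in \tup{a}$ or, symmetrically, $v \in \tup{a}$.

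With these facts in hand the verification splits along admissibility, which we already know is type-determined. If $\tup{a}$ is inadmissible, then $\vec{uv} \in E(H_{\tup{a}})$ holds iff $u \in \tup{a}$, or $v \in \tup{a}$, or $uv \in E(G)$; in the first two cases $[uv \in E(G)]$ is type-determined, so $\chi$ is, and in the remaining case $\vec{uv} \in E(H_{\tup{a}}) \iff uv \in E(G)$, giving $\chi = 0$. If $\tup{a}$ is admissible, abbreviate by $P$ the disjunction of conditions \ref{item:cond-is-param}, \ref{item:cond-must-be-minus}, \ref{item:cond-must-be-plus}, and by $Q$ the bracketed correction term appearing in \ref{item:cond-general}, so that $\vec{uv} \in E(H_{\tup{a}}) \iff P \vee ([uv \in E(G)]\ \textrm{xor}\ Q)$. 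Here $Q$ is type-determined via the facts on $\varphi^{\pm}$ and on adjacencies between $\tup{a}^+$ and $\tup{a}^-$, and $P$ is type-determined because each of its three disjuncts only speaks of membership of $u$ or $v$ in parts of $\tup{a}$ and of the values $\varphi^{\pm}$. If $P$ is true, then at least one of $u, v$ lies in $\tup{a}$, so $[uv \in E(G)]$ is type-determined and hence so is $\chi = [uv \in E(G)]\ \textrm{xor}\ 1$; if $P$ is false, then $\chi = [uv \in E(G)]\ \textrm{xor}\ ([uv \in E(G)]\ \textrm{xor}\ Q) = Q$, again type-determined. In every case $\chi$ depends only on $(\alpha, \beta)$, so the relation $A$ defined above works.

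I expect the only genuinely delicate point to be exactly the situation where one of conditions \ref{item:cond-is-param}, \ref{item:cond-must-be-minus}, \ref{item:cond-must-be-plus} makes $\vec{uv}$ an arc of $H_{\tup{a}}$ ``unconditionally'': there one must notice that the endpoint responsible ($u \in \tup{a}^-$ or $v \in \tup{a}^+$ for \ref{item:cond-is-param}, $v \in \tup{a}^-$ for \ref{item:cond-must-be-minus}, $u \in \tup{a}^+$ for \ref{item:cond-must-be-plus}) is always a parameter, which is precisely what keeps $[uv \in E(G)]$ --- and therefore $\chi$ --- a function of the atomic types. Everything else is a routine unwinding of the definitions of atomic types and of $H_{\tup{a}}$.
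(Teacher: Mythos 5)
Your proposal is correct and takes essentially the same approach as the paper: both proofs show that $\chi = \lambda_{\tup{a}}(u,v) = E_{\tup{a}}(u,v) \ \mathrm{xor}\ E(u,v)$ depends only on the pair of atomic types, via a case split on admissibility and on whether one of the ``unconditional'' conditions \ref{item:cond-is-param}--\ref{item:cond-must-be-plus} fires. Your framing via the abbreviations $P$ and $Q$ is a slightly cleaner repackaging of the paper's case-by-case check, but the underlying observations (admissibility, $\varphi^{\pm}$, and edges inside $\tup a$ are all readable off the types; whenever an unconditional condition holds, one endpoint is a parameter so $E(u,v)$ is also type-determined) are identical.
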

\begin{proof}
    Let $\tau_u, \tau_v \in \atp^{k+1}$.
    We need to prove that for any undirected graph $G$, any $k$-tuple $\tup{a}$ and any pair of distinct vertices $u, v \in V(G)$ with $\atp(u, \tup{a}) = \tau_u$, $\atp(v, \tup{a}) = \tau_v$, the value $\lambda_{\tup{a}}(u, v) \coloneqq E_{\tup{a}}(u, v)\ \textrm{xor}\ E(u, v)$ depends only on the pair of types $(\tau_u, \tau_v)$.
    Hence choose $G$, $\tup{a}$, $u$ and $v$ as above.

    From $\tau_u$ (and analogously from $\tau_v$) we can deduce whether $\tup{a}$ is admissible.
    First, suppose that $\tup{a}$ is inadmissible.
    If $u \in \tup{a}$ (which can be deduced from $\tau_u$), then we have $(u, v) \in E_{\tup{a}}$, so the value $\lambda_{\tup{a}}(u, v) = E_{\tup{a}}(u, v) \ \textrm{xor}\ E(u, v) = \neg E(u, v)$ is uniquely implied by $\tau_v$ (since $u \in \tup{a}$ and the atomic type $\tau_v$ determines $N(v) \cap \tup{a}$).
    Symmetrically, if $v \in \tup{a}$, then $\lambda_{\tup{a}}(u, v)$ can be deduced from $\tau_u$.
    On the other hand, if $u, v \notin \tup{a}$, then $\lambda_{\tup{a}}(u, v)$ is simply false, so it is enough to specify that $(\tau_u, \tau_v) \notin A$.

    Now consider the case when $\tup{a}$ is admissible.
    The satisfaction of any of the conditions \ref{item:cond-is-param}, \ref{item:cond-must-be-minus}, \ref{item:cond-must-be-plus} can be verified by examining the atomic types $\tau_u$ and $\tau_v$.
    In each of these cases, we have $\lambda_{\tup{a}}(u, v) = \neg E(u, v)$, and the value $E(u, v)$ is uniquely determined by $\tau_u$ (if $v \in \tup{a}$) or by $\tau_v$ (if $u \in \tup{a}$).
    On the other hand, if none of these cases hold, then the value $\lambda_{\tup{a}}(u, v) = [\varphi^+(u) \neq \bot\ \textrm{and}\ \varphi^-(v) \neq \bot\ \textrm{and}\ E(\varphi^+(u), \varphi^-(v))]$ can be inferred from the pair $(\tau_u, \tau_v)$.
\end{proof}

It remains to show that the family of subsets of $V(G)$ of rank at most $r$ is equivalent to the family of suffixes of the graphs $H_{\tup{a}}$.
We prove each implication of this equivalence separately. First we  argue that each suffix of $H_{\tup{a}}$ indeed has small rank.

\begin{lemma}
    \label{lem:low-rank-to-suffixes-rtl}
    Let $G$ be an~undirected graph, $\tup{a} \in V(G)^k$ and $X \in \Suffixes(H_{\tup{a}})$.
    Then $\rk(X) \leq r$.
\end{lemma}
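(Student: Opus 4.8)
The plan is to bound $\rk(X) = \rkk(\Adj_G[X,\wh X])$ directly, where $\wh X \coloneqq V(G)\setminus X$. First I would dispose of the degenerate cases. If $X\in\{\emptyset,V(G)\}$ then $\rk(X)=0$. If the tuple $\tup{a}$ is inadmissible, then every vertex of $\tup{a}$ is joined by an arc, in both directions, to every other vertex of $H_{\tup{a}}$; hence a nonempty suffix of $H_{\tup{a}}$ must contain some vertex of $\tup{a}$, therefore all of $\tup{a}$, therefore all of $V(G)$. So in the inadmissible case the only suffixes are $\emptyset$ and $V(G)$, both of rank $0$. From now on assume $\tup{a}$ is admissible and $\emptyset\neq X\neq V(G)$. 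Write $\tup{a}=\tup{a}^+\cup\tup{a}^-$ (a disjoint union, by admissibility) and set $M\coloneqq\Adj_G[\tup{a}^+,\tup{a}^-]$. Since $\tup{a}$ is admissible, $\tup{a}^+$ has rank at most $r$ in $G[\tup{a}]$, and as $M=\Adj_{G[\tup{a}]}[\tup{a}^+,\tup{a}^-]$ this means $\rkk(M)\leq r$. The goal then reduces to proving $\rkk(\Adj_G[X,\wh X]) = \rkk(M)$.

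The next step is to read off the structure forced by $X$ being a nontrivial suffix. Applying condition~\ref{item:cond-is-param} to arcs $\vec{ua}$ with $a\in\tup{a}^+$ and to arcs $\vec{av}$ with $a\in\tup{a}^-$ (and using that $\tup{a}^+,\tup{a}^-$ are nonempty, as $2^r\geq1$), we obtain $\tup{a}^+\subseteq X$ and $\tup{a}^-\subseteq\wh X$. Then conditions~\ref{item:cond-must-be-minus} and~\ref{item:cond-must-be-plus} force $\varphi^+(u)\neq\bot$ for every $u\in X$ and $\varphi^-(v)\neq\bot$ for every $v\in\wh X$: otherwise there would be an arc from $u$ into $\tup{a}^-\subseteq\wh X$, respectively from $\tup{a}^+\subseteq X$ into $v$, contradicting the suffix property. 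In particular $\varphi^+(u)\in\tup{a}^+\subseteq X$ and $\varphi^-(v)\in\tup{a}^-\subseteq\wh X$ for all such $u,v$.

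The heart of the argument is then a single identity. Fix $u\in X$ and $v\in\wh X$. Since $X$ is a suffix and $u\in X\not\ni v$, $\vec{uv}$ is not an arc of $H_{\tup{a}}$; conditions~\ref{item:cond-is-param}--\ref{item:cond-must-be-plus} do not apply to this pair (as $u\notin\tup{a}^-$, $v\notin\tup{a}^+$, $\varphi^+(u)\neq\bot$, $\varphi^-(v)\neq\bot$), so the failure of~\ref{item:cond-general}, together with $\varphi^+(u)\neq\bot$ and $\varphi^-(v)\neq\bot$, yields
\[ E(u,v)\iff E(\varphi^+(u),\varphi^-(v)). \]
Specialising $v$ to an element $w\in\tup{a}^-\subseteq\wh X$ and using that $\varphi^-(w)$ has the same neighbourhood in $\tup{a}^+$ as $w$ while $\varphi^+(u)\in\tup{a}^+$, this becomes $E(u,w)\iff E(\varphi^+(u),w)$. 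Hence in $\Adj_G[X,\tup{a}^-]$ the row indexed by $u$ equals the row indexed by $\varphi^+(u)\in\tup{a}^+$, which is a row of $M$; consequently $\rkk(\Adj_G[X,\tup{a}^-])=\rkk(M)\leq r$. Finally, for an arbitrary $v\in\wh X$, combining the displayed identity with its specialisation to $w=\varphi^-(v)$ shows $E(u,v)\iff E(u,\varphi^-(v))$ for every $u\in X$, i.e.\ the column of $\Adj_G[X,\wh X]$ indexed by $v$ coincides with its column indexed by $\varphi^-(v)\in\tup{a}^-$. So every column of $\Adj_G[X,\wh X]$ already occurs in the submatrix $\Adj_G[X,\tup{a}^-]$, whence $\rk(X)=\rkk(\Adj_G[X,\wh X])=\rkk(\Adj_G[X,\tup{a}^-])\leq r$, as required.

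The only subtlety I anticipate is this last piece of rank bookkeeping. The cheap observation that $\Adj_G[X,\wh X]$ has at most $2^r$ distinct rows and at most $2^r$ distinct columns only gives $\rk(X)\leq 2^r$. What makes the sharper bound $\rk(X)\leq r$ work is that all of those distinct rows and columns are already exhibited \emph{inside the fixed $2^r\times2^r$ submatrix} $M=\Adj_G[\tup{a}^+,\tup{a}^-]$ — equivalently, the maps $\varphi^+,\varphi^-$ behave like retractions onto $\tup{a}^+,\tup{a}^-$ that are compatible with adjacency across the cut — and $\rkk(M)$ is controlled by admissibility of $\tup{a}$. Everything else amounts to a direct case check against the four-condition definition of $H_{\tup{a}}$.
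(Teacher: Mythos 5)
Your proof is correct and follows essentially the same route as the paper's: dispose of the inadmissible/trivial cases, use condition~\ref{item:cond-is-param} to pin down $\tup{a}^+\subseteq X$ and $\tup{a}^-\cap X=\emptyset$, use \ref{item:cond-must-be-minus}/\ref{item:cond-must-be-plus} to force $\varphi^+(u)\neq\bot$ on $X$ and $\varphi^-(v)\neq\bot$ on $\wh X$, then extract $E(u,v)=E(\varphi^+(u),\varphi^-(v))$ from the failure of~\ref{item:cond-general} across the cut and conclude that rows and columns of $\Adj_G[X,\wh X]$ collapse onto those of $\Adj_G[\tup{a}^+,\tup{a}^-]$. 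The only cosmetic difference is in the last bookkeeping step: the paper derives row-equality $N(u)\setminus X=N(\varphi^+(u))\setminus X$ by applying condition~\ref{item:cond-general} once more to the pair $(\varphi^+(u),v)$ using $\varphi^+(\varphi^+(u))=\varphi^+(u)$, whereas you obtain it by specialising the identity to $w\in\tup{a}^-$ and invoking that $\varphi^-(w)$ and $w$ have the same $\tup{a}^+$-neighbourhood; both are valid and amount to the same computation.
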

\begin{proof}
    If $\tup{a}$ is inadmissible, then $H_{\tup{a}}$ is undirected and every vertex $v \in \tup{a}$ is a~universal vertex of $H_{\tup{a}}$.
    In particular, $\Suffixes(H_{\tup{a}}) = \{\emptyset, V(G)\}$ and the lemma holds.
    From now on we assume that $\tup{a}$ is admissible and that $X \notin \{\emptyset, V(G)\}$.

    By condition \ref{item:cond-is-param} and the fact that $X$ is a~suffix of $H_{\tup{a}}$, we infer that $\tup{a}^+ \subseteq X$; this is because we can choose an~arbitrary vertex $w \in X$ and then for any $v \in \tup{a}^+$ we have $\vec{wv} \in E_{\tup{a}}$ and so $v \in X$.
    We symmetrically get $\tup{a}^- \cap X = \emptyset$.
    Then, by \ref{item:cond-must-be-minus}, if some $u \in V(G)$ satisfies $\varphi^+(u) = \bot$, then $u \notin X$ (since we can choose some $v \in \tup{a}^-$, and then $u \notin X$ follows from $\vec{uv} \in E_{\tup{a}}$ and $v \notin X$).
    Similarly, from \ref{item:cond-must-be-plus} we find that $\varphi^-(v) = \bot$ implies that $v \in X$.
    In particular, no vertex $w \in V(G)$ has $\varphi^-(w) = \varphi^+(w) = \bot$.

    Pick now any $u \in X$ and $v \notin X$.
    Then $\vec{uv} \notin E_{\tup{a}}$, so none of the conditions \ref{item:cond-is-param}, \ref{item:cond-must-be-minus}, \ref{item:cond-must-be-plus} applies, and in particular $E(u, v) = E(\varphi^+(u),\, \varphi^-(v))$ by the statement of condition \ref{item:cond-general} and the facts that $\varphi^+(u) \neq \bot$ and $\varphi^-(v) \neq \bot$.
    Similarly, we have $\vec{\varphi^+(u)v} \notin E_{\tup{a}}$, so $E(\varphi^+(u), v) = E(\varphi^+(u), \varphi^-(v))$ by \ref{item:cond-general} and $\varphi^+(\varphi^+(u)) = \varphi^+(u)$.
    As $u \in X$, $v \notin X$ were arbitrary, this implies that $N(u) \setminus X = N(\varphi^+(u)) \setminus X$ for every $u \in X$.
    Symmetrically, we prove $N(v) \cap X = N(\varphi^-(v)) \cap X$ for each $v \notin X$.
    Therefore, in the adjacency matrix $\Adj_G[X, \tup X]$, where $\tup X=V(G)\setminus X$, the rows corresponding to vertices $u$ and $\varphi^+(u)$ coincide, as do the columns corresponding to vertices $v$ and $\varphi^-(v)$.
    We conclude that
    \[ \rk(X) = \rkk(\Adj_G[X, \tup X]) = \rkk(\Adj_G[\tup{a}^+, \tup{a}^-]) \leq r, \]
    where the last inequality is due to the admissibility of $\tup{a}$.
\end{proof}

We then show the converse implication.

\begin{lemma}
    \label{lem:low-rank-to-suffixes-ltr}
    Let $G$ be an~undirected graph and pick $X \subseteq V(G)$ with $\rk(X) \leq r$.
    Then there is $\tup{a} \in V(G)^k$ such that $X \in \Suffixes(H_{\tup{a}})$.
\end{lemma}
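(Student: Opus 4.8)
The plan is to construct the tuple $\tup{a}$ directly from small representatives of $X$ and of $V(G)\setminus X$, confirm that it is admissible, and then verify that $H_{\tup{a}}$ has no arc with tail in $X$ and head outside $X$, which is exactly the condition for $X$ to be a suffix of $H_{\tup{a}}$.

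First I would dispose of the degenerate cases: if $X=\emptyset$ or $X=V(G)$, then $X$ is a suffix of every digraph on $V(G)$, so any $\tup{a}\in V(G)^k$ works (this requires $V(G)\neq\emptyset$, which we may assume, as otherwise $\LowRank^r(G)=\{\emptyset\}$ and the claimed identity degenerates). So suppose $\emptyset\neq X\neq V(G)$. Using \cref{prop:small-representative}, fix inclusionwise-minimal representatives $R_+\subseteq X$ of $X$ and $R_-\subseteq V(G)\setminus X$ of $V(G)\setminus X$, each of size at most $2^r$; both are nonempty. Let $\tup{a}^+$ be a $2^r$-tuple whose entry set is exactly $R_+$ (pad by repeating an element of $R_+$ if $|R_+|<2^r$), define $\tup{a}^-$ analogously from $R_-$, and put $\tup{a}=(\tup{a}^+,\tup{a}^-)\in V(G)^k$ with $k=2\cdot 2^r$. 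Repetition affects neither the entry set of a tuple nor the choice of ``earliest witness'', so as sets $\tup{a}^+=R_+$, $\tup{a}^-=R_-$, and whenever $\varphi^+(w)\neq\bot$ (resp.\ $\varphi^-(w)\neq\bot$) the selected vertex lies in $R_+$ (resp.\ $R_-$). Admissibility of $\tup{a}$ is then immediate: $R_+$ and $R_-$ are disjoint, and $\Adj_G[R_+,R_-]$ is a submatrix of $\Adj_G[X,V(G)\setminus X]$, so $\rkk(\Adj_G[R_+,R_-])\leq\rk(X)\leq r$. Hence the arcs of $H_{\tup{a}}$ are governed by conditions \ref{item:cond-is-param}--\ref{item:cond-general}.

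The core of the argument -- and the step I expect to need the most care -- is a twin-chasing claim: for all $u\in X$ and $v\in V(G)\setminus X$, writing $p=\varphi^+(u)$ and $q=\varphi^-(v)$, we have $p\neq\bot$, $q\neq\bot$, $N(u)\setminus X=N(p)\setminus X$, and $N(v)\cap X=N(q)\cap X$. That $p\neq\bot$ follows since $R_+$ represents $X$, so $u$ has a twin $r\in R_+$ with $N(u)\setminus X=N(r)\setminus X$, and restricting to $R_-\subseteq V(G)\setminus X$ shows that $r$ meets the defining condition of $\varphi^+(u)$ (though possibly $p\neq r$, as $\varphi^+$ only constrains behaviour on $R_-$). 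To promote ``agreement on $R_-$'' to ``agreement on all of $V(G)\setminus X$'': for $w\in V(G)\setminus X$ pick its twin $r_w\in R_-$ with respect to $X$ (which exists as $R_-$ represents $V(G)\setminus X$); then $E(u,w)\iff E(u,r_w)$ (since $u\in X$), $E(u,r_w)\iff E(p,r_w)$ (since $r_w\in R_-$ and $N(u)\cap R_-=N(p)\cap R_-$), and $E(p,r_w)\iff E(p,w)$ (since $p\in X$), so $E(u,w)\iff E(p,w)$; as $w$ was arbitrary, $N(u)\setminus X=N(p)\setminus X$. The assertion about $q$ and $v$ is proved symmetrically, swapping the roles of $X$ with $V(G)\setminus X$ and of $R_+$ with $R_-$.

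Finally I would conclude. Fix $u\in X$ and $v\in V(G)\setminus X$ and show $\vec{uv}\notin E_{\tup{a}}$. Condition \ref{item:cond-is-param} fails since $u\notin R_-=\tup{a}^-$ and $v\notin R_+=\tup{a}^+$; conditions \ref{item:cond-must-be-minus} and \ref{item:cond-must-be-plus} fail since $\varphi^+(u)\neq\bot$ and $\varphi^-(v)\neq\bot$ by the claim. For \ref{item:cond-general}, the claim gives $E(u,v)\iff E(p,v)$ (using $v\notin X$ and $N(u)\setminus X=N(p)\setminus X$) and $E(p,v)\iff E(p,q)$ (using $p\in X$ and $N(v)\cap X=N(q)\cap X$), hence $E(u,v)\iff E(\varphi^+(u),\varphi^-(v))$, so the xor in \ref{item:cond-general} evaluates to false. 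Thus no arc of $H_{\tup{a}}$ leaves $X$, i.e.\ $X\in\Suffixes(H_{\tup{a}})$, which completes the proof.
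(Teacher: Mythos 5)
Your proof is correct and takes essentially the same route as the paper: build $\tup{a}$ from small representatives $R_+, R_-$ supplied by \cref{prop:small-representative}, check admissibility, and then rule out each arc condition \ref{item:cond-is-param}--\ref{item:cond-general} by chaining twin-equivalences (through $R_-$ to promote agreement of $u$ and $\varphi^+(u)$ on $\tup{a}^-$ to agreement on all of $V(G)\setminus X$, and symmetrically for $\varphi^-(v)$). You merely spell out this twin-promotion step and the admissibility check more explicitly than the paper, which states them tersely.
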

\begin{proof}
    The lemma is trivial for $X \in \{\emptyset, V(G)\}$, so assume otherwise.
    By \Cref{prop:small-representative}, there exist representatives of $X$ and $V(G) \setminus X$ in $G$ of size at most $2^r$ (and both are nonempty); call them $R_+$ and $R_-$, respectively.
    Let $\tup{a}$ be a~$k$-tuple of parameters in which $\tup{a}^+$ (resp., $\tup{a}^-$) contains precisely the vertices of $R_+$ (resp., $R_-$), each at least once.
    We claim that $X \in \Suffixes(H_{\tup{a}})$.

    Naturally, $\tup{a}$ is admissible.
    Choose now two vertices $u \in X$, $v \notin X$ and assume for the sake of contradiction that $\vec{uv} \in E_{\tup{a}}$.
    Condition \ref{item:cond-is-param} cannot apply since $u \in X$, $v \notin X$; condition \ref{item:cond-must-be-minus} also cannot apply as $u$ has a~twin in $X$ with respect to $V(G) \setminus X$ and thus $\varphi^+(u) \neq \bot$; and condition \ref{item:cond-must-be-plus} cannot apply by a~symmetric argument for $v$.
    Now, $\varphi^+(u) \in X$ is a~twin of $u$ with respect to $\tup{a}^-$, and so $\varphi^+(u)$ and $u$ are twins with respect to $V(G) \setminus X$.
    Similarly, $\varphi^-(v) \in V(G) \setminus X$ and $v$ are twins with respect to $X$, implying that $E(u, v) = E(\varphi^+(u), v) = E(\varphi^+(u), \varphi^-(v))$.
    But this excludes condition \ref{item:cond-general}, since $\varphi^+(u) \neq \bot$, $\varphi^-(v) \neq \bot$, and $E(u, v) = E(\varphi^+(u), \varphi^-(v))$.
    The contradiction proves that $\vec{uv} \notin E_{\tup{a}}$.
\end{proof}

\Cref{lem:low-rank-to-suffixes-exists-relation,lem:low-rank-to-suffixes-rtl,lem:low-rank-to-suffixes-ltr} complete the proof of \Cref{lem:low-rank-to-suffixes}.

\subsubsection{Parameterizing suffixes of flips}

We now prove the following result, which essentially shows that the family of all suffixes of a~directed graph can be described in a~concise way, provided that the graph is a~definable flip of an~undirected graph.

\begin{lemma}
    \label{lem:suffixes-to-spans}
    Let $k \in \N$ and $A \subseteq \atp^{k+1} \times \atp^{k+1}$.
    Then there exist $\ell \in \N$ and formulas $\varphi_+$, $\varphi_-$, $\varphi_\sim$ of flip-reachability logic with the following properties:
    \begin{itemize}[nosep]
        \item For any undirected graph $G$ and any $(k+\ell)$-tuple $\tup{a}\tup{b} \in V(G)^{k+\ell}$ with $|\tup{a}| = k$, $|\tup{b}| = \ell$,   the quadruple $\varphi_+, \varphi_-, \varphi_\sim, \tup{a}\tup{b}$ defines an~$\tup{a}$-uniform seed of $G$.
        \item For any undirected graph $G$ and any $k$-tuple $\tup{a} \in V(G)^k$,
        \begin{equation}
            \label{eq:suffixes-eq-span}
            \Suffixes(G \oplus_{\tup{a}} A) = \bigcup_{\tup{b} \in V(G)^\ell} \Span(\Seed(\varphi_+, \varphi_-, \varphi_{\sim}, \tup{a}\tup{b})).
        \end{equation}
    \end{itemize}
\end{lemma}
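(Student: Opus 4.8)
The plan is to exploit the single feature of flip-reachability logic that is not available in weaker logics: for fixed $k$ and $A$, the predicate $\flipreach_{k,A}(s,t,\tup{a})$ says exactly that $t$ is reachable from $s$ in $H \coloneqq G \oplus_{\tup{a}} A$. Writing $\leadsto$ for reachability in $H$, this means the following are all definable by fixed first-order formulas over the vocabulary extended by flip-reachability predicates, with $\tup{a}$ (and possibly additional vertices) as free variables: ``$u,v$ lie in the same strongly connected component of $H$''; ``$v$ lies in a sink strongly connected component of $H$'' (no vertex reachable from $v$ fails to reach $v$ back); and, for additional vertices $b_1,\dots,b_\ell$, the forward closure $\{v \,\colon\, \bigvee_i b_i \leadsto v\}$. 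Now observe that $\Suffixes(H)$ is precisely the set of unions of strongly connected components of $H$ that are closed under reachability, i.e.\ up-sets of the condensation $D$ of $H$; so the task reduces to representing every up-set of $D$ by a seed of the claimed shape.

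Given a suffix $S$ of $H$, I would build the seed as follows. Let $V^{\mathrm{snk}}$ be the union of the sink SCCs of $H$ and $V^{\circ} \coloneqq V(G)\setminus V^{\mathrm{snk}}$ the union of the non-sink SCCs. Pick one vertex $b_i$ in each SCC that is minimal among the non-sink SCCs contained in $S$, and gather these into a tuple $\tup{b}$, padded with repetitions to a fixed length $\ell$. Put $X_+ \coloneqq \{v \,\colon\, \bigvee_i b_i \leadsto v\}$, put $X_- \coloneqq V^{\circ}\setminus X_+$, and let $\Xc$ consist of the SCCs of $H$ contained in $V^{\mathrm{snk}}\setminus X_+$, each taken as a separate part. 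Then $X_+$ is forward-closed, every part of $\Xc$ is a sink SCC disjoint from $X_+$ and hence forward-closed modulo $X_+$, so every member of $\Span(X_+,X_-,\Xc)$ is a suffix of $H$; and since $X_+$ collects exactly the non-sink vertices of $S$ together with the sink SCCs reachable from $\tup{b}$ (all of which lie in $S$), choosing $\mathcal{Y}$ to be the remaining sink SCCs inside $S$ gives $S = X_+ \cup \bigcup\mathcal{Y} \in \Span(X_+,X_-,\Xc)$. All of $X_+,X_-,\Xc$ are definable from $\tup{a}\tup{b}$ by the fixed formulas above, yielding $\varphi_+,\varphi_-,\varphi_\sim$. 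The $\tup{a}$-uniformity of the resulting seed is automatic: for $u_1,u_2 \in \bigcup\Xc$ with $\atp(u_1,\tup{a})=\atp(u_2,\tup{a})$ and any $w \notin \Xc(u_1)\cup\Xc(u_2)$, neither $\vec{u_1w}$ nor $\vec{u_2w}$ is an arc of $H$ (sink SCCs have no outgoing arcs), and since $u_1,u_2$ have equal atomic types over $\tup{a}$ the flip definition forces $u_1w \in E(G) \iff u_2w \in E(G)$, which is exactly the uniformity condition. The same computation shows that the recipe ``$X_+ \coloneqq$ forward closure of the non-sink entries of $\tup{b}$, $X_- \coloneqq V^{\circ}\setminus X_+$, $\Xc \coloneqq$ SCCs inside $V^{\mathrm{snk}}\setminus X_+$'' gives an $\tup{a}$-uniform seed with span contained in $\Suffixes(H)$ for every $\tup{b}$, settling the first bullet; and the trivial suffixes $\emptyset$, $V(G)$ are produced by taking $\tup{b}$ constant equal to a sink vertex, respectively by the tuple chosen for $S=V(G)$.

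The one place where genuine work is needed is to show that the length $\ell$ may be taken to depend only on $k$ (and $A$); equivalently, that the poset $D^{\circ}$ of non-sink SCCs of $H = G\oplus_{\tup{a}} A$ has width bounded in terms of $k$, so that the non-sink part of any suffix — an up-set of $D^{\circ}$ — has at most $\ell$ minimal SCCs, one representative vertex per such SCC being all that $\tup{b}$ must store. This is where being a flip of an \emph{undirected} graph is essential: within each of the at most $|\atp^{k+1}|$ atomic-type classes over $\tup{a}$ the flip is symmetric (because $G$ is undirected), so the induced subdigraph is effectively undirected; and between two distinct type classes the arc relation in each direction equals either the edge relation or the non-edge relation of $G$ restricted to those classes — a rigid, symmetric pattern fixed by $A$. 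A pigeonhole argument over the bounded number of type classes should then show that any sufficiently large collection of non-sink SCCs contains two comparable ones, because once enough SCCs participate in a directed interaction between two type classes this interaction necessarily produces directed paths in both directions and collapses SCCs. I expect the bookkeeping in this pigeonhole-and-collapse step to be the main obstacle; once it is in place, the remainder is the routine translation of the structural picture above into flip-reachability formulas together with the easy direction, already used implicitly in the construction, that every spanned set is a suffix.
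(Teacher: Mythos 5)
Your construction restricts $\Xc$ to consist only of \emph{sink} strongly connected components of $H$, and then you need to store in $\tup b$ one vertex from each minimal non-sink SCC of $S$. This is where the argument breaks: the number of minimal non-sink SCCs of a suffix of $H$ is \emph{not} bounded by any function of $k$, and correspondingly the ``bounded width of the non-sink poset'' conjecture you lean on in the last paragraph is false. Here is a concrete counterexample. Take $k=2$ with parameters $a_1,a_2$, let $G$ have two large vertex classes $C_{00}$ (non-adjacent to $a_1,a_2$) and $C_{01}$ (non-adjacent to $a_1$, adjacent to $a_2$), with no edges inside or between these classes, and choose $A$ so that $(\tau_{00},\tau_{01})\in A$ and $(\tau_{01},\tau_{00})\notin A$, with all other entries of $A$ empty. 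Then in $H=G\oplus_{\tup a}A$ every vertex of $C_{00}$ sends an arc to every vertex of $C_{01}$, there are no other arcs, and all SCCs are singletons. The non-sink SCCs are exactly the singletons of $C_{00}$, which form an antichain of unbounded size, and the suffix $S=C_{00}\cup C_{01}$ has $|C_{00}|$ minimal non-sink SCCs. Your $X_+$ (the forward closure of $\ell$ vertices from $C_{00}$) omits most of $C_{00}$; those omitted vertices are non-sinks, hence fall into your $X_-=V^\circ\setminus X_+$, but they also lie in $S$, so $S\notin\Span(X_+,X_-,\Xc)$.

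The paper's proof handles this by \emph{not} restricting $\Xc$ to sink SCCs: it takes $\Xc$ to be an arbitrary antichain of the SCC poset (chosen so that $X_+$ is the set of SCCs strictly above $\Xc$ and $X_-$ the set strictly below, both respecting the up-/down-set requirement and so that $X$ is still spanned). In the example, the good seed has $\Xc=\{\{v\}:v\in C_{00}\}$ and $X_+=C_{01}$. The point parameterized by $\tup b$ is then not ``a representative of each part of $\Xc$'' but a small \emph{cover}: a bounded subset $B\subseteq\bigcup\Xc$ such that every vertex of $X_+$ is strictly above some element of $B$ and every vertex of $X_-$ is strictly below some element of $B$. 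Bounding $|B|$ is the genuine combinatorial content (the paper's Claim on the size of $B_+$): if a minimal cover had more than $|\atp^{k+1}|^2$ elements, one extracts for each a witness path and a non-reversible ``critical arc'' on it, pigeonholes two of these arcs onto the same pair of atomic types over $\tup a$, and uses the flip structure to produce an arc crossing between the two paths, which contradicts the uniqueness of the cover. That is the step your sketch gestures at (``pigeonhole-and-collapse'') but does not carry out, and it is aimed at a different quantity (size of a cover of $X_+$ and $X_-$ by $\Xc$) than the one you propose to bound (width of the non-sink poset), which is not bounded. You would need to both enlarge $\Xc$ to a general antichain and replace your width claim by this cover-size bound to make the argument go through.
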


The remainder of this section is devoted to the proof of \Cref{lem:suffixes-to-spans}.
Observe here that \Cref{lem:low-rank-to-suffixes,lem:suffixes-to-spans} immediately imply the Low Rank Structure Theorem (\Cref{thm:low-rank-structure}).

Let us start with the intuition.
Suppose we are given a~digraph $H$ --- a~flip of an~undirected graph $G$ defined by some $k$ parameters --- and we wish to describe the structure of suffixes of $H$.
Observe first that each suffix is necessarily the~union of some collection of strongly connected components of $H$; in particular, each strongly connected component either entirely belongs to a~suffix of $H$, or is disjoint from the suffix.
In fact, the formula $\varphi_\sim$ of flip-reachability logic will precisely encode the equivalence relation selecting pairs of vertices that belong to the same strongly connected component of $H$.
This way, in each seed $(X_+, X_-, \Xc)$ that we shall construct, both $X_+$ and $X_-$ will be consist of the union of some strongly connected components of $H$, while $\Xc$ will be the partition of $V(G) \setminus (X_+ \cup X_-)$ into strongly connected components of $H$. Thus, one may think of $X_+$ as of the union of those components that must be included in the suffix, and similarly $X_-$ is comprised of those components that cannot be included, while the components of $\Xc$ can be freely included or not.

However, we must additionally ensure another property of $\Xc$: namely, $\Xc$ must be a~collection of \emph{incomparable} strongly connected components of $H$.
That is, no component of $\Xc$ may be reachable in $H$ from any other component of $\Xc$.
In other words, $\Xc$ must form an~\emph{antichain} in the partially ordered set (\emph{poset}) $(\Scc, \leq)$, which is formed from $H$ by setting $\Scc$ to be the family of all strongly connected components of $H$ and specifying that $C_1 \leq C_2$ for $C_1, C_2 \in \Scc$ whenever there exists a~directed path in $H$ from any vertex of $C_1$ to any vertex of $C_2$.

This suggests the following strategy: for every inclusionwise maximal antichain $\Xc$ of $(\Scc, \leq)$, we will form a~seed $(X_+, X_-, \Xc)$ by choosing $X_+$ as (the union of) the collection of components of $H$ that are strictly greater than some component of $\Xc$ in the poset, and $X_-$ as (the union of) the collection of components of $H$ that are strictly smaller.
Then, as it will turn out, $\Suffixes(H)$ will be the union of the spans of all the seeds $(X_+, X_-, \Xc)$ formed in this process.
One challenge remains: it is not clear how to parameterize each such seed  with a~bounded number of vertices.
However, this turns out to be actually possible, and this observation constitutes the combinatorial heart of our proof. We will prove that in each relevant seed $(X_+, X_-, \Xc)$, we can find a~subfamily $\Xc^\star \subseteq \Xc$ of \emph{bounded} size (in terms of a~function in $k$) such that $X_+$ and $X_-$ still remain the unions of the collections of components of $H$ that are, respectively, strictly greater or strictly smaller than any element of $\Xc^\star$ in the poset $(\Scc, \leq)$.
This ultimately allows us to parameterize each seed in terms of a~bounded-length tuple $\tup{b}$ of vertices of $H$: for each component of $\Xc^\star$, include in $\tup{b}$ an~arbitrary vertex of this component.
Then $X_+$ will contain vertices of $G$ that can be reached from $\tup{b}$ in $H$ but cannot reach any vertex of $\tup{b}$, and similarly, $X_-$ will contain vertices of $G$ that can reach in $H$ some vertex of $\tup{b}$ yet cannot be reached from any vertex of $\tup{b}$.
We are ready to proceed to formal details.

\newcommand{\IsConsistent}{\mathsf{Consistent}}

Fix $k \in \N$ and $A \subseteq \atp^{k+1} \times \atp^{k+1}$ as in the statement of the lemma.
For $\ell \in \N$, whose exact value we will determine later, we define the formulas $\varphi_+(z, \tup{x}\tup{y})$, $\varphi_-(z, \tup{x}\tup{y})$, $\varphi_\sim(z_1, z_2, \tup{x}\tup{y})$ with $|\tup{x}| = k$, $|\tup{y}| = \ell$ (so that $\tup{x}$ is the tuple of variables representing the parameters $\tup{a}$ of the flip and $\tup{y}$ is the tuple of variables representing vertices $\tup{b}$ of the antichain $\Xc^\star$ as in the description above), as follows.
Let $z_1 \leq_{\tup{x}} z_2$ be the shorthand for $\flipreach_A(z_1, z_2, \tup{x})$; that is, $z_1 \leq_{\tup{x}} z_2$ holds iff $z_2$ is reachable from $z_1$ in the flip $G \oplus_A \tup{x}$.
We also define $z_1 <_{\tup{x}} z_2$ as the shorthand for $(z_1 \leq_{\tup{x}} z_2) \wedge \neg (z_2 \leq_{\tup{x}} z_1)$.
Observe that $\leq_{\tup{x}}$ and $<_{\tup{x}}$ are transitive in the following sense: if $z_1 \leq_{\tup{x}} z_2$ and $z_2 \leq_{\tup{x}} z_3$, then also $z_1 \leq_{\tup{x}} z_3$; and if at least one of the two inequalities is actually strict (i.e., if it additionally holds that $z_1 <_{\tup{x}} z_2$ or $z_2 <_{\tup{x}} z_3$), then also $z_1 <_{\tup{x}} z_3$.
Finally, define the following formulas:
\begin{align*}
    \varphi_\sim(z_1, z_2, \tup{x}\tup{y}) &= (z_1 \leq_{\tup{x}} z_2) \wedge (z_2 \leq_{\tup{x}} z_1), \\
    \widehat{\varphi}_+(z, \tup{x}\tup{y}) &= \bigvee_{i \in [\ell]} y_i <_{\tup{x}} z, \\
    \widehat{\varphi}_-(z, \tup{x}\tup{y}) &= \bigvee_{i \in [\ell]} z <_{\tup{x}} y_i, \\
    \widehat{\varphi}_\Xc(z, \tup{x}\tup{y}) &= \neg\widehat{\varphi}_+(z, \tup{x}\tup{y}) \wedge \neg\widehat{\varphi}_-(z, \tup{x}\tup{y}), \\
    \IsConsistent(\tup{x}\tup{y}) &= \bigwedge_{i \in [\ell]} \widehat{\varphi}_\Xc(y_i, \tup{x}\tup{y}) \wedge \neg\exists_{z_1, z_2} \left( \widehat{\varphi}_\Xc(z_1, \tup{x}\tup{y}) \wedge \widehat{\varphi}_\Xc(z_2, \tup{x}\tup{y}) \wedge z_1 <_{\tup{x}} z_2 \right), \\
    \varphi_+(z, \tup{x}\tup{y}) &= \IsConsistent(\tup{x}\tup{y}) \wedge \widehat{\varphi}_+(z, \tup{x}\tup{y}), \\
    \varphi_-(z, \tup{x}\tup{y}) &= \neg\IsConsistent(\tup{x}\tup{y}) \vee \widehat{\varphi}_-(z, \tup{x}\tup{y}).
\end{align*}
Here, $\varphi_\sim(v_1, v_2, \tup{a}\tup{b})$ tests whether two vertices $v_1, v_2$ are in the same strongly connected component of $H = G \oplus_A \tup{a}$ (in which case they belong to the same set of $\Xc$ for a~seed $(X_+, X_-, \Xc)$ with $v_1, v_2 \in \bigcup \Xc$).
Then the formulas $\widehat{\varphi}_+, \widehat{\varphi}_-$ tentatively determine the sets $X_+, X_-$ of the seed $(X_+, X_-, \Xc)$, given a~sample $\tup{b}$ of vertices chosen from $\bigcup \Xc$: we declare that $X_+$ is the set of vertices reachable in $H$ from $\tup{b}$ but which cannot reach $\tup{b}$, and $X_-$ are the vertices that can reach $\tup{b}$, but are unreachable from $\tup{b}$.
Then $\widehat{\varphi}_\Xc$ determines the set of vertices in $\bigcup \Xc$ as all vertices outside of $X_+$ and $X_-$; we consider $\bigcup \Xc$ to be partitioned into sets $\Xc$ according to $\varphi_\sim$.
We will say that such a~tuple $(X_+, X_-, \Xc)$ is \emph{generated} by $\tup{b}$.

However, not every choice of $\tup{b}$ generates a~valid seed: we have to first ensure that each element of $\tup{b}$ \emph{actually belongs} to $\bigcup \Xc$, and that $\Xc$ does not contain two comparable strongly connected components.
Let us define that $\tup{b}$ is \emph{consistent} if it passes both of these checks, and observe that $\IsConsistent(\tup{a}\tup{b})$ verifies the consistency of $\tup{b}$ according to our definition.
The final contents of the seed identified by $\varphi_+, \varphi_-, \varphi_\sim, \tup{a}\tup{b}$ depends on whether $\tup{b}$ is consistent: if this is the case, we simply take $(X_+, X_-, \Xc)$ as the seed (and in this case we have $\varphi_+ = \widehat{\varphi}_+$ and $\varphi_- = \widehat{\varphi}_-$), and otherwise we use the trivial seed $(\emptyset, V(G), \emptyset)$ with span $\{\emptyset\}$ (in which case we define $\varphi_+$ to always fail and $\varphi_-$ to always hold).

We now prove several properties of our formulas across several lemmas. Together, these will amount to the proof of \Cref{lem:suffixes-to-spans}.

\begin{lemma}\label{lem:def-to-seed}
    Let $G$ be an~undirected graph and let $\tup{a} \in V(G)^k$, $\tup{b} \in V(G)^\ell$.
    Then $\Seed(\varphi_+, \varphi_-, \varphi_\sim, \tup{a}\tup{b})$ is a~seed that is $\tup{a}$-uniform in $G$.
\end{lemma}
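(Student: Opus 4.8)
The plan is to check the two conditions that make $\Seed(\varphi_+,\varphi_-,\varphi_\sim,\tup{a}\tup{b})$ a legitimate seed, and then to verify $\tup{a}$-uniformity. Write $H \coloneqq G \oplus_{\tup{a}} A$ throughout; recall that $z_1 \leq_{\tup{a}} z_2$ abbreviates ``$z_2$ is reachable from $z_1$ in $H$'', a relation that is reflexive (the one-vertex path) and transitive (path concatenation), so that $<_{\tup{a}}$ is irreflexive and transitive in the sense noted just before the statement.

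\emph{Well-definedness of the seed.} First I would observe that $\varphi_\sim(\cdot,\cdot,\tup{a}\tup{b})$ is an equivalence relation on $V(G)$: it is exactly the ``same strongly connected component of $H$'' relation, whose reflexivity, symmetry, and transitivity follow from the corresponding properties of $\leq_{\tup{a}}$. Next I would show that $\varphi_+(\cdot,\tup{a}\tup{b})$ and $\varphi_-(\cdot,\tup{a}\tup{b})$ define disjoint sets $X_+,X_-$, by cases on $\IsConsistent(\tup{a}\tup{b})$. If it fails, $\varphi_+$ is unsatisfiable, so $X_+=\emptyset$. If it holds, then $\varphi_+\equiv\widehat{\varphi}_+$ and $\varphi_-\equiv\widehat{\varphi}_-$, and a vertex $v\in X_+\cap X_-$ would yield $b_i <_{\tup{a}} v <_{\tup{a}} b_j$ for some $i,j$, hence $b_i <_{\tup{a}} b_j$ by transitivity; if $b_i=b_j$ this contradicts irreflexivity of $<_{\tup{a}}$, and if $b_i\neq b_j$ it contradicts the second conjunct of $\IsConsistent$, whose first conjunct guarantees $\widehat{\varphi}_\Xc(b_i)$ and $\widehat{\varphi}_\Xc(b_j)$. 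Given these two conditions, the quadruple defines a unique seed $(X_+,X_-,\Xc)$ in which $\Xc$ is the partition of $V(G)\setminus(X_+\cup X_-)$ into $\varphi_\sim$-classes; its parts are nonempty and $\Xc\cup\{X_+,X_-\}$ partitions $V(G)$, so it is indeed a seed.

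\emph{$\tup{a}$-uniformity.} Fix distinct $u_1,u_2\in\bigcup\Xc$ with $\atp(u_1,\tup{a})=\atp(u_2,\tup{a})$ (this equality forces neither vertex to be a parameter) and $w\notin\Xc(u_1)\cup\Xc(u_2)$; the goal is $u_1w\in E(G)\iff u_2w\in E(G)$. Using the definition of the flip together with $\atp(u_1,\tup{a})=\atp(u_2,\tup{a})$, one sees that $\vec{u_1w}\in E(H)$ and $\vec{u_2w}\in E(H)$ differ exactly when $u_1w\in E(G)$ and $u_2w\in E(G)$ differ, and likewise that $\vec{wu_1}\in E(H)$ and $\vec{wu_2}\in E(H)$ differ exactly in the same case; so it suffices to show $\vec{u_1w}\in E(H)\iff\vec{u_2w}\in E(H)$. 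Assume not, say $\vec{u_1w}\in E(H)$ but $\vec{u_2w}\notin E(H)$. Carrying the two truth values $p\coloneqq[(\atp(u_1,\tup{a}),\atp(w,\tup{a}))\in A]$ and $q\coloneqq[(\atp(w,\tup{a}),\atp(u_1,\tup{a}))\in A]$ through the flip definition shows that $\vec{wu_1}\in E(H)$ holds iff $p=q$, while $\vec{wu_2}\in E(H)$ holds iff $p\neq q$. If $p=q$, then $\vec{u_1w}$ and $\vec{wu_1}$ are both arcs of $H$, so $w$ lies in the strongly connected component of $u_1$, contradicting $w\notin\Xc(u_1)$. If $p\neq q$, then $\vec{u_1w}$ and $\vec{wu_2}$ are arcs of $H$, so $u_1\leq_{\tup{a}}u_2$; moreover $\bigcup\Xc\neq\emptyset$ forces $\IsConsistent(\tup{a}\tup{b})$ to hold, and the component $C_w$ of $w$ satisfies $\Xc(u_1)\leq C_w\leq\Xc(u_2)$ in the reachability order on strongly connected components while being distinct from both, so $\Xc(u_1)\neq\Xc(u_2)$ and hence $u_1<_{\tup{a}}u_2$; but $u_1,u_2\in\bigcup\Xc$ means $\widehat{\varphi}_\Xc(u_1)$ and $\widehat{\varphi}_\Xc(u_2)$ hold, contradicting the second conjunct of $\IsConsistent$. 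Either way we obtain a contradiction, which proves $\tup{a}$-uniformity.

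The only delicate point is the bookkeeping in the uniformity argument: identifying the directions of the arcs between $w$ and $u_1,u_2$ from $p$ and $q$, and converting the resulting path $u_1\to w\to u_2$ in $H$ into a strict inequality $u_1<_{\tup{a}}u_2$ that the consistency predicate explicitly forbids. Everything else — the equivalence-relation check, the disjointness check, and the reduction of neighborhood equality to arc equality in $H$ — is routine.
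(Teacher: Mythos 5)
Your proposal is correct and follows essentially the same route as the paper's proof. For the seed well-definedness you handle disjointness of $X_+$ and $X_-$ via consistency exactly as the paper does; for $\tup{a}$-uniformity you also reach the same two-way case split (either a back-and-forth pair of arcs puts $w$ in the strongly connected component of one of $u_1,u_2$, or a one-directional chain $u_1\to w\to u_2$ yields $u_1<_{\tup{a}}u_2$ and violates $\IsConsistent$). The only cosmetic difference is your bookkeeping via the Boolean values $p,q$, which is an explicit (and correct) rephrasing of the paper's observation that exactly one of $\vec{u_1v},\vec{u_2v}$ and exactly one of $\vec{vu_1},\vec{vu_2}$ is an arc; both proofs also silently rely on the same routine fact that, in the consistent case, $X_+$ and $X_-$ are unions of strongly connected components of $H$, so $\Xc(u_i)$ really is the full SCC of $u_i$.
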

\begin{proof}
    First we show that $\Seed(\varphi_+, \varphi_-, \varphi_\sim, \tup{a}\tup{b})$ indeed forms a~seed $(X_+, X_-, \Xc)$.
    Recall that $X_+ = \{v \in V(G) \,\mid\, G \models \varphi_+(v, \tup{a})\}$ and $X_- = \{v \in V(G) \,\mid\, G \models \varphi_-(v, \tup{a})\}$.
    Since $\varphi_\sim$ naturally defines an~equivalence relation on $V(G)$, the only non-trivial part of the proof is that $X_+$ and $X_-$ are disjoint.
    This is obviously the case when $\tup{b}$ is inconsistent, so suppose otherwise.
    Then a~vertex $v \in X_+ \cap X_-$ would witness that $b_i <_{\tup{x}} v$ for some $i \in [\ell]$, as well as $v <_{\tup{x}} b_j$ for some $j \in [\ell]$.
    Therefore $b_i <_{\tup{x}} b_j$; but this contradicts the consistency of $\tup{b}$.

    Now suppose that the seed is not $\tup{a}$-uniform in $G$.
    Again, assume that $\tup{b}$ is consistent as the inconsistent case is trivial.
    Recall that $\Xc$ is the set of equivalence classes of $V(G) \setminus (X_+ \cup X_-)$ defined with respect to $\varphi_\sim$ (and by the definitions of $\varphi_+, \varphi_-, \varphi_\sim$, we have that $\Xc$ is the~union of a collection of strongly connected components of $H$).
    Then we have three vertices $u_1, u_2, v \in V(G)$ such that $u_1, u_2 \in \bigcup \Xc$, $\atp(u_1, \tup{a}) = \atp(u_2, \tup{a})$, $v \notin \Xc(u_1) \cup \Xc(u_2)$ and, without loss of generality, $u_1v \in E(G)$ and $u_2v \notin E(G)$.
    Since $H = G \oplus_{\tup{a}} A$, we have by definition:
    \begin{align*}
        \vec{u_1v} \in E(H) \,&\Longleftrightarrow\, \neg [ (\atp(u_1, \tup{a}), \atp(v, \tup{a})) \in A ], \\
        \vec{u_2v} \in E(H) \,&\Longleftrightarrow\, (\atp(u_2, \tup{a}), \atp(v, \tup{a})) \in A.
    \end{align*}
    Since $\atp(u_1, \tup{a}) = \atp(u_2, \tup{a})$, precisely one of the arcs $\vec{u_1v}$, $\vec{u_2v}$ belongs to $H$.
    Following a~symmetric argument, we also find that exactly one of the arcs $\vec{vu_1}$, $\vec{vu_2}$ belongs to $H$.
    But then one of the following cases holds:
    \begin{itemize}[nosep]
        \item $\vec{u_iv}, \vec{vu_i} \in E(H)$ for some $i \in \{1,2\}$.
        But then $v \in \Xc(u_i)$ --- a~contradiction.
        \item $\vec{u_iv}, \vec{vu_j} \in E(H)$ for some $\{i, j\} = \{1, 2\}$.
        Since $v \notin \Xc(u_i)$, we necessarily have that $u_i <_{\tup{x}} v$; similarly, we have that $v <_{\tup{x}} u_j$.
        But then $u_i <_{\tup{x}} u_j$, and since $u_i, u_j \in \bigcup \Xc$, we infer that $\widehat{\varphi}_\Xc(u_i, \tup{a}\tup{b}) \wedge \widehat{\varphi}_\Xc(u_j, \tup{a}\tup{b}) \wedge (u_i <_{\tup{x}} u_j)$. This contradicts the consistency of $\tup{b}$.
    \end{itemize}
    Therefore, the considered seed is $\tup{a}$-uniform.
\end{proof}

Next, we verify both inclusions of equation \eqref{eq:suffixes-eq-span}; we begin with the simpler one.

\begin{lemma}
    \label{lem:param-to-suffix}
    Let $G$ be an~undirected graph, $\tup{a} \in V(G)^k$, and $\tup{b} \in V(G)^\ell$.
    Define the seed $(X_+, X_-, \Xc) = \Seed(\varphi_+, \varphi_-, \varphi_\sim, \tup{a}\tup{b})$ and consider $X \in \Span(X_+, X_-, \Xc)$.
    Then $X$ is a~suffix of $G \oplus_{\tup{a}} A$.
\end{lemma}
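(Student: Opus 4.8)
The plan is to unfold the definitions of $X_+$, $X_-$, $\Xc$ and of $\Span$, and then verify the suffix condition one arc at a time. First I would dispose of the degenerate case: if $\tup{b}$ is inconsistent, then by construction $\varphi_+$ fails everywhere and $\varphi_-$ holds everywhere, so $X_+ = \emptyset$, $X_- = V(G)$, $\Xc = \emptyset$, and $\Span(X_+, X_-, \Xc) = \{\emptyset\}$; the empty set is trivially a suffix of any digraph. So from now on I assume $\tup{b}$ is consistent, which means $\varphi_+ = \widehat{\varphi}_+$ and $\varphi_- = \widehat{\varphi}_-$, hence $X_+ = \{v : \exists i\in[\ell],\ b_i <_{\tup{a}} v\}$, $X_- = \{v : \exists i\in[\ell],\ v <_{\tup{a}} b_i\}$, and $\Xc$ is the partition of $V(G)\setminus(X_+\cup X_-)$ into strongly connected components of $H := G\oplus_{\tup{a}} A$ (because $\varphi_\sim$ defines mutual reachability in $H$). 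I will also use that $\leq_{\tup{a}}$ is the reachability preorder of $H$, and the transitivity observation made before the construction: if $z_1 \leq_{\tup{a}} z_2 \leq_{\tup{a}} z_3$ and one of the two inequalities is strict, then $z_1 <_{\tup{a}} z_3$.

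Now fix $\mathcal{Y} \subseteq \Xc$ and let $X = X_+ \cup \bigcup\mathcal{Y}$; I must show that for every arc $\vec{uv} \in E(H)$, if $u\in X$ then $v\in X$. The single arc $\vec{uv}$ certifies $u \leq_{\tup{a}} v$. I split on the location of $u$. If $u\in X_+$, pick $i$ with $b_i <_{\tup{a}} u$; then $b_i <_{\tup{a}} u \leq_{\tup{a}} v$ gives $b_i <_{\tup{a}} v$, so $v\in X_+\subseteq X$. If instead $u\in\bigcup\mathcal{Y}$, so $\Xc(u)\in\mathcal{Y}$ and $u\notin X_+\cup X_-$, I consider where $v$ lands. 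If $v\in X_+$ we are done. The case $v\in X_-$ is impossible: it would give some $j$ with $v <_{\tup{a}} b_j$, hence $u\leq_{\tup{a}} v <_{\tup{a}} b_j$, so $u <_{\tup{a}} b_j$ and $u\in X_-$, contradicting $u\in\bigcup\Xc$. Finally, if $v\in\bigcup\Xc$, then both $u$ and $v$ lie in $\bigcup\Xc$ with $u\leq_{\tup{a}} v$; the consistency of $\tup{b}$ --- precisely the conjunct $\neg\exists_{z_1,z_2}\,(\widehat{\varphi}_\Xc(z_1, \tup{a}\tup{b})\wedge\widehat{\varphi}_\Xc(z_2, \tup{a}\tup{b})\wedge z_1 <_{\tup{a}} z_2)$ of $\IsConsistent$ --- forbids $u <_{\tup{a}} v$, so in fact $v\leq_{\tup{a}} u$ as well, meaning $u$ and $v$ are mutually reachable in $H$ and therefore $\Xc(v) = \Xc(u)\in\mathcal{Y}$; hence $v\in\bigcup\mathcal{Y}\subseteq X$. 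In every case $v\in X$, so $X$ is a suffix of $H$.

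The only genuinely delicate point is the last case, where $u$ and $v$ both lie in components of $\Xc$: there one must invoke the consistency condition built into $\IsConsistent$ to rule out comparability of elements of $\Xc$, which is exactly what forces $u$ and $v$ into the same strongly connected component and hence $\Xc(v)$ into $\mathcal{Y}$. Everything else is routine manipulation of the reachability preorder $\leq_{\tup{a}}$ and its strict part.
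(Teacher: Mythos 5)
Your proof is correct and takes essentially the same approach as the paper's: a case analysis on the locations of $u$ and $v$ among $X_+$, $X_-$, and $\bigcup\Xc$, using transitivity of $\leq_{\tup{a}}$/$<_{\tup{a}}$ and invoking the consistency conjunct of $\IsConsistent$ to rule out comparability between two vertices of $\bigcup\Xc$, which forces them into the same strongly connected component. The only cosmetic difference is that the paper argues by contradiction (assume $u\in X$, $v\notin X$) while you argue directly, and you spell out the inconsistent-$\tup{b}$ degenerate case where the paper merely notes that $X\neq\emptyset$ already forces consistency.
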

\begin{proof}
    Assume $X \neq \emptyset$ as otherwise there is nothing to prove.
    Then $\tup{b}$ must be consistent.
    Recall that $X = X_+ \cup \bigcup \Yc$ for some $\Yc \subseteq \Xc$ and so $V(G) \setminus X = X_- \cup \bigcup (\Xc \setminus \Yc)$.
    For the sake of contradiction assume that we have some $u \in X$ and $v \notin X$ with $\vec{uv} \in E(G \oplus_{\tup{a}} A)$.
    If $u \in X_+$, then $b_i <_{\tup{a}} u$ for some $i \in [\ell]$ and therefore $b_i <_{\tup{a}} v$ from $u \leq_{\tup{a}} v$, so $v \in X_+$ as well --- a~contradiction.
    Similarly, if $v \in X_-$, then $v <_{\tup{a}} b_j$ for some $j \in [\ell]$ and so also $u <_{\tup{a}} b_j$ from $u \leq_{\tup{a}} v$, implying $u \in X_-$ --- again a~contradiction.
    Therefore, $u \in \bigcup \Yc$ and $v \in \bigcup (\Xc \setminus \Yc)$.
    By the consistency of $\tup{b}$, it cannot be that $u <_{\tup{a}} v$.
    From our assumption that $u \leq_{\tup{a}} v$ we infer that $v \leq_{\tup{a}} u$, so $u$ and $v$ are in the same strongly connected component of $G \oplus_{\tup{a}} A$ and in particular in the same set of $\Xc$.
    But this contradicts the facts that $u \in \bigcup \Yc$ and $v \in \bigcup (\Xc \setminus \Yc)$.
    Hence $X$ must be a~suffix of $G \oplus_{\tup{a}} A$.
\end{proof}

The next lemma provides the opposite direction. In the proof we locate a~succinctly parameterizable seed containing a~given suffix of $G \oplus_{\tup{a}} A$. This is the core argument of the whole approach.

\begin{lemma}
    \label{lem:suffix-to-param}
    There exists $\ell \in \N$, depending only on $k$, with the following property.
    Let $G$ be an~undirected graph, $\tup{a} \in V(G)^k$, and $X$ be a~suffix of $G \oplus_{\tup{a}} A$.
    Then there exists $\tup{b} \in V(G)^\ell$ such that $X \in \Span(\varphi_+, \varphi_-, \varphi_\sim, \tup{a}\tup{b})$.
\end{lemma}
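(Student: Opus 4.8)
The plan is to study the digraph $H \coloneqq G \oplus_{\tup a} A$ through its strongly connected components, which form the poset $(\Scc, \leq)$ used already in the discussion above. A set $X$ is a suffix of $H$ if and only if it is a union of vertex sets of SCCs whose image in $(\Scc, \leq)$ is an up-set (upward closed); so fix $G$, $\tup a \in V(G)^k$, and a suffix $X$, and regard $X$ as an up-set of $(\Scc, \leq)$. For an antichain $\Xc'$ of $(\Scc, \leq)$, write $\Xc'^{+}$ (resp. $\Xc'^{-}$) for the union of the vertex sets of all SCCs lying strictly above (resp. strictly below) some element of $\Xc'$. If $\Xc'$ is a \emph{maximal} antichain, then $\{\Xc'^{+}, \bigcup\Xc', \Xc'^{-}\}$ partitions $V(G)$, and, unwinding the definitions of $\widehat\varphi_+, \widehat\varphi_-, \varphi_\sim, \IsConsistent$, any tuple $\tup b$ whose entries hit exactly the SCCs of a subfamily $\Xc^\star \subseteq \Xc'$ that satisfies $\{B\}^{+}$-union $= \Xc'^{+}$ and $\{B\}^{-}$-union $= \Xc'^{-}$ over $B\in\Xc^\star$ is \emph{consistent} and has $\Seed(\varphi_+, \varphi_-, \varphi_\sim, \tup a\tup b) = (\Xc'^{+}, \Xc'^{-}, \Xc')$. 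Thus it suffices to (1) locate a maximal antichain $\Xc'$ that ``straddles'' $X$, in the sense that $\Xc'^{+} \subseteq X$ and $\Xc'^{-} \cap X = \emptyset$ — since then $X$ is an up-set squeezed between $\Xc'^{+}$ and $\Xc'^{+} \cup \bigcup\Xc'$, so $X \in \Span(\Xc'^{+}, \Xc'^{-}, \Xc')$ — and (2) shrink $\Xc'$ to a subfamily $\Xc^\star$ of size bounded by a function of $k$ that has the same strict up- and down-closures. (The degenerate cases $X \in \{\emptyset, V(G)\}$, including $\tup a$ inadmissible, are covered automatically by this scheme; the only genuine exception, $V(G) = \emptyset$, is vacuous.)

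For task (1): let $M$ be the set of SCCs that are minimal among those contained in $X$, together with the SCCs that are maximal among those contained in $\Scc \setminus X$. A short case analysis shows that every SCC of $H$ lies strictly above some minimal-in-$X$ SCC, or strictly below some maximal-in-$(\Scc\setminus X)$ SCC, or itself belongs to $M$; from this it follows that any inclusionwise maximal antichain $\Xc'$ contained in $M$ is a maximal antichain of the whole poset $(\Scc, \leq)$, and by construction of $M$ it satisfies $\Xc'^{+} \subseteq X$ (a SCC strictly above a minimal-in-$X$ SCC lies in $X$ as $X$ is an up-set; a SCC strictly above a maximal-in-$(\Scc\setminus X)$ SCC is not in $\Scc\setminus X$) and, symmetrically, $\Xc'^{-} \cap X = \emptyset$. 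Pick such an $\Xc'$.

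Task (2) is the main obstacle. Here I would prove the structural fact that a flip of an undirected graph with $k$ parameters has, in its SCC poset, only $\Oh_k(1)$ distinct sets of the form ``union of all SCCs strictly above a fixed SCC'' and, symmetrically, only $\Oh_k(1)$ distinct strict down-closures of a fixed SCC. Granting this, partition $\Xc'$ according to the pair (strict up-closure, strict down-closure) of each of its elements: there are at most $\ell$ classes, with $\ell$ depending only on $k$ (and on the finitely many relations $A \subseteq \atp^{k+1}\times\atp^{k+1}$, hence only on $k$). Let $\Xc^\star$ consist of one representative of each class; then the union of $\{B\}^{+}$ over $B \in \Xc^\star$ equals the union over $B \in \Xc'$, which is $\Xc'^{+}$, and likewise for down-closures. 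The intuition behind the structural fact is that $V(G)$ splits into at most $|\atp^{k+1}|$ atomic-type classes over $\tup a$, the flip acts uniformly between each ordered pair of classes, and flipping a class-pair forces almost all arcs across it (so that reachability across it is insensitive to the individual endpoints), while an unflipped pair keeps the symmetric adjacency of $G$; propagating this through directed paths shows that the set of SCCs reachable from (or reaching) a given SCC $C$ is governed by a bounded amount of data attached to $C$ — essentially which atomic-type classes $C$ meets and how directed paths can leave them. Making this rigorous, in particular handling the interplay between symmetrically and asymmetrically flipped class-pairs, is where the real work of the proof lies.

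Finally, assemble the pieces. From $\Xc^\star$, which has size at most $\ell$, form $\tup b \in V(G)^\ell$ by choosing one vertex from each SCC in $\Xc^\star$ and, if necessary, padding to length exactly $\ell$ by repeating an already-chosen entry — repetition introduces no new disjunct into $\widehat\varphi_+$ or $\widehat\varphi_-$, so the generated seed is unchanged. By the first paragraph, $\tup b$ is then consistent and $\Seed(\varphi_+, \varphi_-, \varphi_\sim, \tup a\tup b) = (\Xc'^{+}, \Xc'^{-}, \Xc')$, whose span contains $X$ by task (1). This proves \Cref{lem:suffix-to-param}; together with \Cref{lem:param-to-suffix} and \Cref{lem:def-to-seed} it establishes \Cref{lem:suffixes-to-spans}, and combined with \Cref{lem:low-rank-to-suffixes} it yields the Low Rank Structure Theorem (\Cref{thm:low-rank-structure}).
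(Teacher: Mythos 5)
Your task (1) is correct, and in fact a clean alternative to the paper's own bootstrapping: the paper starts from the trivial splendid seed $(X,V(G)\setminus X,\emptyset)$ and greedily grows $\Xc$, whereas you build the straddling maximal antichain directly from the minimal-in-$X$ and maximal-in-$(\Scc\setminus X)$ components. Your consistency verification (that a tuple hitting a suitable subfamily of $\Xc'$ is consistent and generates the seed $(\Xc'^+,\Xc'^-,\Xc')$) also matches the paper's \Cref{cl:span-verification}. The problem is task (2).

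You defer the entire combinatorial content of the lemma to the unproved ``structural fact'' that a $k$-parameter flip has only $\Oh_k(1)$ distinct strict up-closures (and down-closures) among its SCCs, and you explicitly acknowledge that making this rigorous ``is where the real work of the proof lies.'' This is not just an unfilled detail; the fact as stated is \emph{false}. Take $G$ to be a co-half-graph on two color classes $U=\{u_1,\dots,u_n\}$ and $W=\{w_1,\dots,w_n\}$ with $u_i\sim w_j\iff i>j$, take $k=0$, and flip the ordered type pair $(U,W)$ but not $(W,U)$. Then $H$ has $\vec{u_iw_j}\iff i\le j$ and $\vec{w_ju_i}\iff i>j$, all SCCs are singletons, and the strict up-closures are $\{u_{i+1},\dots,u_n,w_i,\dots,w_n\}$ and $\{u_{j+1},\dots,u_n,w_{j+1},\dots,w_n\}$, which are $2n$ pairwise distinct sets. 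So the number of distinct strict up-closures can be linear in $|V(G)|$, not $\Oh_k(1)$. What is true in that example is that the SCC poset is a \emph{chain}, so any antichain has size one --- which hides the issue but also shows that your argument, if salvageable at all, must crucially exploit the antichain structure of $\Xc'$, and you have not done so.

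Even restricted to antichains, ``boundedly many distinct (up-closure, down-closure) pairs'' is strictly stronger than what is needed and than what the paper proves. The paper does not bound the number of distinct up-closures at all; it bounds the size of a \emph{minimal cover} $B_+\subseteq\bigcup\Xc$ (\Cref{cl:small-upper-covering}) via a pigeonhole/crossing argument: each $u_i$ in a minimal cover has a private witness $v_i$ with $u_i<_{\tup a}v_i$ yet $u_j\not<_{\tup a}v_i$ for $j\ne i$; each witnessing path $P_i$ contains an \emph{asymmetric} arc $\vec{u_i'v_i'}$; two of these asymmetric arcs share their ordered pair of atomic types by pigeonhole; and then the asymmetry of the flip forces either $\vec{u_1'v_2'}$ or $\vec{v_2'u_1'}$, each of which contradicts the private-witness property. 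Note the distinction: if the strict up-closures of $C_1,\dots,C_m\in\Xc$ were nested, the minimal cover has size $1$, yet there are $m$ distinct up-closures. Your representatives-per-equivalence-class scheme would pick $m$ elements there, so even if the antichain fact were true with some bound, that bound would not be $|\atp^{k+1}|^2$ and you would still need a separate argument. In short, the place where you say ``the real work lies'' is exactly where the paper's path-crossing lemma lives, and your proposal contains neither that argument nor a correct substitute.
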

\begin{proof}
    Set $\ell \coloneqq 2 \cdot |\atp^{k+1}|^2$.
    For the course of the proof, fix $G$, $\tup{a}$ and $X$ as in the statement of the lemma.
    Let also $H = G \oplus_{\tup{a}} A$, and define $(\Scc, \leq)$ to be the poset of strongly connected components of $H$.
    Note that for any two vertices $u,v$ of $G$, $u \leq_{\tup{a}} v$ is equivalent to $\Scc(u) \leq \Scc(v)$ in the poset, and $u <_{\tup{a}} v$ is equivalent to $\Scc(u) < \Scc(v)$ in the poset, where $\Scc(w)$ denotes the strongly connected component that contains $w$. For brevity, a {\em{suffix}} of $(\Scc,\leq)$ is an upward-closed subset of $\Scc$ (i.e. a set $\Tt$ such that $C\leq D$ and $C\in \Tt$ implies $D\in \Tt$), and a {\em{prefix}} of $(\Scc,\leq)$ is a downward-closed subset of $\Scc$.

    Call a~seed $(X_+, X_-, \Xc)$ of $G$ \emph{splendid} if it satisfies the following properties:
    \begin{enumerate}[(i), nosep]
        \item $X \in \Span(X_+, X_-, \Xc)$,
        \item \label{item:suffix-to-param-xp-suffix} $X_+$ is the union of a~suffix of $(\Scc, \leq)$,
        \item $X_-$ is the union of a~prefix of $(\Scc, \leq)$, and
        \item \label{item:suffix-to-param-xc-antichain} $\Xc$ is an~antichain of $(\Scc, \leq)$.
    \end{enumerate}
    Since $X$ is a~suffix of $H$, both $X$ and $V(G) \setminus X$ are the unions of collections of strongly connected components of $H$; so $(X, V(G) \setminus X, \emptyset)$ is a~splendid seed.
    Choose $(X_+, X_-, \Xc)$ as any splendid seed that maximizes $|\Xc|$.
    In the following series of claims we will show that there exists $\tup{b} \in V(G)^\ell$ such that $(X_+, X_-, \Xc) = \Seed(\varphi_+, \varphi_-, \varphi_\sim, \tup{a}\tup{b})$.
    We begin by examining the relationship between $X_+$ and~$\bigcup \Xc$.

    \begin{claim}
        \label{cl:indset-union-is-covering}
        For every $v \in X_+$ there exists $u \in \bigcup \Xc$ such that $u <_{\tup{a}} v$.
    \end{claim}
    \begin{claimproof}[Proof of the claim]
        Suppose otherwise.
        Let then $P$ be the set of vertices $v \in X_+$ violating this condition: \[P \coloneqq  \left\{v \in X_+ \,\mid\, \textrm{there is no }u\in \bigcup \Xc\textrm{ such that } u <_{\tup{a}} v\right\}.\]
        $P$ is then a~(non-empty) union of some family $\Pc \subseteq \Scc$ of strongly connected components of $H$: if there existed two vertices $v_1 \in P, v_2 \notin P$ belonging the same strongly connected component of $H$, then $v_1\in X_+$ would entail $v_2\in X_+$, and $x_2\notin P$ would entail the existence of $u \in \bigcup \Xc$ such that $u <_{\tup{a}} v_2$.
        But from $v_2 \leq_{\tup{a}} v_1$ it also follows that $u <_{\tup{a}} v_1$, implying that $v_1 \notin P$ as well --- a~contradiction.

        Let $C \in \Pc$ be a~$\leq$-minimal element of the poset $(\Pc, \leq)$.
        We will now show that $(X_+ \setminus C, X_-, \Xc \cup \{C\})$ is splendid, thus contradicting the assumption about the maximality of $(X_+, X_-, \Xc)$.
        The only non-trivial checks are \ref{item:suffix-to-param-xp-suffix} and \ref{item:suffix-to-param-xc-antichain}.
        First assume that \ref{item:suffix-to-param-xp-suffix} is violated; then we have another component $C' \in \Scc$ such that $C' \subseteq X_+$ and $C' < C$.
        In particular, choosing any $v \in C$ and $v' \in C'$, we have $v' <_{\tup{a}} v$.
        Then $v' \in P$: otherwise, we would have some $u \in \bigcup \Xc$ such that $u <_{\tup{a}} v'$, hence $u <_{\tup{a}} v$ and so $v \notin P$ --- a~contradiction with $C \subseteq P$.
        Hence $C' \in \Pc$, but this contradicts the choice of $C \in \Pc$ as a~$\leq$-minimal element of~$(\Pc, \leq)$.
        
        For \ref{item:suffix-to-param-xc-antichain}, assume that $\Xc \cup \{C\}$ is not an~antichain of $(\Scc, \leq)$.
        Remembering that $\Xc$ is an~antichain of $(\Scc, \leq)$, we see that $C$ is comparable in $(\Scc, \leq)$ with some other component $D \in \Xc$.
        We verify two possible~cases:
        \begin{itemize}[nosep]
            \item If $C < D$, then we have a~direct contradiction with the assumption that $X_+$ is a~suffix of $(\Scc, \leq)$: this is the case since $C    \subseteq X_+$ and $D \in \Xc$ is disjoint from $X_+$.
            \item If $D < C$, then there exist $u \in D$, $v \in C$ such that $u <_{\tup{a}} v$.
                But this, by the definition of $P$ and the fact that $D \in \Xc$, implies that $v \notin P$, contradicting $C \subseteq P$.
        \end{itemize}
        The contradiction concludes the proof.
    \end{claimproof}

    Call a~set $B_+ \subseteq \bigcup \Xc$ an \emph{$X_+$-cover} if for every $v \in X_+$ there exists $u \in B_+$ such that $u <_{\tup{a}} v$.
    By \Cref{cl:indset-union-is-covering}, $\bigcup \Xc$ itself is an $X_+$-cover.
    Let now $B_+$ be any $X_+$-cover of minimum cardinality.

    \begin{claim}
        \label{cl:small-upper-covering}
        $|B_+| \leq |\atp^{k+1}|^2$.
    \end{claim}
    \begin{claimproof}[Proof of the claim]
        Again suppose otherwise.
        Let $p \coloneqq |\atp^{k+1}|^2 + 1$ and pick any $p$ distinct vertices $u_1, \ldots, u_p \in B_+$.
        For every $i \in [p]$, the set $B_+ \setminus \{u_i\}$ is not an $X_+$-cover, so there exists a vertex $v_i \in X_+$ such that $u_i <_{\tup{a}} v_i$ and $\neg (u <_{\tup{a}} v_i)$ for all $u \in B_+ \setminus \{u_i\}$.
        In particular, $\neg (u_j <_{\tup{a}} v_i)$ for all $j \in [p] \setminus \{i\}$.

\begin{figure}
 \centering

		\includegraphics[page=6,scale=0.25]{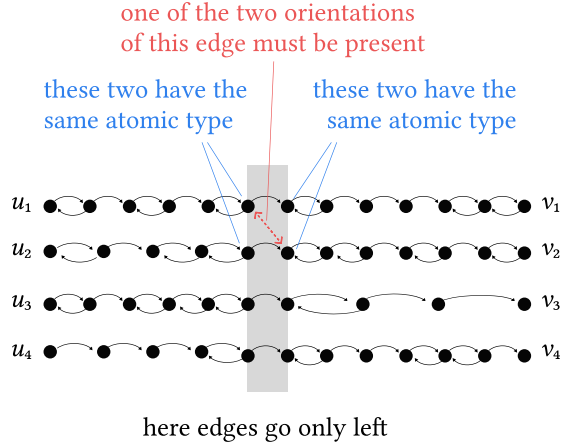}
 \caption{Situation in the proof of \Cref{cl:small-upper-covering}.}\label{fig:matching}
\end{figure}

        For $i \in [p]$ let $P_i$ be a~path from $u_i$ to $v_i$ in $H$ (such a path exists since $u_i <_{\tup{a}} v_i$).
        Since $u_i <_{\tup{a}} v_i$, there is no directed path from $v_i$ to $u_i$. In~particular reversing the directions of all arcs in $P_i$ produces a~sequence of arcs that is \emph{not} a~subpath of $H$.
        Therefore, $P_i$ contains two consecutive vertices $u'_i$, $v'_i$ such that $\vec{u'_i v'_i} \in E(H)$ and $\vec{v'_i u'_i} \notin E(H)$.

        Since $p > |\atp^{k+1}|^2$, we may assume, without loss of generality, that $\atp(u'_1, \tup{a}) = \atp(u'_2, \tup{a})$ and $\atp(v'_1, \tup{a}) = \atp(v'_2, \tup{a})$.
        Let us now define two boolean variables $\lambda_{uv}, \lambda_{vu} \in \{0, 1\}$:
        \[ \begin{split}
            \lambda_{uv} &= [(\atp(u'_1, \tup{a}), \atp(v'_1, \tup{a})) \in A], \\
            \lambda_{vu} &= [(\atp(v'_1, \tup{a}), \atp(u'_1, \tup{a})) \in A].
        \end{split} \]
        By the definition of a~flip, for all $i, j \in [2]$, we have
        \[ \begin{split}
            \vec{u'_i v'_j} \in E(H) \,&\Longleftrightarrow\, (u'_i v'_j \in E(G)) \,\textrm{xor}\, \lambda_{uv}, \\
            \vec{v'_j u'_i} \in E(H) \,&\Longleftrightarrow\, (u'_i v'_j \in E(G)) \,\textrm{xor}\, \lambda_{vu}.
        \end{split} \]
        Since $\vec{u'_1 v'_1} \in E(H)$ and $\vec{v'_1 u'_1} \notin E(H)$, we must have $\lambda_{uv} \,\textrm{xor}\, \lambda_{vu} = 1$.
        But then precisely one of $\vec{u'_1 v'_2}$, $\vec{v'_2 u'_1}$ is an~arc of $H$.
        If $\vec{u'_1 v'_2} \in E(H)$, then $u_1$ can reach $v_2$ in $H$ by following $P_1$ from $u_1$ to $u'_1$, then taking the~arc $\vec{u'_1 v'_2}$, and finally following $P_2$ to $v_2$.
        As also $\neg (u_1 <_{\tup{a}} v_2)$, we conclude that $u_1$ and $v_2$ must be in the same strongly connected component of $H$; in particular, $v_2 \leq_{\tup{a}} u_1$.
        But since $u_1 <_{\tup{a}} v_1$ and $u_2 <_{\tup{a}} v_2$, we find that $u_1 <_{\tup{a}} v_2$ --- a~contradiction.
        A~symmetric argument shows a~contradiction in the case when~$\vec{v'_2 u'_1} \in E(H)$.
    \end{claimproof}

    Analogously, we call a~set $B_- \subseteq \bigcup \Xc$ an \emph{$X_-$-cover} if for every $v \in X_-$ there exists $u \in B_-$ such that $v <_{\tup{a}} u$.
    Arguments symmetric to those in \Cref{cl:indset-union-is-covering,cl:small-upper-covering} show that:

    \begin{claim}
        \label{cl:small-lower-covering}
        There exists an~$X_-$-cover $B_-$ of cardinality at most $|\atp^{k+1}|^2$.
    \end{claim}

    Let then $B \coloneqq B_+ \cup B_-$, so that $|B| \leq \ell$. Choose $\tup{b}$ as any $\ell$-tuple of vertices of $\bigcup \Xc$ containing $B$ entirely.
    (Note that $\bigcup \Xc$ cannot be empty: otherwise, as $(X_+, X_-, \Xc)$ is splendid, we have $X_+ = X \neq \emptyset$, but this contradicts \Cref{cl:indset-union-is-covering}.)

    \begin{claim}
        \label{cl:span-verification}
        $(X_+, X_-, \Xc) = \Seed(\varphi_+, \varphi_-, \varphi_\sim, \tup{a}\tup{b})$.
    \end{claim}
    \begin{claimproof}[Proof of the claim]
        Since $\tup{b}$ is an $X_+$-cover and $(X_+, X_-, \Xc)$ is splendid, we have $X_+ = \{v \in V(G) \,\mid\, G\models \widehat{\varphi}_+(v, \tup{a}\tup{b})\}$.
        By the same argument, $X_- = \{v \in V(G) \,\mid\, G\models \widehat{\varphi}_-(v, \tup{a}\tup{b})\}$, and so we have $\bigcup\Xc = \{v \in V(G) \,\mid\, G\models \widehat{\varphi}_\Xc(v, \tup{a}\tup{b})\}$.
        Therefore, assertion $\IsConsistent(\tup{a}\tup{b})$ holds: we have $b_i \in \bigcup \Xc$ for each $i \in [\ell]$ by our choice of $\tup{b}$, and \ref{item:suffix-to-param-xc-antichain} precludes the existence of vertices $u_1, u_2 \in \bigcup \Xc$ with $u_1 <_{\tup{a}} u_2$.
        Hence $X_+ = \{v \in V(G) \,\colon\, \varphi_+(v, \tup{a}\tup{b})\}$ and $X_- = \{v \in V(G) \,\colon\, \varphi_-(v, \tup{a}\tup{b})\}$.
        Again, by \ref{item:suffix-to-param-xc-antichain}, $\Xc$ is partitioned into strongly connected components of $H$, so it is defined by the formula $\varphi_\sim$.
    \end{claimproof}

    Since $X \in \Span(X_+, X_-, \Xc)$, \Cref{cl:span-verification} settles the proof of \Cref{lem:suffix-to-param}.
\end{proof}

As mentioned, \Cref{lem:suffixes-to-spans} follows directly from \Cref{lem:def-to-seed,lem:param-to-suffix,lem:suffix-to-param}.

\subsection{Proof of the equivalence}\label{sec:flip-reach-proof}

With the combinatorial characterization of sets of low rank we can prove that flip-reachability logic has the low rank definability property on the class of all graphs (\Cref{def:low-rank-definability}).

\begin{lemma}
    \label{lem:low-rank-definability-freach}
    Flip-reachability logic has low rank definability property on the class of all graphs.
\end{lemma}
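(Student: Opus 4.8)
The plan is to reduce the low rank definability property to the Low Rank Structure Theorem (\cref{thm:low-rank-structure}) together with an idempotence argument in the style of \cref{cl:EF}. First I would fix $r\in\N$ and a formula $\varphi(X,\wtup Y,\tup z)$ of flip-reachability logic and, exactly as in the proof of \cref{lem:lrdp-separator}, fold $\wtup Y$ and $\tup z$ into unary predicates of the structure (this leaves all cutranks unchanged), reducing to the case of a flip-reachability sentence $\varphi=\varphi(X)$ with a single free set variable; let $q$ be its quantifier rank, and note that $\varphi$ uses only finitely many flip-reachability predicates. I would then apply \cref{thm:low-rank-structure} to obtain $k=k(r)$ and formulas $\varphi_+,\varphi_-,\varphi_\sim$ so that $\LowRank^r(G)=\bigcup_{\tup a\in V(G)^k}\Span(\Seed(\varphi_+,\varphi_-,\varphi_\sim,\tup a))$ with every such seed $\tup a$-uniform. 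Hence, whenever a set $A$ with $\rk(A)\le r$ satisfies $G\models\varphi(A)$, there are a tuple $\tup a\in V(G)^k$, the seed $(X_+,X_-,\Xc)\coloneqq\Seed(\varphi_+,\varphi_-,\varphi_\sim,\tup a)$, and a subfamily $\Yc\subseteq\Xc$ with $G\models\varphi(X_+\cup\bigcup\Yc)$. The remaining task is to pick a \emph{canonical} such $\tup a$ and $\Yc$ describable by a bounded-length tuple of parameters, and to package this description into the target formula $\psi$.

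The core step will be an idempotence lemma analogous to \cref{cl:EF}. Fixing $\tup a$ and the $\tup a$-uniform seed $(X_+,X_-,\Xc)$, I would assign to each part $P\in\Xc$ a \emph{type} $\mathsf{tp}(P)$ recording only bounded data: the set of atomic types over $\tup a$ realized in $P$, together with the flip-reachability theory up to quantifier rank $q+1$ of $G[P]$ enriched with unary predicates for those atomic types, with all multiplicities capped at a threshold depending on $q$. There are then only $\Oh_{q,r}(1)$ types, and by $\tup a$-uniformity any two parts of the same type attach identically to the rest of the graph. An argument by composition (as for \cref{cl:EF}, but now also tracking the flip-reachability predicates) should then furnish a constant $p$, depending only on $q$ and $r$, such that if $\Yc$ contains more than $p$ parts of some type $\tau$ and $\Xc\setminus\Yc$ also contains more than $p$ parts of type $\tau$, then removing any single $P\in\Yc$ of type $\tau$ does not change whether $X_+\cup\bigcup\Yc$ satisfies $\varphi$. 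Iterating, I would pass from an arbitrary witnessing $\Yc$ to one $\Yc'$ satisfying $\varphi$ in which, for every type $\tau$, either at most $p$ parts of type $\tau$ belong to $\Yc'$ or at most $p$ parts of type $\tau$ lie outside $\Yc'$.

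Such a $\Yc'$ is describable by a function $g\colon\mathcal T\to\{0,1\}$ over the bounded set $\mathcal T$ of part types (saying, per type, whether the ``few'' parts are the included ones or the excluded ones), together with, for each type, a $p$-tuple of vertices of $\bigcup\Xc$ marking the at most $p$ exceptional parts (a vertex determines its part, since parts are the $\varphi_\sim$-classes). I would take $\tup t$ to consist of $\tup a$ together with one $p$-tuple of vertex variables per type in $\mathcal T$, write for each $g$ a flip-reachability formula $\beta_g(x,\tup t)$ defining the corresponding set $X_+\cup\bigcup\{P\in\Xc\colon P$ is selected by $g$ and the markers$\}$ (using $\varphi_+$, $\varphi_\sim$, the definable atomic types over $\tup a$, and first-order logic), and finally set $\psi(x,\tup t)$ to express: there exists $g$ with $G\models\varphi(\beta_g(G,\tup t))$, and for the lexicographically least such $g$ we have $x\in\beta_g(G,\tup t)$. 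Substituting a formula-defined set for $X$ in a flip-reachability formula keeps us inside flip-reachability logic, so $\psi$ has the required form; plugging in the right $\tup a$ and markers produces a tuple $\tup d$ with $\psi(G,\tup d)\in\Span(X_+,X_-,\Xc)$ (hence of rank at most $r$) satisfying $\varphi$. Unfolding the unary predicates back into $\wtup Y,\tup z$ finishes the argument.

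I expect the main obstacle to be the composition claim of the second paragraph: unlike in \cref{cl:EF}, the formula $\varphi$ may use flip-reachability predicates, which are non-local and refer to flips of $G$ that can cut across the parts of $\Xc$. The hard part will be to show that reachability in any flip used by $\varphi$ is, once the part structure and the atomic types over $\tup a$ are fixed, controlled by only boundedly much information --- roughly because a flip collapses vertices of equal atomic type into homogeneously attached blocks --- so that deleting a part of a type with many present and many absent copies cannot be detected by a quantifier-rank-$q$ flip-reachability sentence. I would try to make the bookkeeping cleaner by first moving, in the manner of \cref{lem:operation-forawrd,lem:operation-backward}, to the definable flip of $G$ in which the parts of $\Xc$ become an antichain of strongly connected components and $A$ becomes a suffix.
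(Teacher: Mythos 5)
Your proposal takes essentially the same approach as the paper: apply the Low Rank Structure Theorem to decompose the low-rank set into a seed $(X_+,X_-,\Xc)$, assign bounded types to the parts of $\Xc$, argue idempotence via an Ehrenfeucht--Fra\"iss\'e argument (exploiting $\tup a$-uniformity so that parts of the same type attach identically to the rest of the graph, making the flip-reachability predicates robust to swapping parts), and then define a canonical witness via a disjunction over balancedness patterns $g$ with lex-minimal selection. The obstacle you flag at the end — that flip-reachability predicates are non-local and might see deletions of parts — is precisely the crux of the paper's EF claim, which the paper also handles somewhat tersely by observing that $\tup a$-uniformity plus the added atomic-type predicates make any two parts of equal $q$-flip-reachability type interchangeable; your intuition about homogeneous attachment is exactly the argument they use.
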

\begin{proof}
    We proceed similarly to the proof of~\Cref{lem:lrdp-separator} and~\Cref{lem:property-for-fconn}.
    Let $r\in\mathbb{N}$ be a bound on the rank and $\varphi(X, \wtup Y, \tup z)$ be a formula of flip-reachability logic, as in the definition of low rank definability property. As in the proofs of~\Cref{lem:lrdp-separator} and~\Cref{lem:property-for-fconn}, we may assume that $\wtup Y=\emptyset$ and $\tup z=\emptyset$ (and therefore $\varphi$ can be assumed to have only one free set variable $X$).

    In the remainder of the proof we show that there is a formula $\varphi'(x,\tup t)$ of flip-reachability logic, where $x$ and $\tup t$ are free vertex variables, such that the following holds.
    For every graph $G$ and set $A \subseteq V(G)$ such that $G\models \varphi(A)$, there exists a set $A'\subseteq V(G)$ and an evaluation $\tup d$ of variables $\tup t$ such that
    \[G \models \varphi(A')\qquad \textrm{and}\qquad A' = \setof{v \in V(G)}{G \models \varphi'(v, \tup d)}.\]

    The core of our construction is~\Cref{thm:low-rank-structure}, from which we get the formulas $\varphi_+, \varphi_-,\varphi_\sim$ of flip-reachability logic such that the following holds. For every graph $G$ and set $A\subseteq V(G)$ with $\rk(A)\leq r$,
    there is a tuple $\tup a$ such that $A\in \Span(\Seed(\varphi_+, \varphi_-, \varphi_{\sim}, \tup{a}))$. Note that these formulas depend only on $r$.

    We will also use the \emph{types} of flip-reachability logic.
    Let $\Sigma$ be a finite set of unary predicates including the unary predicates contained in $\varphi$.
    Given an $\Sigma$-colored graph $H$, its \emph{$q$-flip-reachability type} is the set of all flip-reachability sentences of quantifier rank at most $q$ that hold in $H$.
    Similarly as for the flip-connectivity logic, there are only finitely many such types and they can be defined in flip-reachability logic.

    Let $q$ be the quantifier rank of formula $\varphi$.
    \begin{claim}
        Let $G$ be a graph. Assume that there is a set $A \subseteq V(G)$ and a tuple $\tup a$ such that $G\models \varphi(A)$ and $A\in \Span(X_+, X_-, \Xx_{\sim})$, where $(X_+, X_-, \Xx_{\sim}) \coloneqq \Seed(\varphi_+, \varphi_-, \varphi_{\sim}, \tup{a})$. Also, assume that $G$ is marked with unary predicates added for atomic types over $\tup a$. Then there also exists a set $A'\subseteq V(G)$ such that the following holds:
        \begin{itemize}[nosep]
            \item $G\models \varphi(A')$;
            \item $A'\in \Span(X_+, X_-, \Xx_{\sim})$; and
            \item for each $q$-flip-connectivity type $\tau$, $A'$ contains at most $q$ parts from $\Xx_{\sim}$ with $q$-flip-reachability type $\tau$ or $A'$ contains all but at most $q$ parts from $\Xx_{\sim}$ with $q$-flip-reachability type $\tau$.
        \end{itemize}
    \end{claim}
    \begin{claimproof}
        We consider a set $A' \in \Span(X_+, X_-, \Xx_{\sim})$ as follows:
        \begin{itemize}[nosep]
            \item if $A$ contains at most $q$ parts from $\Xx_{\sim}$ with $q$-flip-reachability type $\tau$, then $A'$ contains the same parts from $\Xx_{\sim}$ with $q$-flip-connectivity type $\tau$, and
            \item if $A$ contains all but at most $q$ parts from $\Xx_{\sim}$ with $r$-flip-connectivity type $\tau$, then $A'$ contains the same parts from $\Xx_{\sim}$ with $q$-flip-connectivity type $\tau$, and
            \item in all the other cases $A'$ contains $q$ arbitrary parts from $\Xx_{\sim}$ with $q$-flip-connectivity type $\tau$.
        \end{itemize}

        Now observe that $G \models \varphi(A)$ if and only if $G \models \varphi(A')$.
        Indeed, this is a simple Ehrenfeucht--Fraïssé argument.
        Namely, if Duplicator has a winning strategy for a $q$-round EF game in the structure $G_A \coloneqq (G, A)$, then clearly she can modify her strategy to win in the structure $G_{A'} \coloneqq (G, A')$ by always playing in the part of $\Xx_{\sim}$ in $G_{A'}$ that has the same $q$-flip-reachability type as the part of $\Xx_\sim$ in $G_A$ that she would play in the original game.
        Whenever at the end of the $q$ round game we get two tuples $\tup e$ of vertices played in $G_A$ and $\tup f$ of vertices played in $G_{A'}$ then since the seed $(X_+, X_-, \Xx_{\sim})$ is $\tup a$ uniform and we evaluated $q$-flip-reachability types in $G^+$ that contains predicates for atomic types on $\tup a$, then $\tup e$ and $\tup f$ have the same atomic types with respect to the edge relation and flip-reachability predicates.
    \end{claimproof}

    It remains to construct the formula $\phi'(x, \tup t)$ of flip-reachability logic with a tuple of parameters $\tup t$ of bounded size that defines such a set $A'$.
    Indeed, it is enough to use the parameters from $\tup a$ and for every $q$-flip-reachability type $\tau$ we need at most $q$ additional free variables -- one for each part of $\Xx_{\sim}$ with $q$-flip-reachability type $\tau$ which is in $A'$ (or is not in $A'$ in the case when $A'$ contains all but at most $q$ parts of $\Xx_{\sim}$ with $q$-flip-reachability type $\tau$).
    This way, we can define for every $q$-flip-reachability type $\tau$, we can consider a formula $\xi_\tau$ and there is a bounded number of such formulas.
    Therefore, we can write a single formula $\phi'(x, \tup t)$ by extending the tuple $\tup t$ with a finite number of dummy variables; the formula $\psi$ chooses the appropriate formula $\xi$ by testing the equality type of these dummy variables.   
\end{proof}

\Cref{thm:main-freach} follows from \Cref{lem:low-rank-definability-freach} and \Cref{thm:low-rank-quantifier-elimination}.

\section{Conclusions}
\label{sec:conclusions}

We conclude by outlining several possible directions for future work.

\paragraph*{Beyond graphs.} In this work we considered low rank \mso on undirected graphs. However, the definition can be naturally extended to any kind of structures in which a meaningful notion of the rank of a set can be considered. Take for example binary relational structures. For a partition $(X,\wh X)$ of the universe of a structure $\mathbb A$, we can define the bipartite adjacency matrix $\Adj_{\mathbb A}[X,\wh X]$ by placing at the intersection of the row of $u\in X$ and the column of $v\in X$, the atomic type of the pair $(u,v)$. This way, the adjacency matrix is no longer binary, but as discussed in \cref{sec:logics}, we can still measure its diversity: the number of distinct rows plus the number of distinct columns. Now, in low rank \mso over binary structures we would stipulate the set quantification to range only over sets that induce cuts of bounded diversity. For structures of higher arity, say bounded by $k$, one natural definition would be to index rows of $\Adj_{\mathbb A}[X,\wh X]$ by $(<k)$-tuples of elements of $X$, the columns by $(<k)$-tuples of elements of $Y$, and at the intersection of row $\tup u\in X^{<k}$ and column $\tup v\in \wh X^{<k}$ put the atomic type of $\tup u \tup v$.

To see an example of this definition in practice, consider the setting of finite words over a finite alphabet~$\Sigma$. There are two ways of encoding such words as relational structures: either by equipping the set of positions by the total order, or only by the successor relation. Deploying first-order logic \fo on these two encodings yields two logics with different expressive power. It is customary to consider the ordered one as the right notion of \fo on words, mainly due to classic connections with star-free expressions~\cite{McNaughtonPapert71} and aperiodic monoids~\cite{Schutzenberger65}. In contrast, it is not difficult to see that in both encodings, sets of bounded rank can be characterized as unions of a bounded number of intervals of positions. Consequently, regardless of the encoding, low rank \mso on finite words is equivalent to \fo with access to the total order on positions.

Another case is that of directed graphs, which can be seen as binary structures consisting of the vertex set equipped with a (not necessarily symmetric) arc relation. Recall that we proved that in undirected graphs, properties expressible in low rank \mso can be decided in polynomial time (\cref{cor:xp}), and this was a consequence of the equivalence of low rank \mso and flip-reachability logic (\cref{thm:main-freach}). We do not know whether this equivalence holds in general directed graphs, hence the following question is open.

\begin{question}
 Can every property of directed graphs definable in low rank \mso be decided in polynomial~time?
\end{question}

Finally, another interesting class of structures on which low rank \mso can be deployed are matroids. This setting is slightly tricky, as it is natural to take the connectivity function as the concept of the rank of a set, instead of the standard matroid rank function (i.e., maximum size of an independent subset). That is, if $\cal M$ is a matroid with ground set $E$ and $(X,\wh X)$ is a partition of $E$, then the connectivity function of $X$ is $\rkk_{\cal M}(X)+\rkk_{\cal M}(\wh X)-\rkk_{\cal M}(E)$, where $\rkk_{\cal M}(\cdot)$ is the standard rank function of $\cal M$; this way the connectivity function of $X$ and of $\wh X$ are equal. Given the tight links between the structural theory of matroids and the theory of vertex-minors, where sets of small cutrank play a key role, one might suspect that low rank \mso has interesting properties on matroids as well.

\paragraph*{Monadic dependence.} The recent developments on computational aspects of \fo have highlighted the notion of {\em{monadically dependent}} graph classes as a key structural dividing line. In a nutshell, a class of graphs $\Cc$ is monadically dependent if one cannot interpret the class of all graphs in vertex-colored graphs from $\Cc$ using a fixed \fo interpretation; see~\cite[Section 4.1]{Pilipczuk25} for a formal definition and a discussion. The notion of monadic dependence can be naturally defined also for logics other than \fo. What would be then monadic dependence with respect to low rank \mso? It is not hard to prove using the results of Pilipczuk et al.~\cite{separatorModelChecking} that on subgraph-closed classes, monadic dependence with respect to separator logic coincides with the property of excluding a fixed graph as a topological minor. So by analogy, monadic dependence with respect to low rank \mso should be a dense, logically-motivated analogue of excluding topological minors. It must be, however, different from the property of excluding a fixed vertex-minor, for the latter is known not to imply monadic dependence with respect to \fo~\cite{HlinenyP22}.

We pose the following question as an excuse to explore combinatorial properties of graph classes that are monadically dependent with respect to low rank \mso. It is a low rank \mso counterpart of the analogous question posed for \fo (see e.g.~\cite[Conjecture~2]{Pilipczuk25}), which is currently under intensive investigation.

\begin{question}
 Is the model-checking problem for low rank \mso fixed-parameter tractable on every class of graphs that is monadically dependent with respect to low rank \mso?
\end{question}

\paragraph*{Space complexity.} Note that every graph property definable in \fo can be decided in $\mathsf{L}$ (deterministic logarithmic space): a brute-force model-checking algorithm needs only to remember an evaluation of constantly many variables present in the verified sentence. Due to Reingold's result that undirected reachability is in $\mathsf{L}$~\cite{Reingold08}, the same can be said about separator logic and about flip-connectivity logic; hence also about low rank \mso on any graph class of bounded VC dimension, by \cref{thm:main-fconn-positive}. However, the argument breaks for flip-reachability logic, because testing flip-reachability predicates requires solving directed reachability, which is $\mathsf{NL}$-complete. Consequently, thanks to Immerman--Szelepcs\'enyi
 Theorem~\cite{Immerman88,Szelepcsenyi88} and \cref{thm:main-freach}, we can place the model-checking problem for low rank \mso in slicewise $\mathsf{NL}$, but membership in slicewise $\mathsf{L}$ is unclear. We ask the following.

 \begin{question}
  Can every property of undirected graphs definable in low rank \mso be decided in deterministic logarithmic space?
 \end{question}

\bibliographystyle{plain}

\begin{thebibliography}{10}

  \bibitem{separator-logic2021}
  Miko{\l}aj Boja\'nczyk.
  \newblock Separator logic and star-free expressions for graphs.
  \newblock {\em ArXiv preprint}, abs/2107.13953, 2021.
  
  \bibitem{incremental-lemma}
  {\'{E}}douard Bonnet, Jan Dreier, Jakub Gajarsk{\'{y}}, Stephan Kreutzer, Nikolas M{\"{a}}hlmann, Pierre Simon, and Szymon Toru\'nczyk.
  \newblock Model checking on interpretations of classes of bounded local cliquewidth.
  \newblock In {\em 37th Annual {ACM/IEEE} Symposium on Logic in Computer Science, {LICS} 2022}, pages 54:1--54:13. {ACM}, 2022.
  
  \bibitem{courcelleMonadicSecondorderLogic1990}
  Bruno Courcelle.
  \newblock The monadic second-order logic of graphs. {{I}}. {{Recognizable}} sets of finite graphs.
  \newblock {\em Information and Computation}, 85(1):12--75, March 1990.
  
  \bibitem{courcelle2000linear}
  Bruno Courcelle, Johann~A. Makowsky, and Udi Rotics.
  \newblock Linear time solvable optimization problems on graphs of bounded clique-width.
  \newblock {\em Theory of Computing Systems}, 33(2):125--150, 2000.
  
  \bibitem{DreierEMMPT24}
  Jan Dreier, Ioannis Eleftheriadis, Nikolas M{\"{a}}hlmann, Rose McCarty, {\relax Mi}cha\l{} Pilipczuk, and {\relax Sz}ymon Toru\'nczyk.
  \newblock First-{O}rder model checking on monadically stable graph classes.
  \newblock In {\em 65th {IEEE} Annual Symposium on Foundations of Computer Science, {FOCS}2024}, pages 21--30. {IEEE}, 2024.
  
  \bibitem{DreierMST23}
  Jan Dreier, Nikolas M{\"{a}}hlmann, Sebastian Siebertz, and {\relax Sz}ymon Toru\'nczyk.
  \newblock Indiscernibles and flatness in monadically stable and monadically {NIP} classes.
  \newblock In {\em 50th International Colloquium on Automata, Languages, and Programming, {ICALP} 2023}, volume 261 of {\em LIPIcs}, pages 125:1--125:18. Schloss Dagstuhl --- Leibniz-Zentrum f{\"{u}}r Informatik, 2023.
  
  \bibitem{DreierMT24}
  Jan Dreier, Nikolas M{\"{a}}hlmann, and {\relax Sz}ymon Toru\'nczyk.
  \newblock Flip-breakability: {A} combinatorial dichotomy for monadically dependent graph classes.
  \newblock In {\em 56th Annual {ACM} Symposium on Theory of Computing, {STOC} 2024}, pages 1550--1560. {ACM}, 2024.
  
  \bibitem{flippergame}
  Jakub Gajarsk{\'{y}}, Nikolas M{\"{a}}hlmann, Rose McCarty, Pierre Ohlmann, {\relax Mi}cha\l{} Pilipczuk, Wojciech Przybyszewski, Sebastian Siebertz, Marek Soko\l{}owski, and {\relax Sz}ymon Toru\'nczyk.
  \newblock Flipper games for monadically stable graph classes.
  \newblock In {\em 50th International Colloquium on Automata, Languages, and Programming, {ICALP} 2023}, volume 261 of {\em LIPIcs}, pages 128:1--128:16. Schloss Dagstuhl --- Leibniz-Zentrum f{\"{u}}r Informatik, 2023.
  
  \bibitem{HlinenyP22}
  Petr Hlin\v{e}n{\'{y}} and Filip Pokr{\'{y}}vka.
  \newblock Twin-width and limits of tractability of {FO} model checking on geometric graphs.
  \newblock {\em ArXiv preprint}, abs/2204.13742, 2022.
  
  \bibitem{Immerman88}
  Neil Immerman.
  \newblock Nondeterministic space is closed under complementation.
  \newblock {\em {SIAM} Journal on Computing}, 17(5):935--938, 1988.
  
  \bibitem{KimO24}
  Donggyu Kim and Sang{-}il Oum.
  \newblock Vertex-minors of graphs: {A} survey.
  \newblock {\em Discrete Applied Mathematics}, 351:54--73, 2024.
  
  \bibitem{RoseThesis}
  Rose McCarty.
  \newblock {\em Local structure for vertex-minors}.
  \newblock PhD thesis, University of Waterloo, 2021.
  \newblock Available at \href{http://hdl.handle.net/10012/17633}{http://hdl.handle.net/10012/17633}.
  
  \bibitem{McNaughtonPapert71}
  Robert McNaughton and Seymour Papert.
  \newblock {\em Counter-free automata}.
  \newblock The M.I.T. Press, Cambridge, Mass.-London, 1971.
  
  \bibitem{oum2005rank}
  Sang-il Oum.
  \newblock Rank-width and vertex-minors.
  \newblock {\em Journal of Combinatorial Theory, Series B}, 95(1):79--100, 2005.
  
  \bibitem{OumS06}
  Sang{-}il Oum and Paul~D. Seymour.
  \newblock Approximating clique-width and branch-width.
  \newblock {\em Journal of Combinatorial Theory, Series {B}}, 96(4):514--528, 2006.
  
  \bibitem{Pilipczuk25}
  Micha\l{} Pilipczuk.
  \newblock Graph classes through the lens of logic.
  \newblock {\em ArXiv preprint}, abs/2501.04166, 2025.
  
  \bibitem{separatorModelChecking}
  Micha{\l} Pilipczuk, Nicole Schirrmacher, Sebastian Siebertz, Szymon Toru\'{n}czyk, and Alexandre Vigny.
  \newblock Algorithms and data structures for first-order logic with connectivity under vertex failures.
  \newblock In {\em 49th International Colloquium on Automata, Languages, and Programming, ICALP 2022}, volume 229 of {\em LIPIcs}, pages 102:1--102:18, Dagstuhl, Germany, 2022. Schloss Dagstuhl --- Leibniz-Zentrum f{\"u}r Informatik.
  
  \bibitem{Reingold08}
  Omer Reingold.
  \newblock Undirected connectivity in log-space.
  \newblock {\em Journal of the {ACM}}, 55(4):17:1--17:24, 2008.
  
  \bibitem{Torunczyk23}
  \relax{Sz}ymon Toru\'nczyk.
  \newblock Flip-width: {C}ops and {R}obber on dense graphs.
  \newblock In {\em 64th {IEEE} Annual Symposium on Foundations of Computer Science, {FOCS} 2023}, pages 663--700. {IEEE}, 2023.
  
  \bibitem{schirrmacher2023first}
  Nicole Schirrmacher, Sebastian Siebertz, and Alexandre Vigny.
  \newblock First-order logic with connectivity operators.
  \newblock {\em ACM Transactions on Computational Logic}, 24(4):1--23, 2023.
  
  \bibitem{Schutzenberger65}
  Marcel-Paul Sch{\"u}tzenberger.
  \newblock On finite monoids having only trivial subgroups.
  \newblock {\em Information and Control}, 8:190--194, 1965.
  
  \bibitem{Szelepcsenyi88}
  R{\'{o}}bert Szelepcs{\'{e}}nyi.
  \newblock The method of forced enumeration for nondeterministic automata.
  \newblock {\em Acta Informatica}, 26(3):279--284, 1988.
  
  \bibitem{Wanke94}
  Egon Wanke.
  \newblock $k$-{NLC} graphs and polynomial algorithms.
  \newblock {\em Discrete Applied Mathematics}, 54(2-3):251--266, 1994.
  
  \end{thebibliography}

\end{document}